\documentclass[acmsmall,screen]{acmart}

\usepackage{wrapfig}
\usepackage{tikz}

\newcommand{\figlabel}[1]{\label{fig:#1}}
\newcommand{\figref}[1]{Fig.~\ref{fig:#1}}

\newcommand{\seclabel}[1]{\label{sec:#1}}
\newcommand{\secref}[1]{Section~\ref{sec:#1}}
\newcommand{\exlabel}[1]{\label{ex:#1}}
\newcommand{\exref}[1]{Example~\ref{ex:#1}}
\newcommand{\deflabel}[1]{\label{def:#1}}
\newcommand{\defref}[1]{Definition~\ref{def:#1}}

\newcommand{\thmlabel}[1]{\label{thm:#1}}

\newcommand{\lemlabel}[1]{\label{lem:#1}}
\newcommand{\lemref}[1]{Lemma~\ref{lem:#1}}

\newcommand{\remlabel}[1]{\label{rem:#1}}
\newcommand{\remref}[1]{Remark~\ref{rem:#1}}

\newcommand{\applabel}[1]{\label{app:#1}}

\theoremstyle{definition}
\newtheorem{example}{Example}
\numberwithin{example}{section}

\newtheorem{theorem}{Theorem}
\numberwithin{theorem}{section}

\newtheorem{definition}{Definition}
\numberwithin{definition}{section}

\numberwithin{problem}{section}

\newtheorem{lemma}{Lemma}
\numberwithin{lemma}{section}

\numberwithin{proposition}{section}

\numberwithin{corollary}{section}

\numberwithin{claim}{section}

\newtheorem{remark}{Remark}

\newenvironment{myparagraph}[1]{\vspace{0.1in}\noindent{\bf #1.}}{}

\DeclareMathAlphabet{\mathpzc}{OT1}{pzc}{m}{it}

\newcommand{\nats}{\mathbb{N}}

\renewcommand{\emptyset}{\varnothing}

\newcommand{\protocol}{\mathcal{P}}

\newcommand{\ntypes}{\mathcal{N}}          
\newcommand{\mtypes}{\mathcal{M}}          
\newcommand{\lstate}{L}                     
\newcommand{\gstate}{G}                     
\newcommand{\msgs}{\mathtt{Msg}}
\newcommand{\nodes}{\mathtt{N}}               
\newcommand{\acts}{\mathcal{A}}
\newcommand{\exec}{\mathit{\Pi}}
\newcommand{\fexec}{\exec^{\text{feas}}}
\newcommand{\id}{\mathtt{id}}

\newcommand{\instances}{\mathcal{I}}
\newcommand{\thy}{\mathbb{T}}                
\newcommand{\interp}{\mathcal{J}}            

\newcommand{\domain}{\mathcal{D}}
\newcommand{\sender}{\mathtt{sender}}
\newcommand{\receiver}{\mathtt{receiver}}
\newcommand{\payload}{\mathtt{payload}}
\newcommand{\ntype}{\mathtt{ntype}}        
\newcommand{\mtype}{\mathtt{mtype}}       
\newcommand{\gd}{\mathtt{gd}}             
\newcommand{\eff}{\mathtt{eff}}

\newcommand{\reach}{\mathtt{reach}}
\newcommand{\sem}[1]{\llbracket #1 \rrbracket}

\newcommand{\actsI}{\acts_I}                 
\newcommand{\runl}{\mathtt{run}}             
\newcommand{\rew}{\rightsquigarrow}          

\newcommand{\scvars}{V}                       
\newcommand{\dom}{\mathcal{D}}                
\newcommand{\assm}{\mathtt{assume}}           
\newcommand{\evalx}[2]{\sem{#1}_{#2}}   

\newcommand{\semSig}[1]{\sem{#1}_{\mathsf{struct}}}    
\newcommand{\semvar}[1]{\sem{#1}_{\mathsf{var}}}         
\newcommand{\semfeas}[1]{\sem{#1}_{\mathsf{feas}}} 
\newcommand{\semcom}[1]{\sem{#1}_{\mathsf{com}}}    

\newcommand{\incSig}{\sqsubseteq_{\mathsf{struct}}}
\newcommand{\incvar}{\sqsubseteq_{\mathsf{var}}}
\newcommand{\incfeas}{\sqsubseteq_{\mathsf{feas}}}
\newcommand{\inccom}{\sqsubseteq_{\mathsf{com}}}
\newcommand{\inctop}{\sqsubseteq}
\newcommand{\inclayer}{\sqsubseteq_{\mathsf{layer}}}

\newcommand{\rulename}[1]{\textsc{#1}}

\RequirePackage{alltt}
  {\par\noindent\begin{minipage}{\linewidth}\small\begin{alltt}}%
  {\end{alltt}\end{minipage}\par}

\newcommand{\tT}{\ensuremath{\mathsf{true}}}
\newcommand{\tF}{\ensuremath{\mathsf{false}}}

\usepackage{amsmath,amsthm,amsbsy}
\usepackage{graphicx}
\usepackage{caption,wrapfig}
\usepackage{subcaption}
\usepackage{longtable}
\usepackage{booktabs}
\usepackage{tabu}
\usepackage{tabularx}
\usepackage[referable]{threeparttablex}
\usepackage{multirow}
\usepackage{subcaption}
\usepackage{hyperref}
\usepackage[capitalise]{cleveref}
\usepackage{stmaryrd}
\usepackage{mathtools} 
\usepackage{makecell}
\crefname{figure}{Figure}{Figure}
\crefformat{footnote}{#2\footnotemark[#1]#3}
\usepackage{tikz}
\usepackage{comment}
\usepackage{parskip}
\usepackage{paralist}
\usepackage{pgfplots}

\usepackage[ruled, linesnumbered, noend]{algorithm2e}
\usepackage{tikz}
\usetikzlibrary{shapes.misc,arrows.meta}
\usetikzlibrary{calc,decorations.pathmorphing,shapes}
\usetikzlibrary{decorations.markings}
\usetikzlibrary{decorations.pathreplacing}
\tikzset{snake it/.style={decorate, decoration=snake}}

\usepackage{xspace}
\usepackage{xcolor}
\usepackage[inline, shortlabels]{enumitem}

\usepackage{multicol}
\usepackage{bbm}

\usepackage{thm-restate}

\begin{document}

\title{Scenario Proofs for Verification of Distributed Protocols}


\begin{abstract}
Formal verification of distributed protocols is still dominated by inductive-invariant proofs, whose central challenge is discovering strengthening assertions that rule out unreachable states. This burden is often misaligned with how protocol designers reason: they explain correctness by following representative executions, causal histories, and phase-structured interactions among nodes. This paper introduces \emph{scenario proofs}, a verification methodology that makes such execution-oriented reasoning formal.

A scenario is a structured expression describing representative protocol executions, enriched with ghost variables, guards, sequencing, choice, and iteration. Rather than enumerate all possible interleavings, a scenario describes canonical executions and uses a sound semi-commutation relation to account for reorderings that preserve feasibility and final states. Verification then factors into two main obligations. \emph{Coverage} shows that every feasible protocol execution is equivalent, up to semi-commutation, to one described by the scenario; this obligation is independent of the safety property and serves as a reusable behavioral characterization of the protocol. \emph{Correctness} proves the desired safety property only over the representative executions, using Hoare-style weakest-precondition reasoning along the scenario structure.

We formalize the scenario language, its layered semantics, and proof rules for coverage and correctness. Through case studies of ring leader election, two-phase commit, a decentralized lock, two variants of multi-signature payment, and single-decree Paxos, we show how scenario proofs capture the causal and phase-based arguments used informally by protocol designers. These examples illustrate that scenario proofs can replace monolithic invariant discovery with structured reasoning over representative executions, while retaining a rigorous connection to all feasible protocol behaviors.
\end{abstract}


\author{Zhendong Ang}
\affiliation{%
  \institution{National University of Singapore}
  \country{Singapore}
}
\email{zhendong.ang@u.nus.edu.sg}

\author{George P\^{i}rlea}
\affiliation{%
  \institution{National University of Singapore}
  \country{Singapore}
}
\email{gpirlea@u.nus.edu}

\author{Umang Mathur}
\affiliation{%
  \institution{National University of Singapore}
  \country{Singapore}
}
\email{umathur@nus.edu.sg}

\maketitle

\section{Introduction}
\seclabel{intro}
Distributed protocols form the backbone of modern computing systems, from cloud infrastructure and
database replication to blockchain networks and distributed file systems.
Ensuring their correctness is crucial, and formal verification of distributed protocols
has gained significant attention in both research and industry.
Yet formally proving the correctness of distributed protocols remains remarkably difficult.
The source of this difficulty is fundamental:
correctness of distributed protocols is typically expressed as \emph{safety properties}
--- guarantees that certain bad states
(e.g., two nodes committing different values, two nodes believing they are the leader simultaneously)
are never reached by \emph{all} possible executions of the protocol,
i.e., all possible interleavings of concurrent actions by multiple nodes.

The predominant approach to this challenge involves strengthening the safety property into an
\emph{inductive invariant}, a property that
\begin{enumerate*}
    \item holds in the initial state of the protocol,
    \item is preserved by every action of the protocol, and
    \item implies the desired safety property.
\end{enumerate*}
If such an inductive invariant is identified and proven,
then one can inductively conclude that the invariant holds in all reachable states,
and thus the safety property holds as well.
This approach has been successfully employed in numerous verification frameworks and tools~\cite{ivy, ironfleet, tlaplus, verdi, veil}.

However, devising the right inductive invariant is the central difficulty in this approach,
and we argue that it is \emph{fundamentally misaligned} with how protocol designers actually think.
The core issue is that inductive invariant reasoning isolates actions and states from their execution context that reaches them.
Recall that an inductive invariant $\mathtt{inv}$ is correct if,
for any action $a$, starting from \emph{any} state $s$ that satisfies $\mathtt{inv}$,
the successor state $s'$ reached after executing $a$ also satisfies $\mathtt{inv}$.
The quantification ranges over \emph{all} states satisfying $\mathtt{inv}$,
not just those that are actually reachable through valid protocol executions.
To exclude all such unreachable states $s$ that prevent the candidate invariant from being inductive,
an iterative strengthening loop is performed by carefully identifying invariant conjuncts and adding them to the candidate invariant.
This process is the primary bottleneck in protocol verification efforts and 
reasoning about the unreachable states is absent from the \emph{natural reasoning} by the protocol designers.
Protocol designers naturally reason about correctness by examining the executions
that \emph{could} happen: they consider sequences of actions
and the reachable states these sequences lead to.
To prove the safety of the reachable states,
the sequences of actions leading to them usually carry the needed contextual information:
what actions have occurred, in what order, and thus, why certain dangerous situations cannot arise.

Indeed, when introducing a distributed protocol in textbooks, lectures, or technical documentation, 
practitioners argue about correctness by presenting representative executions.
They walk through example cases showing how nodes interact
and explain why these interactions maintain safety.
Even when devising inductive invariants manually,
verification engineers often imagine specific executions first to guide their thinking,
and then extract invariants from these execution patterns.
Our experience learning and verifying distributed protocols confirms
that execution-oriented reasoning is the natural way humans understand protocol correctness.
We call these structured descriptions of representative executions \emph{scenarios}:
they capture the key logical patterns underlying protocol behavior and form
the conceptual building blocks through which protocols are designed and understood.
However, such scenario-based reasoning remains informal.

In this paper, inspired by our own experience and a recent work on scenario-based proofs
for concurrent objects~\cite{enea-koskinen},
we aim to bridge the gap between this natural reasoning and formal verification.
We propose \emph{scenario-based proofs} for distributed protocols, 
a verification methodology that allows protocol designers
to formally describe their execution-oriented reasoning as \emph{scenarios}
and prove protocol correctness directly from these scenarios.
Our framework captures the natural way developers reason about protocols
and gives this reasoning the formal rigor of a proof methodology.

\myparagraph{Motivating Example}
To make the above discussion concrete, consider the ring leader election protocol
as a running example (\figref{rle-protocol}).
In this protocol, nodes are arranged in a ring topology,
and each node initiates a vote for itself by sending its unique ID to the next node in the ring.
When a node receives a vote with ID $v$, it compares $v$ with its own ID:
if $v$ is larger, the node forwards $v$ to the next node; otherwise, it discards the vote.
A node is elected as leader when it receives its own ID back,
indicating that its ID has traversed the entire ring without being discarded.
The safety property states that at most one node can be elected as leader.

\textit{The inductive invariant approach and its difficulties.}\;
A typical inductive invariant for this protocol is a conjunction of the following:
\begin{enumerate}[nosep]
    \item[(I1)] If a node is elected as leader, it has the maximum ID among all nodes.
    \item[(I2)] If a node receives a vote with its own ID, it has the maximum ID.
    \item[(I3)] If a vote with ID $v$ is in transit to node $n$, then all nodes between $v$'s
    originator and $n$ (in the ring order) have IDs less than $v$.
\end{enumerate}
Let us consider the process of \emph{arriving} at this invariant.
Suppose we already have (I1) and (I2) and attempt to show their conjunction is inductive.
A counterexample arises: a state where a vote with $\mathtt{id}(n)$ is in transit towards node $n$,
whose ID is not the maximum, satisfies (I1) $\land$ (I2).
Then the successor state, where $n$ receives its own ID, violates I2.
To exclude this unreachable state, the engineer must add a new conjunct,
but there is no intuitive guide for \emph{what} to add.
One might try: ``if a vote with $\mathtt{id}(n)$ is in transit to $n$, then $n$ has the maximum ID.''
This is true but still not inductive.
The counterexample is one step backward:
a vote $\mathtt{id}(n)$ that is in transit to $m$, the previous node of $n$ and $\mathtt{id}(m) < \mathtt{id}(n)$, will trigger the same vote sent to $n$ in the successor state,
which violates the newly added conjunct.
A stronger attempt might assert that such in-transit votes never exist at all,
but this is simply false.
The correct strengthening is (I3), which generalizes the first attempt and, tracks the relationship between an in-transit vote and every node it has passed.
Arriving at (I3) requires intellectual effort with no clear guidance from the counterexample alone.

\begin{figure}[t]
\small
\hrule\smallskip
\begin{minipage}[t]{0.48\linewidth}
\textbf{Protocol: Ring Leader Election}

\bigskip
\textit{Topology:} nodes ordered in a ring
\begin{itemize}[leftmargin=1.2em]
    \item $\mathtt{next}(i)$ = next node in the ring
    \item $\mathtt{id}(i)$ = unique identifier of node $i$
\end{itemize}

\bigskip
\textit{Local State} (of Node $i$)
\begin{itemize}[leftmargin=1.2em]
    \item $\mathtt{isInited}, \mathtt{isElected}: \mathsf{Bool}$, both initially $\mathsf{False}$.
\end{itemize}

\bigskip
\textit{Global State}
\begin{itemize}[leftmargin=1.2em]
    \item $\mathtt{sent}, \mathtt{recv}$: message buffers, initially empty.
\end{itemize}

\bigskip
\textit{Action}
\begin{itemize}[leftmargin=1.2em]
    \item $\mathtt{Init}$ by Node $i$:\\
    \hspace*{1em}\textit{require:}\ $\neg\, i.\mathtt{isInited}$\\
    \hspace*{1em}send $\mathtt{id}(i)$ to $\mathtt{next}(i)$;\quad $i.\mathtt{isInited} \leftarrow \mathsf{True}$
    \item $\mathtt{Recv}$ by Node $i$ on a vote $v$:\\
    \hspace*{1em}\textit{require:}\ $v \in \mathtt{sent} \,\wedge\, v \notin \mathtt{recv} \,\wedge\, v.\mathtt{receiver} = i$\\
    \hspace*{1em}add $v$ to $\mathtt{recv}$;\\
    \hspace*{1em}$\mathbf{if}\; v > \mathtt{id}(i){:}\ \text{send }v \text{ to }\mathtt{next}(i)$\\
    \hspace*{1em}$\mathbf{if}\; v = \mathtt{id}(i){:}\ i.\mathtt{isElected} \leftarrow \mathsf{True}$
\end{itemize}

\end{minipage}
\hfill\vrule\hfill
\begin{minipage}[t]{0.48\linewidth}
\centering

\textbf{Example topology}
\smallskip

\begin{tikzpicture}[
    every node/.style={circle, draw, minimum size=0.9cm, font=\normalsize},
    >=Stealth,
]
    \node (n1) at (90:1.3cm)  {$1$};
    \node (n3) at (210:1.3cm) {$3$};
    \node (n2) at (330:1.3cm) {$2$};

    \draw[->, thick] (n1) to[bend right=20] (n3);
    \draw[->, thick] (n3) to[bend right=20] (n2);
    \draw[->, thick] (n2) to[bend right=20] (n1);
\end{tikzpicture}

\medskip
\textbf{Causal histories}
\smallskip

\begin{tikzpicture}[
    cnode/.style={circle, draw, minimum size=0.6cm, font=\small, inner sep=0pt},
    msg/.style={->, thick, >=Stealth},
    elect/.style={->, thick, >=Stealth, green!50!black},
    chainlabel/.style={font=\small\itshape, anchor=east},
]
    \node[chainlabel] at (-0.2, 0) {$1$:};
    \node[cnode] (a1) at (0.4, 0) {$1$};
    \node[cnode] (a2) at (1.6, 0) {$3$};
    \draw[msg] (a1) -- node[above, font=\scriptsize] {$1$} (a2);
    \node[font=\scriptsize, red!70!black, anchor=west] at (2.0, 0) {$\times$};

    \node[chainlabel] at (-0.2, -0.9) {$2$:};
    \node[cnode] (b1) at (0.4, -0.9) {$2$};
    \node[cnode] (b2) at (1.6, -0.9) {$1$};
    \node[cnode] (b3) at (2.8, -0.9) {$3$};
    \draw[msg] (b1) -- node[above, font=\scriptsize] {$2$} (b2);
    \draw[msg] (b2) -- node[above, font=\scriptsize] {$2$} (b3);
    \node[font=\scriptsize, red!70!black, anchor=west] at (3.2, -0.9) {$\times$};

    \node[chainlabel] at (-0.2, -1.8) {$3$:};
    \node[cnode] (c1) at (0.4, -1.8) {$3$};
    \node[cnode] (c2) at (1.6, -1.8) {$2$};
    \node[cnode] (c3) at (2.8, -1.8) {$1$};
    \node[cnode, double, green!50!black] (c4) at (4.0, -1.8) {$3$};
    \draw[msg] (c1) -- node[above, font=\scriptsize] {$3$} (c2);
    \draw[msg] (c2) -- node[above, font=\scriptsize] {$3$} (c3);
    \draw[elect] (c3) -- node[above, font=\scriptsize] {$3$} (c4);
    \node[font=\scriptsize, green!50!black, anchor=west] at (4.4, -1.8) {\checkmark};
\end{tikzpicture}

\end{minipage}
\smallskip\hrule
\caption{%
    Ring Leader Election protocol (left) and example instance (right).
    The ring has topology $1 \to 3 \to 2 \to 1$.
    Each causal history traces one vote's propagation:
    votes $1$ and $2$ are discarded ($\times$) upon reaching node $3$;
    vote $3$ completes the full ring and node $3$ is elected (\checkmark).%
}
\figlabel{rle-protocol}
\end{figure}

Interestingly, the final invariant (I1) $\land$ (I2) $\land$ (I3), once found,
does mirror a sequence of causally enabled actions:
(I1) captures the condition of being elected;
(I2) captures the final vote forwarding action that enables the election;
and (I3) captures, inductively, the precondition of each vote forwarding action,
which is enabled by a sequence of forwards from nodes with smaller IDs.
The invariant, in effect, encodes the same causal chain that a developer would naturally describe.
But the process of \emph{arriving} at it through counterexample-driven strengthening
obscures this connection.

\textit{How developers naturally reason.}\;
Now consider how we naturally argue about this protocol's correctness.
We reason about all possible executions and the states reachable from them.
States no longer range in the whole universe, 
but only those fitting in a context will be considered:
either after an initial vote from some node, or after a sequence of forwarded votes that are causally enabled by prior ones.
Thus, we naturally, and systematically, organize the protocol's behavior into \emph{causal histories},
sequences of causally related actions, each tracing how a single vote propagates through the ring.

Consider a concrete ring of three nodes with IDs $1$, $3$, $2$ in counter-clockwise order
($1 \to 3 \to 2 \to 1$).
There are three causal histories, one originating from each node:
\begin{enumerate*}
    \item Node $1$ sends vote $1$ to node $3$.
    Since $3 > 1$, node $3$ discards it. The sequence terminates.
    \item Node $2$ sends vote $2$ to node $1$.
    Since $1 < 2$, node $1$ forwards it to node $3$.
    Since $3 > 2$, node $3$ discards it, terminating the causal history.
    \item Node $3$ sends vote $3$ to node $2$, then node $1$, finally back to node $3$.
    Node $3$ receives its own ID and is elected, marking the end of the history.
\end{enumerate*}
The correctness argument is immediate:
a causal history terminates in one of exactly two ways:
its vote is discarded when it reaches a node with a larger ID,
or it returns to its originator because no node in the ring has a larger ID.
Only the causal history from the node with the maximum ID can complete a full traversal,
and reaches a state where the initial node is elected,
ensuring that exactly one node is ever elected.

\begin{figure}[t]
\small
\begin{align*}
    \mathtt{scenario}_{\mathtt{RLE}} &\triangleq \mathtt{For}(i \in [\mathtt{min}, \mathtt{max}])\;\, \mathtt{voteChain}(i),
    \qquad
    \mathtt{voteChain}(i) \triangleq \mathtt{PhaseChain}[P_1, P_2, P_3], \\[2pt]
    P_1 &\triangleq \mathtt{Init}(i) \cdot (j \leftarrow i),\\[2pt]
    P_2 &\triangleq \bigl(\, \assm(\neg\mathtt{isMax}(\mathtt{right}(j))) \cdot \mathtt{Recv}(\mathtt{right}(j), \mathtt{id}(i)) \cdot (j \leftarrow \mathtt{right}(j)) \,\bigr)^{*},\\[2pt]
    P_3 &\triangleq \assm(\mathtt{isMax}(\mathtt{right}(j))) \cdot \mathtt{Recv}(\mathtt{right}(j), \mathtt{id}(i))
\end{align*}
\caption{The scenario $\mathtt{scenario}_{\mathtt{RLE}}$ for ring leader election
    (\cref{fig:rle-protocol}). $j$ tracks the vote-chain front;
    $\mathtt{isMax}(n)$ tests whether $n$ holds the maximum ID.}
\figlabel{rle-scenario}
\end{figure}

This reasoning is clean, intuitive, and convincing.
It proceeds by directly addressing \emph{what the protocol does},
which matches the intuition held by the protocol designers when they design the protocol.
Yet this reasoning is entirely informal.
It lacks the precision needed for formal verification:
the reasoning only considers one or a finite number of examples,
the claim that these histories cover all executions is unstated,
and the correctness argument, while intuitive, is not rigorous.
The central question that motivates our work is:
\emph{can we formalize this natural, execution-oriented reasoning
into a rigorous proof methodology?}

\textit{Formal scenario-based reasoning.}\;
We now sketch how our framework formalizes the above intuitive argument.
In our framework, we use \emph{scenario} to formally describe the executions that serve the natural reasoning.
A scenario is a structured expression resembling a regular expression over protocol actions,
extended with auxiliary variables that track execution metadata and guard control flow.
For ring leader election, the scenario is a sequence of causal histories (see \figref{rle-scenario} for the full formal scenario; the language will be defined in~\secref{scenario}).
A causal history originating from node $i$ is expressed as $\mathtt{voteChain}(i)$ in 3 phases, where:
$P_1$ executes $\mathtt{Init}(i)$ and sets the tracking variable $j := i$;
$P_2$ is a loop that forwards the vote hop by hop through nodes that are not the maximum-ID node, advancing $j$ at each step;
and $P_3$ delivers the vote to the maximum-ID node, completing the chain.
The complete scenario sequences these chains over all nodes in ID order.
This sequential $\mathtt{For}$ loop reveals another crucial feature of our approach:
the scenario does not have to express all possible interleavings, but only a set of representatives.
Here, since vote chains for distinct initial nodes act on disjoint parts of the protocol state (each only receives votes with the initial node's ID),
their actions \emph{commute}: they can be reordered without affecting feasibility or the ending state.
This means every interleaving of the chains is equivalent, up to commutativity,
so $\mathtt{scenario}_\mathtt{RLE}$ can easily select the sequential one as the representative.
We design our proof system to formally establish that this scenario is a valid proof of the safety property,
via two obligations: \emph{coverage}, every protocol execution is equivalent, up to commutativity, to one described by the scenario and, \emph{correctness}, every execution described by the scenario satisfies the safety property.
Note that coverage never mentions the safety property:
a scenario whose coverage obligation is discharged is a \emph{behavioral characterization} of the protocol itself,
established once and then reusable, unchanged, for every safety property one wishes to verify.

\myparagraph{Contributions}
In this work, we propose \emph{scenario-based proofs} for distributed protocols, a verification methodology
that formalizes exactly this execution-oriented reasoning.
The developer writes a \emph{scenario}, a structured expression describing representative
executions of the protocol, 
and uses a proof system to formally verify
a given scenario as a proof against a safety property.

\textit{Scenario Language.}\;
We first observe three key components of describing distributed protocols: the \emph{causal histories} that capture the backbones of the executions formed by causality between actions, the \emph{phases} that organize each history into sequential segments, and the \emph{sequentialization} that picks a representative ordering for causal history interleavings.
We then introduce the scenario language in \secref{scenario} to formally describe these representative executions, a regular-expression-like notation over
protocol actions and auxiliary variables whose layered denotational semantics closes the denoted
runs under a \emph{semi-commutation} relation, so a single representative ordering stands for all
equivalent interleavings.

\textit{Proof System.}\;
For a given scenario, the proof reduces to three tasks (\secref{proof-system}).
\emph{Coverage} requires that every feasible protocol execution lies in the commutativity closure of the scenario;
we discharge it by a syntax-directed calculus that rewrites on the structure of the scenario.
\emph{Correctness} requires that every execution in the commutativity closure is safe;
since commutativity-equivalent executions reach the same state,
it suffices to verify directly over the representative executions described by the scenario,
using weakest-precondition reasoning with user-supplied loop invariants.
\emph{Soundness} of the semi-commutation, which both reductions assume, is checked locally on pairs of actions.
The three proof obligations altogether ensure that the scenario is a valid proof for the protocol against the specified safety property.
Moreover, decoupling coverage and soundness from correctness brings the following benefit:
coverage and soundness are proven once per protocol and may be reused across different safety properties,
while each safety property requires only its own, typically small, correctness argument.

\textit{Case Studies.}\;
We develop scenario proofs for six distributed protocols:
\begin{enumerate*}
    \item ring leader election, the running example of \secref{scenario} and \secref{proof-system},
    \item single-decree Paxos, our main case study (\secref{casestudy}), and 
    \item two-phase commit and a decentralized lock, together with two variants of a multi-signature payment protocol from the Ivy benchmark~\cite{ivy} in the Appendix.
\end{enumerate*}

\section{Preliminaries}
\seclabel{prelim}

In this section, we first formalize distributed protocols and their semantics,
most of which is standard and familiar to the reader. 
We present them to fix notation and terminology for later precision.
Then we introduce (semi-) commutativity in trace theory~\cite{mazurkiewicz,semi-commutations}, which makes a compact representation of the protocol behaviors possible.

\subsection{Distributed Protocols}

\begin{definition}[Protocol]
\deflabel{protocol}
A \emph{protocol} $\protocol$ consists of the following components.
\begin{itemize}[nosep]
    \item A finite set $\ntypes$ of \emph{node types}.
    For each node type $t \in \ntypes$, there is a set $\lstate_t$ of \emph{local states}.
    \item A finite set $\mtypes$ of \emph{message types}, associated with three functions
    \begin{itemize}
        \item $\sender \colon \mtypes \to \ntypes$,
        \item $\receiver \colon \mtypes \to \ntypes$, and
        \item $\payload \colon \mtypes \to 2^{\domain}$,
    giving the node types of a message's sender and receiver and its payload domain.
    \end{itemize}
    \item A set $\gstate$ of \emph{shared states}.
    \item A finite set $\acts$ of \emph{actions} (\defref{action}).
    \item A set $\instances$ of valid \emph{instances} (\defref{instance}).
\end{itemize}
\end{definition}

A node type represents a class of nodes that share the same role in the protocol: the same actions and the same type of local state.
For example, ring leader election has a single node type and a single message type,
while two-phase commit has two node types (coordinator and participant) and several message types (prepare, commit, abort, etc.).
The shared state captures the shared components of the system:
in practice, $\gstate$ typically encodes the network model and
other common uses include a shared clock or shared memory.
To prove safety properties, we focus on asynchronous message passing, so $\gstate$
consists of two sets: a \emph{sent buffer} and a \emph{received buffer},
which we assume across all protocols later in the paper.
We abstract away the values that may be carried by messages, stored in states, and manipulated by actions, and use $\domain$ to denote the value domain of all such values.
Over value domain $\domain$ and node types $N\in \ntypes$, we fix a signature of function symbols $F$ (arithmetic, $\id$ and
$\id^{-1}$, set insertion, map update, field selectors $.f$ on messages, local states and the shared state, etc.) and predicate symbols $R$ ($=$, $\le$, $\in$, node topology, etc.).
These symbols are uninterpreted at the level of the protocol;
each instance supplies an interpretation for them (\defref{instance}),
constrained by the instance theory below.

\myparagraph{Actions}
Each action specifies the type of node that executes it and, optionally, the type of message that triggers it;
its semantics is a guard and an effect, both over the acting node's local state and the shared state.
A guard typically checks that the triggering message is in the sent buffer but not in the received buffer,
and that the local state satisfies certain conditions;
an effect typically records the triggering message as received, adds new messages to the sent buffer, and updates the local state.
We deliberately abstract the concrete data structure of the buffers and the membership checks,
so that the definition covers different communication models (e.g., FIFO vs.\ non-FIFO, with or without message loss).

\begin{definition}[Action]
\deflabel{action}
Each action $a \in \acts$ is associated with
a node type $\ntype(a) \in \ntypes$ and, optionally, a message type $\mtype(a) \in \mtypes \cup \{\bot\}$.
It is then defined by two functions, parameterized with a node $n$ of type $\ntype(a)$
and a message $m$ of type $\mtype(a)$ (or $m = \bot$ when $\mtype(a) = \bot$):
a \emph{guard} $\gd_a(n,\,m) \subseteq \gstate \times \lstate_{\ntype(a)}$,
determining whether the action is \emph{enabled} on a given shared and local state,
and an \emph{effect} $\eff_a(n,\,m) \colon \gstate \times \lstate_{\ntype(a)} \to
\gstate \times \lstate_{\ntype(a)}$,
returning the updated shared and local states.
\end{definition}

\myparagraph{Instances and Configurations}
A protocol is parameterized by its set of valid instances $\instances$, which constrains the allowed deployments:
ring leader election allows any finite number of nodes,
while two-phase commit requires exactly one coordinator and one or more participants.
We specify $\instances$ axiomatically: the protocol fixes an \emph{instance theory} $\thy$,
a set of first-order axioms over the signature $F \cup R$ together with the vocabulary of nodes,
messages, and the initial configuration
(e.g., the nodes form a ring; node identifiers are unique; there is exactly one coordinator),
and $\instances$ is the set of instances whose interpretation satisfies $\thy$.

\begin{definition}[Instance and Configuration]
\deflabel{instance}
An \emph{instance} of a protocol $\protocol$ is a tuple $I = (\nodes^I, \id, \tau, \interp, c_0)$,
where $\nodes^I$ is a finite set of \emph{nodes},
$\id: \nodes^I \to \nats$ maps each node to a unique identifier,
$\tau \colon \nodes^I \to \ntypes$ is a \emph{type assignment},
$\interp$ is an \emph{interpretation} that assigns to each function symbol $f \in F$
a function and to each predicate symbol $r \in R$ a relation, of the matching arity,
over $\domain \cup \nodes^I$ (and over local states, shared states and messages, for field selectors),
and $c_0$ is the initial \emph{configuration}.
A configuration is a pair $c = (g, \ell)$,
where $g \in \gstate$ is the shared state
and $\ell$ maps each node $n \in \nodes^I$ to a local state $\ell(n) \in \lstate_{\tau(n)}$.
\end{definition}

\myparagraph{Protocol Instantiation}
The semantics of a protocol is defined across all valid instances.
Fix an instance $I = (\nodes^I, \id, \tau, \interp, c_0)$.
For each node type $t\in \ntypes$, we write $\nodes_t$ for the nodes of type $t$.
The set of \emph{instantiated messages} is
\[
    \msgs_I = \bigl\{\, (m, n_1, n_2, d) \;\big|\; m \in \mtypes,\
    n_1 \in \nodes_{\sender(m)},\ n_2 \in \nodes_{\receiver(m)},\ d \in \payload(m) \,\bigr\},
\]
written $\msgs$ when $I$ is clear, with $\msgs_m$ the messages of type $m \in \mtypes$.
The set of \emph{instantiated actions} is
\[
    \actsI \;=\; \bigl\{\, (a, n, m) \;\big|\; a \in \acts,\ n \in \nodes_{\ntype(a)},\
    m \in \msgs_{\mtype(a)} \cup \{\bot\} \,\bigr\},
\]
where $(a, n, m)$ is the action $a$ executed by node $n$, optionally triggered by message $m$.

\myparagraph{Runs and Executions}
We define the semantics of a protocol in terms of its \emph{runs}.
Given an instance $I$, a \emph{run} is a sequence of instantiated actions $w \in \actsI^{*}$.
We extend runs to \emph{executions}, which are sequences of 
configuration transitions starting from the initial configuration $c_0$, and take the effect of each action in the run;
it is \emph{feasible} when every step's action is enabled.

\begin{definition}[Transition and Execution]
\deflabel{execution}
Given a configuration $c = (g, \ell)$
and an instantiated action $(a,\, n,\, m) \in \actsI$,
$c$ transitions to $c' = \bigl(g',\; \ell[n \mapsto s']\bigr)$, where
$(g', s') = \eff_a(n, m)(g, \ell(n))$,
written $c \xrightarrow{(a,\, n,\, m)} c'$.
An \emph{execution}
is a finite sequence of steps
$c_0 \xrightarrow{\alpha_1} c_1 \xrightarrow{\alpha_2} \cdots \xrightarrow{\alpha_k} c_k$
with each $\alpha_i = (a_i, n_i, m_i) \in \actsI$;
it is \emph{feasible} if $(g_{i-1}, \ell_{i-1}(n_i)) \in \gd_{a_i}(n_i, m_i)$ for all $1 \leq i \leq k$.
We write $\reach(c_0 \xrightarrow{\alpha_1} c_1 \xrightarrow{\alpha_2} \cdots \xrightarrow{\alpha_k} c_k) = c_k$ for the final configuration of an execution.
\end{definition}

Since effects are deterministic, a run induces a unique execution, 
thus we abuse notation and say a run $\pi$ is feasible if the induced execution is feasible.
We write $\reach(\pi)$ for the final configuration of the execution induced by the run $\pi$.
We use $\exec_I$ (resp.\ $\fexec_I$) to denote the set of all (resp.\ feasible) runs of $I$,
with $\reach(\epsilon) = c_0$.
The \emph{semantics} of $\protocol$ collects the feasible runs of all valid instances:
\[
    \sem{\protocol} \;=\; \bigcup_{I \in \instances} \fexec_I .
\]
Notice that we define the semantics of a protocol as a set of runs, rather than executions, because we will later introduce a (semi-)commutativity relation on runs, which is not well-defined on executions.

\myparagraph{Safety Properties}
A configuration is \emph{reachable} if it appears, as an intermediate or final configuration,
along some feasible execution of some valid instance.

\begin{definition}[Safety Property]
\deflabel{safety}
A \emph{safety property} (or \emph{invariant}) $\varphi$ is a predicate on configurations.
A protocol $\protocol$ \emph{satisfies} $\varphi$, written $\protocol \models \varphi$, if
every reachable configuration of every valid instance satisfies $\varphi$.
\end{definition}

Since every prefix of a feasible execution is itself feasible, $\sem{\protocol}$ is prefix-closed,
and every reachable configuration is the final configuration of a shorter feasible execution.
Safety therefore reduces to the final configurations.

\begin{lemma}
\lemlabel{safety-reach}
$\protocol \models \varphi$ if and only if
$\reach(\pi) \models \varphi$ for every $\pi \in \sem{\protocol}$.
\end{lemma}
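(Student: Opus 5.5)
We prove the two directions separately, unfolding \defref{safety} and the definition $\sem{\protocol} = \bigcup_{I \in \instances} \fexec_I$. The only substantive ingredient is that feasibility is prefix-closed. This follows directly from \defref{execution}: feasibility of a run $\alpha_1\cdots\alpha_k$ is a conjunction over $1 \le i \le k$ of guard conditions on $(g_{i-1}, \ell_{i-1}(n_i))$. Since effects are deterministic, the configurations $c_0,\dots,c_j$ induced by a prefix $\alpha_1\cdots\alpha_j$ coincide with those induced by the full run. Hence every prefix of a feasible run is feasible, and its final configuration is $c_j$.

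\emph{($\Rightarrow$)} Assume $\protocol \models \varphi$ and let $\pi \in \sem{\protocol}$. Then $\pi \in \fexec_I$ for some $I \in \instances$. The configuration $\reach(\pi)$ is the final configuration of the feasible execution induced by $\pi$, so it is reachable in $I$. By assumption it therefore satisfies $\varphi$.

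\emph{($\Leftarrow$)} Assume $\reach(\pi) \models \varphi$ for all $\pi \in \sem{\protocol}$, and let $c$ be a reachable configuration. By definition, $c = c_j$ for some $0 \le j \le k$ along a feasible execution $c_0 \xrightarrow{\alpha_1} \cdots \xrightarrow{\alpha_k} c_k$ of some valid instance $I$. Take the prefix run $\pi' = \alpha_1\cdots\alpha_j$; when $j = 0$ this is $\epsilon$, which is trivially feasible with $\reach(\epsilon) = c_0$. By prefix-closure, $\pi' \in \fexec_I \subseteq \sem{\protocol}$, and $\reach(\pi') = c_j = c$. The hypothesis then gives $c \models \varphi$.

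There is no real obstacle here. The only points requiring care are the determinism argument, which shows that the prefix's induced configurations agree with the original execution's, and the inclusion of the initial configuration through the empty run.
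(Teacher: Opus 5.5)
Your proof is correct and follows the same route as the paper, whose one-line proof is the prefix-closure observation that every intermediate configuration of a feasible execution is the final configuration of a shorter feasible run. You spell out what the paper leaves implicit: the trivial ($\Rightarrow$) direction, the determinism argument showing that a prefix induces the same configurations, and the initial configuration via $\reach(\epsilon) = c_0$.
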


\begin{proof}
Immediate from prefix closure:
every intermediate configuration of a feasible execution is the final configuration of a shorter one.
\end{proof}

\subsection{Commutativity}
\seclabel{commutativity}

Recall that the core idea of our approach is that a protocol's feasible runs can be represented compactly by a small set of \emph{canonical} runs.
The insight behind this is that many runs are related in the sense that they can be transformed into each other by reordering independent actions, and thus they reach the same final configuration.
We formalize this notion of related runs using the concept of semi-commutation.
Fix an alphabet $\Sigma$.
A \emph{semi-commutation relation} $\theta \subseteq \Sigma \times \Sigma$ is an irreflexive relation.
We write $u \rew_\theta v$ when $u = w\, a\, b\, w'$ and $v = w\, b\, a\, w'$ for some
$(a, b) \in \theta$, and $\rew_\theta^{*}$ for the reflexive--transitive closure.
The induced downward-closure operator ${\downarrow}_\theta \colon 2^{\Sigma^{*}} \to 2^{\Sigma^{*}}$
collects all words that rewrite into a given language:
\[
    {\downarrow}_\theta(L) \;=\; \bigl\{\, u \in \Sigma^{*} \;\big|\;
    u \rew_\theta^{*} v \text{ for some } v \in L \,\bigr\}.
\]
Later, we instantiate this on runs by taking the action alphabet $\actsI$ of an instance $I$
and fixing a semi-commutation relation $\theta_I \subseteq \actsI \times \actsI$.
Thus, a compact set of canonical runs $L \subseteq \actsI^{*}$
represents a larger set of runs ${\downarrow}_{\theta_I}(L)$.

\section{Scenarios}
\seclabel{scenario}

Recall from the introduction that a protocol designer argues for correctness by 
walking through the executions that can occur, and grouping them into a few representative patterns.
A \emph{scenario} is the formal counterpart of such an argument.
This section makes the notion precise, fixing a syntax in which 
a scenario is written and a semantics that captures which protocol runs it denotes.
Before the formal definitions, we motivate the design by presenting the intuition behind how one writes the representative runs of ring leader election (\figref{rle-scenario}).

\begin{example}[Scenario of Ring leader election]
\exlabel{writing-rle}
We illustrate three insights behind the design of scenarios in the context of ring leader election: causal histories, phases, and sequentialization.

\myparagraph{Causal histories} Given any feasible run of the protocol, we can usually identify several backbones from a sequence of actions, similar to threads in a concurrent program.
They are triggered and ordered in a certain way to reflect how nodes are collaborated in the protocol.
In ring leader election, these backbones are vote chains that link the actions that initiate and propagate a vote.
The order is constrained and cannot be flipped by the causal dependencies among actions, where later actions are enabled by message-passing (as in the vote chains) or state updates
\footnote{The idea descends from the happens-before order~\cite{lamport-time} and also recent message chains~\cite{message-chains}. We do not use the latter term directly but causal histories as a causal history generalizes it,
since an action may be enabled by several predecessors and the history is then a directed acyclic graph rather than a chain.}.
Therefore, instead of considering all interleavings of all actions, a protocol's behavior then decomposes into different interleavings of a few causal histories.
The first step in constructing a scenario expression becomes
collecting these dependencies and forming the run's \emph{causal history}. 
Notice that unlike concurrent programs, those causal histories, or ``threads'', are not limited to single nodes.

The later two insights transform all possible interleavings of causal histories into representative scenarios: phases capture the sequential stages within a single history, and sequentialization arranges multiple histories into a representative order.

\myparagraph{Phases} 
Since the set of protocol runs is prefix-closed, any partial execution of a protocol run is also a valid run.
Thus, shorter causal histories are usually prefixes of longer ones.
On the other hand, longer causal histories can be split into shorter segments, each representing different stages, or \emph{phases} in the execution.
The notion of phases is a common concept in the design of protocols, such as two-phase commit and the prepare/propose phases in Paxos.
Instead of writing causal histories of different lengths separately, 
we identify these phases and systematically group causal histories that correspond to different phases of the same execution.
The following is how we describe a causal history with several phases in ring leader election.
A vote chain has three phases: initiation ($P_1$, the $\mathtt{Init}$), relay past non-maximum nodes ($P_2$, an iteration of $\mathtt{Recv}$s), and final delivery to the maximum node ($P_3$).

\myparagraph{Sequentialization}
The whole protocol contains several causal histories, and the histories run \emph{concurrently}: their actions interleave in combinatorially many ways, which makes reachable states hard to reason about.
A scenario does not enumerate them but \emph{sequentializes} them into several representative orders that rely on a semi-commutativity relation.
This might be the hardest and creative part of designing scenarios.
In ring leader election, we notice that each vote chain is completely independent of the others, i.e., actions in one vote chain do not affect the state or enabledness of actions in another vote chain.
Thus, we choose the naive way of sequentialization here, arranging the whole vote chains one after another in an arbitrary but fixed order (e.g., by their originators' identifiers).
In~\secref{casestudy}, we will see that this naive sequentialization is not always sufficient for more complex protocols (e.g., Paxos), but a clever one is required.
\end{example}

The example above does not only guide our intuition on writing a scenario, but also motivates the syntax definition (\secref{scenario-syntax}), where we need nondeterminism to capture different stop points on phases, loops to represent repeated patterns (such as the vote relay, or vote chains on different originators), and also auxiliary variables to control the loops.
We then define the semantics (\secref{scenario-semantics}), and finally return in \secref{writing-scenarios} to the derived syntax sugars that make such scenarios convenient to write.

\subsection{Scenario Syntax}
\seclabel{scenario-syntax}

A scenario denotes runs of an \emph{arbitrary} valid instance $I \models \thy$.
We therefore fix such an instance $I$ and build everything on it.

\myparagraph{Scenario variables and values}
A scenario works as a global orchestrator of the run, 
tracking the meta-information to guide which action is fired next.
To track the such information a run depends on, 
a scenario carries its own state over a set of auxiliary \emph{scenario variables}, which we fix here as $\scvars$.
Besides its own state, a scenario may also \emph{read}, but never write, the current protocol configuration;
only protocol actions change the configuration.

\myparagraph{Expressions}
We first define the values and boolean expressions over the scenario variables and the protocol configuration:
\[
    e \;::=\; v\in \scvars \;\mid\; d \in \dom_I \;\mid\; F(e_1, \dots, e_k) \;\mid\; \mathsf{glob} \;\mid\; \mathsf{loc}(e),
    \qquad
    b ::= \mathsf{true} \;\mid\; R(e_1, \dots, e_k) \;\mid\; \neg b \;\mid\; b \land b ,
\]
with $F, R$ of matching arity.
Constants $d$ range over the values of every type: elements of $\dom_I$, nodes, and messages.
The expression $\mathsf{glob}$ reads the current shared state and $\mathsf{loc}(e)$ reads the current local state of the node $e$.
Their components are accessed by field selectors $.f$, which are function symbols of $F$ like any other
(e.g., $\mathsf{glob}.\mathtt{sent}$, $m.\mathtt{id}$);
we abbreviate $\mathsf{loc}(e).f$ as $e.f$ when $e$ denotes a node.

\myparagraph{Command terms}
Scenarios are built over a single alphabet $\Sigma_I$ of \emph{command terms} of three kinds:
\begin{itemize}[nosep]
    \item a \emph{protocol action} $a(e_n, e_m)$, where $a \in \acts$ and the expressions
    $e_n$ and $e_m$ select the executing node and the triggering message
    (we omit $e_m$ and write $a(e_n)$ when $\mtype(a) = \bot$),
    \item an \emph{assignment} $v \leftarrow e$, where $v \in \scvars$ and $e$ is an expression,
    \item a \emph{test} $\assm(b)$, where $b$ is a Boolean expression.
\end{itemize}
We assume command terms are \emph{type-correct}: $e_n$ evaluates to a node of type $\ntype(a)$ and
$e_m$ to a message of type $\mtype(a)$. 
Moreover, every variable $v \in \scvars$ has a fixed type (a node type, a message type, or a sort of $\dom$), whose set of values in $I$ we write $\dom_v$, ensuring that assignments to it are type-correct.

\myparagraph{Scenario expressions}
A \emph{scenario} is a regular expression over the command terms with the two constants $\mathbf{1}$ and $\mathbf{0}$,
extended with a \emph{havoc} construct:
\[
    s \;::=\; \mathbf{1}\;\mid\;\mathbf{0}\;\mid\; t \;\mid\; \mathtt{havoc}(v) \;\mid\; s + s \;\mid\; s \cdot s \;\mid\; s^{*},
    \qquad t \in \Sigma_I,\ v \in \scvars,
\]
where $+$ is nondeterministic choice, $\cdot$ sequential composition,
$(-)^{*}$ iteration, $\mathbf{1}$ the \emph{unit} denoting only the
empty run, and $\mathbf{0}$ the \emph{zero} denoting no run at all.
$\mathtt{havoc}(v)$ assigns $v$ an arbitrary value of its type:
it is the nondeterministic choice among the assignments $v \leftarrow d$ for all $d \in \dom_v$.
Since $\dom_v$ may be infinite, this choice cannot in general be written with the finite $+$.
We use $\mathtt{Scenario}_{\Sigma_I}$ to denote the set of all scenario expressions over the alphabet $\Sigma_I$.

\subsection{Scenario Semantics}
\seclabel{scenario-semantics}
A scenario expression $s \in \mathtt{Scenario}_{\Sigma_I}$ is defined over the \emph{scenario alphabet} $\Sigma_I$ of command terms and only the representative runs, but its semantics $\sem{s}_{I, \sigma_0}$, over the \emph{protocol alphabet} $\actsI$ of concrete runs, is meant to capture the set of all possible runs.
We bridge the gap with a series of semantic layers that progressively interpret scenario expressions over $\Sigma$ into sets of runs over $\actsI$.

We first define a scenario state $\sigma : \scvars \to \dom$ as a mapping from scenario variables to their values in the domain.
Expressions are evaluated in a \emph{joint state} $(\sigma, c)$ of a scenario state and a configuration $c = (g, \ell)$:
$\evalx{v}{\sigma, c} = \sigma(v)$, $\evalx{d}{\sigma, c} = d$,
$\evalx{F(e_1, \dots, e_k)}{\sigma, c} = \interp(F)(\evalx{e_1}{\sigma, c}, \dots, \evalx{e_k}{\sigma, c})$,
$\evalx{\mathsf{glob}}{\sigma, c} = g$, and $\evalx{\mathsf{loc}(e)}{\sigma, c} = \ell(\evalx{e}{\sigma, c})$;
we write $(\sigma, c) \models b$ for the truth of a Boolean expression.
The semantics of scenario expressions is parameterized by an initial scenario state $\sigma_0$ and the instance $I$, which also fixes the initial configuration $c_0$.
When there is no ambiguity, we will omit these parameters and simply write $\sem{s}$.

\begin{itemize}[nosep]
    \item The \textbf{structure} layer $\semSig{\cdot}: \mathtt{Scenario}_{\Sigma_I} \to 2^{\Sigma_I^*}$ unfolds a scenario based on the standard regular-expression semantics to produce a language of command-term sequences over $\Sigma_I$.
    \item The \textbf{variable} layer $\semvar{\cdot}: 2^{\Sigma_I^*} \to 2^{\actsI^*}$ binds the scenario state and the protocol configuration, resolving each command term into a concrete action and so turning command-term sequences into runs over $\actsI$.
    \item The \textbf{feasibility} layer $\semfeas{\cdot}: 2^{\actsI^*} \to 2^{\actsI^*}$ intersects with the feasible runs
    $\fexec_I$, keeping only the resolved runs the protocol can actually take.
    \item The \textbf{commutativity} layer $\semcom{\cdot}: 2^{\actsI^*} \to 2^{\actsI^*}$ takes the semi-commutative closure under $\theta_I$ so that the scenario stands for every run that can be reordered into one of its own.
\end{itemize}
Thus, the semantics of a scenario is a composition of four layers, each an operator on the output of the one below:
\[
    \sem{s} \;=\; \semcom{\semfeas{\semvar{\semSig{s}}}}.
\]



\myparagraph{Structure interpretation}
For a given scenario $s$, we define
\[
    \semSig{s} =
    \begin{cases}
        \{\epsilon\} & s \equiv \mathbf{1},\\
        \emptyset & s \equiv \mathbf{0},\\
        \{t\} & s \equiv t \in \Sigma,\\
        \{\, (v \leftarrow d) \mid d \in \dom_v \,\} & s \equiv \mathtt{havoc}(v),\\
        \semSig{s_1} \cup \semSig{s_2} & s \equiv s_1 + s_2,\\
        \semSig{s_1} \cdot \semSig{s_2} & s \equiv s_1 \cdot s_2,\\
        \bigcup_{n \ge 0} \bigl(\semSig{s_1}\bigr)^{n} & s \equiv s_1^{*},
    \end{cases}
\]
where $L_1 \cdot L_2 = \{\, w_1 w_2 \mid w_1 \in L_1,\ w_2 \in L_2 \,\}$ is language
concatenation and the powers are $L^{0} = \{\epsilon\}$ and $L^{n+1} = L^{n} \cdot L$.

\myparagraph{Variable interpretation}
Similar to the standard denotational semantics for imperative programs, the variable interpretation transforms a command term sequence into a set of concrete protocol action sequences (runs) by evaluating expressions in the current joint state and updating it accordingly:
assignments update the scenario state, and protocol actions update the configuration.

We first define the resolution of a single word $w \in \Sigma^{*}$ from a joint state $(\sigma, c)$ inductively:
\[
    \runl(\epsilon, \sigma, c) = \{\epsilon\},
    \qquad
    \runl(t \cdot w, \sigma, c) =
    \begin{cases}
        \alpha \cdot \runl(w, \sigma, c')
            & t \equiv a(e_n, e_m),\ \alpha = \bigl(a,\ \evalx{e_n}{\sigma, c},\ \evalx{e_m}{\sigma, c}\bigr),\ c \xrightarrow{\alpha} c',\\
        \runl\bigl(w,\ \sigma[v \mapsto \evalx{e}{\sigma, c}],\ c\bigr) & t \equiv (v \leftarrow e),\\
        \runl(w, \sigma, c) & t \equiv \assm(b),\ (\sigma, c) \models b,\\
        \emptyset & t \equiv \assm(b),\ (\sigma, c) \not\models b,
    \end{cases}
\]
We then lift it to a language $L \subseteq \Sigma^{*}$:
$\semvar{L} = \bigcup_{w \in L} \runl(w, \sigma_0, c_0)$.
The transition $c \xrightarrow{\alpha} c'$ applies the effect of $\alpha$ without consulting its guard (\defref{execution}),
so resolution is independent of feasibility.
Because effects are deterministic, the configuration a term reads is exactly $\reach(c_0, \pi)$,
where $\pi$ is the run emitted before it.
Thus, for every run that survives the feasibility cut below, each term has read the configuration of that run's own feasible execution.



\myparagraph{Feasibility cut}
The output of variable interpretations gives a set of concrete protocol runs, 
which might include infeasible runs that the protocol cannot actually execute.
Feasibility cut intersects the output with the
feasible runs $\fexec_I$ of $I$.
For a language $L \subseteq \actsI^{*}$, the feasibility layer is
$\semfeas{L} = L \cap \fexec_I$.

\myparagraph{Commutativity closure}
The commutativity layer closes the feasible runs under the semi-commutation relation
$\theta_I$ (\secref{commutativity}).
For a language $L \subseteq \actsI^{*}$,
$\semcom{L} = {\downarrow}_{\theta_I}(L)\cap \fexec_I$.

In summary, every scenario $s$ now denotes a set $\sem{s} \subseteq \actsI^{*}$ of (feasible) runs.
Variable interpretation is applied first to obtain the set of concrete runs from the scenario, and then the commutativity closure is applied to augment the representation with all runs that are semi-commutatively equivalent.

\subsection{Syntax Sugars}
\seclabel{writing-scenarios}

We close this section by formally introducing the two syntax sugars that make it easier to write scenarios: 
$\mathtt{PhaseChain}$, which lays out one history as a sequence of \emph{phases}, 
and $\mathtt{For}$, which composes histories over a range of \emph{nodes}.
These complete the formality of ring leader election scenario (\figref{rle-scenario}).

\myparagraph{Phases and \texorpdfstring{$\mathtt{PhaseChain}$}{PhaseChain}}
Given a causal history that can be divided into phases: $s_1, \dots, s_k$, we simply write
\[
    \mathtt{PhaseChain}[s_1, \dots, s_k]
    \;\triangleq\;
    \mathbf{1} + s_1 + s_1\cdot s_2 + \cdots + s_1\cdot s_2 \cdots s_k
\]
to denote the union of every prefix, including the empty run $\mathbf{1}$.

\myparagraph{Nodes and \texorpdfstring{$\mathtt{For}$}{For}}
\seclabel{node-model}
Composing one causal history per node supposes a way to range over nodes in order.
We fix a node type $N\in \ntypes$ and let the instance theory $\thy$ constrain its node set $\mathtt{N}_N$ to be finite and linearly ordered, from $\mathtt{min}$ to $\mathtt{max}$ with successor $\mathtt{next}$ (inverse $\mathtt{prev}$).
Finiteness is not a restriction in practice, as any real network has finitely many nodes.
To iterate over all nodes in $\mathtt{N}_N$,
inspired by the standard for-loop construct in programming, 
we sequence one copy of $s$ per node over the core grammar:
\[
    \mathtt{For}\bigl(v \in [\mathtt{min}, \mathtt{max}]\bigr)\ s(v)
    \;\triangleq\;
    (v \leftarrow \mathtt{min})\cdot
    \bigl(\assm(v \le \mathtt{max})\cdot s(v)\cdot(v \leftarrow \mathtt{next}(v))\bigr)^{*}\cdot
    \assm(v > \mathtt{max}),
\]
initializing $v$ to $\mathtt{min}$, iterating the body and advancing $v$ while $v \le \mathtt{max}$,
and checking on exit that the loop ran to completion (so $v$ has passed $\mathtt{max}$).
Finally, notice that this order is a modeling device, not the communication topology (e.g., the ring structure in ring leader election).
We always assume that the body $s$ never assigns to the iterator $v$, which can be easily checked syntactically and ensures the correctness of the unrolling rule in~\secref{coverage}

\section{The Scenario Proof System}
\seclabel{proof-system}

In this section, we introduce the \emph{syntex-directed} scenario proof system, which allows us to reason about safety properties of protocols using a scenario expression and three sub-tasks: coverage, correctness, and soundness of the semi-commutation.

\subsection{Overview}
The starting point of the proof is a simple observation about how one proves a safety property.
To show that every protocol run of an instance $I$ reaches a configuration that satisfies $\varphi$, it is
enough to exhibit a set of runs that, on the one hand, accounts for every feasible run and,
on the other, ends only in $\varphi$-states.
We call such a set of runs a \emph{witness} for the safety property $\varphi$.

\begin{theorem}[Witness]
\thmlabel{witness}
Let $\varphi$ be a safety property.
For any valid $I\models \thy$,
if there exists a set of runs $S_I \subseteq \actsI^{*}$ that satisfies
\begin{itemize}[nosep]
    \item \textbf{(Coverage)} $\fexec_I \subseteq S_I$, and
    \item \textbf{(Correctness)} $\reach(c_0, w) \models \varphi$ for every feasible $w \in S_I$,
\end{itemize}
then $\protocol \models \varphi$.
\end{theorem}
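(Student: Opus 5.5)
The plan is to reduce the statement to \lemref{safety-reach}, which characterizes $\protocol \models \varphi$ entirely in terms of the final configurations of feasible runs. By that lemma, it suffices to show $\reach(\pi) \models \varphi$ for every $\pi \in \sem{\protocol} = \bigcup_{I \in \instances} \fexec_I$.

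First, we fix an arbitrary $\pi \in \sem{\protocol}$. By the definition of $\sem{\protocol}$, there is a valid instance $I \in \instances$ (equivalently $I \models \thy$) with $\pi \in \fexec_I$. The hypothesis applies to this $I$ and supplies a set $S_I \subseteq \actsI^{*}$ satisfying both conditions.

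Second, Coverage gives $\pi \in S_I$. Since $\pi$ is feasible, Correctness applies to it and yields $\reach(c_0, \pi) \models \varphi$, where $c_0$ is the initial configuration of $I$. This is exactly $\reach(\pi) \models \varphi$ in the notation of \defref{execution}. Because $\pi$ was arbitrary, \lemref{safety-reach} gives $\protocol \models \varphi$.

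There is no substantial obstacle here; the argument is pure definition chasing. The only points needing care are bookkeeping. The witness $S_I$ may depend on $I$, so the order of quantifiers is ``for every $I$, there exists $S_I$'', which is what the statement provides. The notations $\reach(c_0, w)$ and $\reach(\pi)$ refer to the same final configuration, with $c_0$ taken from the instance $I$. Correctness only needs to be invoked on feasible members of $S_I$, and every $\pi \in \fexec_I$ is feasible by definition.
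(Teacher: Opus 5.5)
Your proof is correct and follows the paper's argument: you reduce via \lemref{safety-reach} to the final configurations of feasible runs, use Coverage to place each such run in $S_I$, apply Correctness, and quantify over all valid instances. Your point that the hypothesis must be read as ``for every valid $I$ there is some $S_I$'' is the intended reading, and it matches the paper's last step of quantifying over instances.
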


\begin{proof}
By \lemref{safety-reach} it suffices to check $\reach(c_0, w) \models \varphi$ for every
feasible run $w$.
Every such $w$ lies in $\fexec_I \subseteq S_I$, so correctness applies.
Quantifying over all instances gives $\protocol \models \varphi$.
\end{proof}

A scenario provides such a witness: $S_I = \sem{s}_{I, \sigma_0}$.
Now the proof obligations can be stated in two parts:
\begin{enumerate*}
    \item coverage: $\fexec_I \subseteq \sem{s}$,
    \item correctness: $\reach(c_0, w) \models \varphi$ for every $w \in \sem{s}$.
\end{enumerate*}
But now the two obligations speak about the full closure $\sem{s}$ rather than about
the compact $s$,
violating the principle that a scenario proof can rely on the simple notation of $s$ rather than the expanded closure $\sem{s}$.
We resolve this by a key observation:
when the semi-commutation $\theta_I$ is \emph{sound} (defined later), 
both obligations can be discharged on $s$ using \emph{syntex-directed} approaches, instead of the semantic closure $\sem{s}$.
This yields a third proof task:
the soundness of $\theta_I$.


We first define the soundness of $\theta_I$ and then introduce how we obtain syntex-directed proof system on both coverage and correctness obligations.

\begin{definition}[Sound semi-commutation]
\deflabel{sound-theta}
The semi-commutation $\theta_I$ is \emph{sound} if every $(a, b) \in {\theta_I}$,
where we have $ab \rew_{\theta_I} ba$, satisfies that, for any configuration $c$, if $ab$ is feasible on $c$, then
\begin{itemize}[nosep]
    \item \textbf{feasibility-preserving}: $ba$ is also feasible on $c$, and
    \item \textbf{reach-preserving}: $\reach(c, ab) = \reach(c, ba)$.
\end{itemize}
\end{definition}


\subsubsection{Reducing coverage checking}
We first observe that checking coverage $\fexec_I \subseteq \sem{s}$ is hard as the left-hand side $\fexec_I$ is hard to characterize explicitly.
Our first step is to transform this obligation into a form that is independent of $\fexec_I$.
Observe that $\fexec_I$ is the smallest extension-closed subset of $\actsI^*$ (defined below). 
Therefore, in order to show coverage, it suffices to show that $\sem{s}$ contains the smallest extension-closed set, i.e., that $\sem{s}$ is also extension-closed.

\begin{definition}[Extension-closed]
\deflabel{extension-closed}
A set of runs $L \subseteq 2^{\actsI^*}$ is \emph{extension-closed} if $\epsilon \in L$
and, for every $w \in L$ and every action $a \in \actsI$ with $w \cdot a \in \fexec_I$, also
$w \cdot a \in L$.
\end{definition}

\begin{lemma}[Coverage via closedness]
\lemlabel{coverage-closed}
If $\sem{s}$ is extension-closed, then coverage holds.
\end{lemma}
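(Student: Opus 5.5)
The plan is to prove $\fexec_I \subseteq \sem{s}$ by induction on the length of a feasible run $w \in \fexec_I$, using only the two clauses of \defref{extension-closed}. Nothing about the structure of $s$ or the layered semantics is needed: the statement is purely about sets of runs. The only earlier fact I use is that $\fexec_I$ is prefix-closed, as observed before \lemref{safety-reach}.

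For the base case, $w = \epsilon$. The first clause of extension-closedness gives $\epsilon \in \sem{s}$ directly.

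For the inductive step, let $w = w' \cdot a$ be feasible with $a \in \actsI$. By prefix closure of $\fexec_I$, the prefix $w'$ is also feasible, since every step of the execution induced by $w'$ is a step of the execution induced by $w$ and its guard is therefore satisfied. The induction hypothesis then gives $w' \in \sem{s}$. Since $w' \cdot a = w \in \fexec_I$, the second clause of \defref{extension-closed} applied to $w'$ and $a$ yields $w \in \sem{s}$. Hence every feasible run lies in $\sem{s}$, which is exactly the coverage obligation of \thmref{witness} with $S_I = \sem{s}$.

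I expect no real obstacle. The only point needing care is the remark that $\fexec_I$ is the smallest extension-closed set. It is itself extension-closed: it contains $\epsilon$, and the closure condition holds trivially because the extension is required to be feasible. The induction above is precisely the proof that $\fexec_I$ is contained in every extension-closed $L$. I will keep the argument explicit rather than appealing to that minimality, since that remark was stated informally before the lemma. I will also note that the feasibility hypothesis inside the definition is what makes the induction go through: the extension step is only ever applied to feasible $w' \cdot a$, which is exactly the case that arises.
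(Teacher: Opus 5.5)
Your proof is correct and follows essentially the paper's argument. The paper simply cites that $\fexec_I$ is the smallest extension-closed set. Your induction on run length, using prefix closure of $\fexec_I$, is exactly the proof of that minimality, written out explicitly.
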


\begin{proof}
    The proof directly follows from the fact that $\fexec_I$ is the smallest extension-closed set, and ``smallest'' is defined over set inclusion.
\end{proof}

However, the step condition of extension-closedness, $\sem{s} \cdot a \cap \fexec_I\subseteq \sem{s}$, is not syntax-directed: the left-hand side is assembled by appending $a$ to a denotation, so a syntax-directed system has no scenario structure to recurse on.
This is where the soundness of $\theta_I$ comes in: it allows us to replace the left-hand side by a denotation of a scenario, yielding a syntax-directed obligation.

\begin{lemma}[Closedness reduction]
\lemlabel{closedness-reduction}
If $\theta_I$ is sound, then $\sem{s}$ is extension-closed if and only if
\[
    \sem{\mathbf{1}} \subseteq \sem{s}
    \qquad\text{and}\qquad
    \sem{s \cdot a} \subseteq \sem{s} \ \text{ for every action } a \in \actsI .
\]
\end{lemma}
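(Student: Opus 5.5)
The plan is to first establish a context-lifting form of soundness, then compute $\sem{s \cdot a}$ in terms of $\sem{s}$, and finally check the two directions separately. Throughout, a concrete action $a = (a', n, m) \in \actsI$ is read as the command term $a'(n, m)$ with constant arguments.

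\textbf{Step 1 (soundness in context).} I will show that if $\theta_I$ is sound, $u \in \fexec_I$ and $u \rew_{\theta_I}^{*} v$, then $v \in \fexec_I$ and $\reach(c_0, u) = \reach(c_0, v)$. By induction on the number of rewrite steps, it suffices to treat one step $u = w\,b\,b'\,w'$ and $v = w\,b'\,b\,w'$ with $(b, b') \in \theta_I$.
\begin{itemize}[nosep]
\item The prefix $w$ is unchanged, so both runs reach the same configuration $c$ after it.
\item Since $b\,b'$ is feasible from $c$, \defref{sound-theta} gives that $b'\,b$ is feasible from $c$ and reaches the same configuration.
\item The suffix $w'$ therefore starts from the same configuration in both runs, so its feasibility and final configuration carry over.
\end{itemize}
Two further facts follow directly from the definitions. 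First, rewriting preserves length, so ${\downarrow}_{\theta_I}(\{\epsilon\}) = \{\epsilon\}$ and hence $\sem{\mathbf{1}} = \{\epsilon\}$. Second, $u \rew^{*}_{\theta_I} v$ implies $u\,a \rew^{*}_{\theta_I} v\,a$.

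\textbf{Step 2 (unfolding $\sem{s\cdot a}$).} Write $V = \semvar{\semSig{s}}$. By definition, $\semSig{s \cdot a} = \semSig{s}\cdot\{a\}$. The term $a$ has constant arguments, so $\runl(w\,a, \sigma_0, c_0) = \runl(w, \sigma_0, c_0)\cdot a$, and therefore $\semvar{\semSig{s\cdot a}} = V \cdot a$. Consequently
\[
\sem{s\cdot a} = {\downarrow}_{\theta_I}\bigl((V\cdot a) \cap \fexec_I\bigr) \cap \fexec_I .
\]
Because $\fexec_I$ is prefix-closed, $v\,a \in \fexec_I$ implies $v \in V \cap \fexec_I$.

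\textbf{Step 3 (the two directions).}

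($\Leftarrow$) From $\sem{\mathbf{1}} \subseteq \sem{s}$ we get $\epsilon \in \sem{s}$. Now take $w \in \sem{s}$ with $w\,a \in \fexec_I$. Then $w \rew^{*}_{\theta_I} v$ for some $v \in V \cap \fexec_I$, and by Step 1, $w\,a \rew^{*}_{\theta_I} v\,a$. Applying Step 1 to the feasible run $w\,a$ shows that $v\,a$ is feasible, so $v\,a \in (V \cdot a) \cap \fexec_I$. Hence $w\,a \in \sem{s\cdot a} \subseteq \sem{s}$.

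($\Rightarrow$) Since $\epsilon \in \sem{s}$, we get $\sem{\mathbf{1}} = \{\epsilon\} \subseteq \sem{s}$. Now take $u \in \sem{s\cdot a}$. Then $u$ is feasible and $u \rew^{*}_{\theta_I} v\,a$ with $v\,a$ feasible and $v \in V$. By Step 2, $v \in V \cap \fexec_I \subseteq \sem{s}$. Extension-closedness then gives $v\,a \in \sem{s}$, i.e.\ $v\,a \rew^{*}_{\theta_I} x$ for some $x \in V \cap \fexec_I$. By transitivity $u \rew^{*}_{\theta_I} x$, and $u$ is feasible, so $u \in \sem{s}$.

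The main obstacle is Step 1. It is the only place where soundness, which is stated locally on a pair of actions from an arbitrary configuration, must be lifted to rewriting inside arbitrary contexts. The $(\Leftarrow)$ direction depends on it, because there the feasibility of $v\,a$ has to be transported backwards from the feasible run $w\,a$ along its rewriting.
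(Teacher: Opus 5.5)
Your proof is correct and follows the same route as the paper's sketch. The base case uses $\sem{\mathbf{1}}=\{\epsilon\}$; the $(\Leftarrow)$ direction appends $a$ to the rewrite $w \rew_{\theta_I}^{*} v$ and uses soundness to get feasibility of $v\,a$; the $(\Rightarrow)$ direction applies extension-closedness to the representative $v$ and then downward closure. Your Steps 1 and 2 (lifting soundness to arbitrary contexts, and computing $\semvar{\semSig{s\cdot a}} = V\cdot a$) make explicit what the sketch leaves implicit; one small correction is that in $(\Leftarrow)$ feasibility is transported forward along the rewrite $w\,a \rew^{*}_{\theta_I} v\,a$, not backwards.
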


\begin{proof}[Proof sketch]
The base $\epsilon \in \sem{s}$ is exactly $\sem{\mathbf{1}} \subseteq \sem{s}$.
Then we prove the step obligation.
From absorption to extension-closedness. Suppose $w \in \sem{s}$ and $w\cdot a$ is feasible. 
There is a feasible representative $v$ of $s$ with $w \rew_\theta^{*} v$. 
Soundness preserves feasibility and final state along this rewrite, 
so $v\cdot a$ is feasible as well. 
Thus $w\cdot a \rew_\theta^{*} v\cdot a$ places $w\cdot a$ in $\sem{s\cdot a} \subset \sem{s}$.
From extension-closedness to absorption. 
Suppose $u \in \sem{s\cdot a}$. Then $u$ rewrites to a feasible $v\cdot a$,
where $v$ is a representative of $s$. 
Since $v \in \sem{s}$, extension-closedness gives $v\cdot a \in \sem{s}$. 
Downward closure then gives $u \in \sem{s}$.
\end{proof}

The value of this reduction is the shape of the resulting obligation, $\sem{s \cdot a} \subseteq \sem{s}$, where both sides are denotations of scenarios
and the proof obligation matches an inclusion of the uniform form $\sem{s_1} \subseteq \sem{s_2}$.
We design a syntax-directed proof system to discharge such inclusions
(\secref{coverage}).

\begin{remark}[Coverage is property-independent]
\remlabel{coverage-independent}
A scenario together with its coverage proof is of standalone interest: it makes no mention of
the safety property, and thus characterizes the behavior of the protocol itself, where every
feasible run is, up to commutativity, one the scenario describes.
Establishing a particular safety property is then one \emph{use} of this characterization,
discharged by the correctness check alone.
The same scenario and its coverage proof, established once, could serve to verify other safety
properties, or to reason about the protocol beyond safety; we leave these directions to future
work.
\end{remark}

\subsubsection{Reducing correctness checking}

In the same spirit of partial order reduction and many other techniques~\cite{dpor,godefroid-por,optimal-dpor} that exploit (sound) commutativity, correctness checking quantifies over all runs in $\sem{s}$, but it suffices to check only the representatives $\semfeas{s}$ that $s$ generates directly without reordering.

\begin{lemma}[Correctness reduction]
\lemlabel{correctness-reduction}
If $\theta_I$ is sound, then correctness holds if and only if
$\reach(c_0, w) \models \varphi$ for every feasible representative $w \in \semfeas{s}$.
\end{lemma}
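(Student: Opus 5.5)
We prove the two directions separately. Here ``correctness'' means $\reach(c_0, w) \models \varphi$ for every $w \in \sem{s}$. Recall that $\sem{s} = \semcom{\semfeas{\semvar{\semSig{s}}}} = {\downarrow}_{\theta_I}(\semfeas{s}) \cap \fexec_I$, where we abbreviate $\semfeas{\semvar{\semSig{s}}}$ as $\semfeas{s}$.

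The forward direction is immediate. Every feasible representative $w \in \semfeas{s}$ rewrites to itself in zero steps, so $w \in {\downarrow}_{\theta_I}(\semfeas{s})$. It is also feasible, so $w \in \sem{s}$, and correctness applies to it. Soundness is not needed here.

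The backward direction carries the content. Take $u \in \sem{s}$. Then $u$ is feasible, and there is some $v \in \semfeas{s}$ with $u \rew_{\theta_I}^{*} v$. We must show $\reach(c_0, u) = \reach(c_0, v)$; the hypothesis on $v$ then gives $\reach(c_0, u) \models \varphi$. The plan is an auxiliary lemma, proved by induction on the length of the rewrite sequence:
\begin{quote}
If $u \rew_{\theta_I}^{*} u'$ and $u$ is feasible from $c_0$, then $u'$ is feasible and $\reach(c_0, u) = \reach(c_0, u')$.
\end{quote}
It suffices to handle a single step $u = x\,a\,b\,y \rew_{\theta_I} x\,b\,a\,y = u'$ with $(a,b) \in \theta_I$. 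Let $c = \reach(c_0, x)$. The prefix $x$ is shared and feasible, since prefixes of feasible runs are feasible. Feasibility of $u$ gives that $ab$ is feasible on $c$. By \defref{sound-theta}, $ba$ is then feasible on $c$ and $\reach(c, ab) = \reach(c, ba)$. Since effects are deterministic and guards depend only on the current configuration, the suffix $y$ runs from the same configuration in both words. Hence $y$ is feasible after $ba$ exactly as it was after $ab$, and the final configurations coincide. Chaining single steps along the rewrite sequence gives the lemma, with feasibility propagating forward at each step.

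The main obstacle is the direction of the rewriting. The closure ${\downarrow}_{\theta_I}$ collects words $u$ that rewrite \emph{into} $L$, and soundness is one-directional: it transports feasibility from $ab$ to $ba$. This is exactly why the argument must start from the feasible $u \in \sem{s}$ and push forward toward $v$, rather than start from $v$ and go backward. The intersection with $\fexec_I$ in $\semcom{\cdot}$ supplies the feasibility of $u$ needed at the start. A backward argument would fail without symmetric soundness. A minor point is to make precise that feasibility and $\reach$ of a concatenation decompose through the intermediate configuration; this follows directly from \defref{execution}.
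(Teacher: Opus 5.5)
Your proof is correct and follows the paper's argument: every run in $\sem{s}$ rewrites to a representative in $\semfeas{s}$, and soundness of $\theta_I$ carries feasibility and the final configuration forward along that rewrite. You also spell out what the paper's sketch leaves implicit, namely the trivial forward direction, the single-step lemma with its shared prefix and suffix, and the fact that the induction must start from the feasible $u$ supplied by the $\cap\,\fexec_I$ in $\semcom{\cdot}$.
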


\begin{proof}[Proof sketch]
Every $v \in \sem{s}$ rewrites to some representative $w \in \semfeas{s}$.
By feasibility- and reach-preservation, $w$ is feasible and $\reach(c_0, v) = \reach(c_0, w)$,
so $\varphi$ holds at $v$ exactly when it holds at $w$.
\end{proof}

In summary, 
given a scenario $s$ and a semi-commutation $\theta_I$,
our scenario-based proof is reduced to the following three tasks:
\begin{itemize}[nosep]
    \item \emph{Correctness} (\secref{correctness}): the assertion
    $\reach(c_0, w) \models \varphi$ on the representatives $\semfeas{s}$.
    \item \emph{Coverage} (\secref{coverage}): the inclusions $\sem{\mathbf{1}} \subseteq \sem{s}$
    and $\sem{s \cdot a} \subseteq \sem{s}$ for every action $a\in\actsI$.
    \item \emph{Soundness} of $\theta_I$, which the above two tasks assume.
\end{itemize}

\begin{example}[Ring leader election: semi-commutation and its soundness]
\exlabel{rle-theta}
Throughout this section, we examine how ring leader election (\exref{writing-rle}) is proven using scenario and our proof system. 
We begin with the semi-commutation relation and its soundness.
The semi-commutation $\theta$ declares two actions independent exactly when they
concern \emph{different} votes, as in the table below (read row-then-column; each cell gives the
condition under which the ordered pair is in $\theta$):
\[
\begin{array}{r|cc}
    & \mathtt{Init}(i_2) & \mathtt{Recv}(n_2, m_2) \\
    \hline
    \mathtt{Init}(i_1) & i_1 \ne i_2 & \mathtt{id}(i_1) \ne m_2.\mathtt{id} \\
    \mathtt{Recv}(n_1, m_1) & m_1.\mathtt{id} \ne \mathtt{id}(i_2) & m_1.\mathtt{id} \ne m_2.\mathtt{id}
\end{array}
\]
Actions on the \emph{same} vote, which are causally ordered in one chain, stay
dependent.
Soundness is ensured by the following facts: two actions on
different votes read and write disjoint state and are never involved in message passing.
\end{example}

\subsection{Correctness: Hoare Logic}
\seclabel{correctness}

By \lemref{correctness-reduction} correctness reduces to checking $\reach(c_0, w) \models \varphi$
on the representatives $w \in \semfeas{s}$.
These follow the syntax of $s$ and the postcondition $\varphi$ can be checked by standard Hoare-logic-style of proof~\cite{hoare-logic}.

Even though the final safety property $\varphi$ is a predicate defined on the protocol configuration $c$ (shared and local states), 
the execution of a scenario $s$ also depends on the scenario state $\sigma$ that tracks the scenario variables.
So here, we use the standard weakest precondition $\mathrm{wp}_s(Q)$ of a postcondition $Q$ on the joint state $(\sigma, c)$.
For an action term $t = a(e_n, e_m)$, we write $t_{\sigma, c} = (a, \evalx{e_n}{\sigma, c}, \evalx{e_m}{\sigma, c})$ for its resolution
and $\mathrm{en}_t(\sigma, c)$ for the enabledness of $t_{\sigma, c}$ at $c$.
\[
\begin{array}{rcl}
    \mathrm{wp}_{t}(Q) &=& \lambda (\sigma, c).\; \mathrm{en}_t(\sigma, c) \Rightarrow Q(\sigma, c') \quad \text{where } t = a(e_n, e_m),\ c \xrightarrow{t_{\sigma, c}} c',\\[0.5ex]
    \mathrm{wp}_{v \leftarrow e}(Q) &=& \lambda (\sigma, c).\;Q\bigl(\sigma[v \mapsto \evalx{e}{\sigma, c}],\, c\bigr),\\[0.5ex]
    \mathrm{wp}_{\assm(b)}(Q) &=& \lambda (\sigma, c).\;(\sigma, c) \models b \Rightarrow Q(\sigma, c),\\[0.5ex]
    \mathrm{wp}_{\mathtt{havoc}(v)}(Q) &=& \lambda (\sigma, c).\;\forall d \in \dom_v.\; Q\bigl(\sigma[v \mapsto d],\, c\bigr),\\[0.5ex]
    \mathrm{wp}_{s_1 \cdot s_2}(Q) &=& \mathrm{wp}_{s_1}\bigl(\mathrm{wp}_{s_2}(Q)\bigr),\\[0.5ex]
    \mathrm{wp}_{s_1 + s_2}(Q) &=& \mathrm{wp}_{s_1}(Q) \wedge \mathrm{wp}_{s_2}(Q)\\[0.5ex]
    \mathrm{wp}_{s_1^{*}}(Q) &=& J \quad \text{if }J \Rightarrow \mathrm{wp}_{s_1}(J) \text{ and } J \Rightarrow Q
\end{array}
\]
\begin{remark}
    For iteration, we require user-supplied loop invariants.
    Compared with the original inductive invariants used to show the safety property, that must hold under all possible action interleavings, such loop invariants are typically easier to find, as the scenario is treated as a sequential program.
    We will see this in the case studies of \secref{casestudy}. 
\end{remark}

\begin{lemma}[Correctness via Hoare logic]
\lemlabel{correctness-hoare}
If $\mathrm{wp}_s(\lambda (\sigma, c).\; \varphi(c))(\sigma_0, c_0)$ is true, then $\reach(c_0, w) \models \varphi$ for every feasible representative $w \in \semfeas{s}$, and hence, by \lemref{correctness-reduction}, correctness holds.
\end{lemma}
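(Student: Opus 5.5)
We prove a stronger, compositional statement by structural induction on $s$ and then specialize it. For a scenario $s$, a postcondition $Q$ on joint states, and a joint state $(\sigma, c)$ with $c$ a reachable configuration, define the set of \emph{feasible resolutions from $(\sigma,c)$}: pairs $(w', (\sigma', c'))$ such that some command-term word $u \in \semSig{s}$, resolved from $(\sigma, c)$ by the $\runl$ recursion, emits a run $w'$. Here $w'$ must be feasible from $c$, and resolution must end in the joint state $(\sigma', c')$. The claim is:
\[
\mathrm{wp}_s(Q)(\sigma, c) \;\Longrightarrow\; Q(\sigma', c') \text{ for every feasible resolution } (w', (\sigma', c')) \text{ of } s \text{ from } (\sigma, c).
\]
To get the lemma, instantiate with $Q = \lambda(\sigma,c).\,\varphi(c)$ at $(\sigma_0, c_0)$. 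Every $w \in \semfeas{s}$ is $\runl(u, \sigma_0, c_0)$ for some $u\in\semSig{s}$ and lies in $\fexec_I$. Hence its final configuration is $\reach(c_0, w)$, and we conclude $\reach(c_0,w)\models\varphi$. The last clause of the statement then follows from \lemref{correctness-reduction}.

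The base cases are routine.
\begin{itemize}[nosep]
\item For $\mathbf{1}$ there is only the empty word; for $\mathbf{0}$ there is nothing to check.
\item For $v\leftarrow e$, the resolution updates $\sigma$ exactly as the wp clause does.
\item For $\assm(b)$, a resolution exists only when $(\sigma,c)\models b$, and that is the hypothesis of the implication in the wp clause.
\item For $\mathtt{havoc}(v)$, each word $v\leftarrow d$ is covered by the $\forall d$ clause.
\item For an action term $t$, feasibility of the single emitted step means $\mathrm{en}_t(\sigma,c)$ holds. The effect is deterministic, so $c'$ is exactly the successor used in the wp clause.
\end{itemize}
For $s_1+s_2$, we have $\semSig{s_1+s_2}$ as a union, and the wp is a conjunction. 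For $s_1\cdot s_2$, we split the word $u=u_1u_2$ and the emitted run accordingly. The prefix run is feasible because feasibility is prefix-closed. Applying the induction hypothesis for $s_1$ with postcondition $\mathrm{wp}_{s_2}(Q)$ gives $\mathrm{wp}_{s_2}(Q)$ at the intermediate joint state. The suffix run is feasible from the intermediate configuration, so the induction hypothesis for $s_2$ finishes the case. For this we must note that $\runl$ composes: $\runl(u_1u_2,\sigma,c)$ equals the run of $u_1$ followed by $\runl(u_2,\sigma_1,c_1)$ from the joint state reached after $u_1$. This holds by induction on $u_1$.

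The main obstacle is iteration, because $\mathrm{wp}_{s_1^*}(Q)$ is not given by an equation but by a user-supplied invariant $J$ with $J\Rightarrow\mathrm{wp}_{s_1}(J)$ and $J\Rightarrow Q$. We read the clause as: the wp of the loop is some $J$ satisfying these two side conditions, and we assume $J(\sigma,c)$. A word of $\semSig{s_1^*}$ is $u_1\cdots u_n$ with each $u_i\in\semSig{s_1}$. We do an inner induction on $n$, maintaining $J$ at each intermediate joint state. The case $n=0$ is trivial. For the step, the side condition gives $\mathrm{wp}_{s_1}(J)$, and the outer induction hypothesis for $s_1$, with postcondition $J$, transports $J$ across $u_{i+1}$. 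As in the sequencing case, this uses the composition of $\runl$ and prefix-closure of feasibility. At the end, $J\Rightarrow Q$ gives $Q$. For this to work, the outer induction must be over the syntax of $s$ with $Q$ universally quantified, so that the hypothesis for $s_1$ can be used with $J$ in place of $Q$.
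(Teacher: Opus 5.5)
Your proposal is correct. It is the standard soundness argument for weakest preconditions: generalize over the postcondition $Q$ and the joint state, induct on the structure of $s$ (using compositionality of $\runl$ and prefix-closure of feasibility for $s_1 \cdot s_2$), and for $s_1^{*}$ do an inner induction on the iteration count with the supplied invariant $J$. The paper states this lemma without an explicit proof and appeals to standard Hoare-logic reasoning, so your argument spells out the intended route; the reachability restriction on $c$ is harmless but unnecessary.
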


\begin{example}[Ring leader election: correctness]
\exlabel{rle-correctness}
The postcondition is $\varphi \equiv$ ``at most one node is elected''.
The outer $\mathtt{For}$ of $\mathtt{scenario}_{\mathtt{RLE}}$ carries the invariant
\[
    J_{\mathtt{out}} \;\equiv\;
    \forall n.\ n.\mathtt{isElected} \Rightarrow \mathtt{isMax}(n),
\]
that only the maximum node is ever elected, which entails $\varphi$ since two maximum nodes are equal.
Preserving it across one chain reduces to a loop invariant on the relay phase $P_2$, stated over its scenario variable $j$ (the front of the vote) and asserting that the vote never reaches back to $i$, but staying between $i$ and the maximum node, thus no node is elected in $P_2$:
\[
    J_{\mathtt{in}} \;\equiv\;
    \mathtt{Between}(i,\, \mathtt{right}(j),\, \mathtt{max}) \,\vee\, \mathtt{isMax}(i).
\]
At the loop exit the $P_3$ guard $\mathtt{isMax}(\mathtt{right}(j))$ combines with $J_{\mathtt{in}}$
to force the electing receive to occur only at the maximum node, re-establishing $J_{\mathtt{out}}$.
\end{example}

\subsection{Coverage: A Sub-proof System for Inclusion}
\seclabel{coverage}

\lemref{closedness-reduction} reduces coverage obligation to proving the semantic inclusions
$\sem{s\cdot a} \subseteq \sem{s}$,
where the scenario $s$ can \emph{absorb} any appended action $a$.
In this subsection, we give a more general syntax-directed proof system for deriving inclusions between two scenarios, $\sem{s_1} \subseteq \sem{s_2}$.

The remaining part of the section introduces four sets of the proof rules, organized by the layers of the semantics: structure $\incSig$, variable $\incvar$, feasibility $\incfeas$, and commutativity $\inccom$,
and show their soundness: $s_1 \inclayer s_2$ implies $\sem{s_1} \subseteq \sem{s_2}$.
Organizing the proof rules by layers helps understand how the rules are developed and verify their soundness at the underlying semantic layers modularly. 
For example, since the structure layer corresponds to the pure Kleene-algebra identities, the proof rules with Kleene Algebra reasoning are natural and sound at this layer. 
Moreover, the lifting lemma (\lemref{lifting}) ensures that the soundness of the proof rules at the lower layers (e.g., $s_1 \incSig s_2$ implies $\semSig{s_1} \subseteq \semSig{s_2}$) can be lifted to the top layer, guaranteeing $\sem{s_1} \subseteq \sem{s_2}$.

\begin{lemma}[Lifting]
\lemlabel{lifting}
An inclusion proved at a lower layer holds at every higher layer:
\[
    \semSig{s_1} \subseteq \semSig{s_2} \;\Longrightarrow\; \semvar{s_1} \subseteq \semvar{s_2} \;\Longrightarrow\;
    \semfeas{s_1} \subseteq \semfeas{s_2} \;\Longrightarrow\; \sem{s_1} \subseteq \sem{s_2} .
\]
\end{lemma}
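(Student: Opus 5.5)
The plan is to prove each implication in the chain separately, using the fact that each layer is a monotone operator on languages applied to the output of the layer below it. Write $\semvar{s}$ for $\semvar{\semSig{s}}$, $\semfeas{s}$ for $\semfeas{\semvar{\semSig{s}}}$, and $\sem{s}$ for the full composition. Each implication then reduces to one statement: if $L_1 \subseteq L_2$, then $\mathcal{O}(L_1) \subseteq \mathcal{O}(L_2)$ for the operator $\mathcal{O}$ of the next layer. Applying this to $L_i = \semSig{s_i}$, then $\semvar{s_i}$, then $\semfeas{s_i}$ gives the three arrows in turn.

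\emph{Variable layer.} By definition, $\semvar{L} = \bigcup_{w \in L} \runl(w, \sigma_0, c_0)$. This is a union indexed by the words of $L$, with the parameters $\sigma_0, c_0$ fixed for both scenarios. Enlarging the index set can only enlarge the union, so $L_1 \subseteq L_2$ gives $\semvar{L_1} \subseteq \semvar{L_2}$. Nothing about $\runl$ itself is used, since each word is resolved independently of the others.

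\emph{Feasibility layer.} $\semfeas{L} = L \cap \fexec_I$ is intersection with a fixed set, so it is monotone.

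\emph{Commutativity layer.} $\semcom{L} = {\downarrow}_{\theta_I}(L) \cap \fexec_I$. Take $u \in {\downarrow}_{\theta_I}(L_1)$. Then $u \rew_{\theta_I}^{*} v$ for some $v \in L_1 \subseteq L_2$, and the same $v$ witnesses $u \in {\downarrow}_{\theta_I}(L_2)$. Intersecting with $\fexec_I$ preserves the inclusion.

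I expect no real obstacle here. The one point that needs care is that the layers must be read as operators on languages, not as functions of the syntax $s$. Once monotonicity is stated at the language level, each arrow follows by instantiating it with the previous layer's sets. Soundness of $\theta_I$ is not needed for this lemma.
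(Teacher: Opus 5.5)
Your proof is correct: each arrow is monotonicity of the next layer's operator ($L \mapsto \bigcup_{w \in L} \runl(w, \sigma_0, c_0)$, then $L \mapsto L \cap \fexec_I$, then $L \mapsto {\downarrow}_{\theta_I}(L) \cap \fexec_I$) applied to the previous layer's languages, which is the argument the paper relies on (it states the lemma without an explicit proof). You are also right that $\sigma_0, c_0$ are shared by both sides and that soundness of $\theta_I$ plays no role here.
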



\myparagraph{Structure rules (Kleene algebra)}
The first group reasons purely about $\semSig{\cdot}$, the regular language a scenario
denotes, blind to variables, commutativity, and feasibility.
These are the Kleene-algebra laws~\cite{kozen-kat,kozen-completeness}, proving $\incSig$ and the induced equivalences (we write
$s_1 \equiv_\Sigma s_2$ for mutual inclusion):
\[
\begin{array}{ll}
    \textsc{(Idem)}        & s + s \equiv_\Sigma s \\
    \textsc{(Comm)}        & s_1 + s_2 \equiv_\Sigma s_2 + s_1 \\
    \textsc{(Assoc-}{+}\textsc{)} & (s_1 + s_2) + s_3 \equiv_\Sigma s_1 + (s_2 + s_3) \\
    \textsc{(Assoc-}{\cdot}\textsc{)} & (s_1 \cdot s_2) \cdot s_3 \equiv_\Sigma s_1 \cdot (s_2 \cdot s_3) \\
    \textsc{(Unit)}        & \mathbf{1} \cdot s \equiv_\Sigma s \equiv_\Sigma s \cdot \mathbf{1} \\
    \textsc{(Annih)}       & \mathbf{0} \cdot s \equiv_\Sigma \mathbf{0} \equiv_\Sigma s \cdot \mathbf{0} \\
    \textsc{(Distrib)}     & s \cdot (t_1 + t_2) \equiv_\Sigma s \cdot t_1 + s \cdot t_2,
                             \quad (t_1 + t_2) \cdot s \equiv_\Sigma t_1 \cdot s + t_2 \cdot s\\
    \textsc{(Embed)}       & s_i \incSig s_1 + \cdots + s_n \\
    \textsc{(Havoc)}       & (v \leftarrow d) \incSig \mathtt{havoc}(v), \quad d \in \dom_v \\
    \textsc{(Unfold)}      & s^{*} \equiv_\Sigma \mathbf{1} + s \cdot s^{*} \equiv_\Sigma \mathbf{1} + s^{*} \cdot s \\
    \textsc{(Bot)}         & \mathbf{0} \incSig s\\
    \textsc{(Fix1)}     & s \cdot t \incSig t \;\Longrightarrow\; s^{*} \cdot t \incSig t \\
    \textsc{(Fix2)}    & t \cdot s \incSig t \;\Longrightarrow\; t \cdot s^{*} \incSig t
\end{array}
\]

The structure layer enjoys \emph{precongruence}: if $s_1 \incSig s_2$, then
$C[s_1] \incSig C[s_2]$ for every context $C$ built from $\cdot$, $+$, and $(-)^{*}$, because $\semSig{\cdot}$ is compositional and concatenation, union, and iteration are monotone.
We apply this implicitly in the calculus, rewriting subterms in place.
However, the higher layers usually do not enjoy this property, as their rules are usually context-sensitive, i.e., they can only be applied when the entry state satisfies a certain condition.
We start by a typical context-sensitive rule, the feasibility rule \textsc{Disabled}, where a scenario is equivalent to $\mathbf{0}$ if an action in the scenario is not enabled after the prefix.

\begin{lemma}[Soundness of Structure rules]
    \lemlabel{soundness-structure}
    For two scenarios $s_1, s_2$, if $s_1 \incSig s_2$ is derivable by the structure rules, then $\semSig{s_1} \subseteq \semSig{s_2}$.
    By \lemref{lifting}, we have $\sem{s_1} \subseteq \sem{s_2}$.
\end{lemma}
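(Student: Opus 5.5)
I will argue by induction on the derivation of $s_1 \incSig s_2$. The derivation uses three kinds of steps: the listed axiom instances, the precongruence closure under contexts built from $\cdot$, $+$ and $(-)^{*}$, and the structural steps of reflexivity and transitivity (transitivity also covers chaining through the equivalences $\equiv_\Sigma$). For each step I show that the semantic inclusion $\semSig{s_1} \subseteq \semSig{s_2}$ holds in $2^{\Sigma_I^*}$, reading $\equiv_\Sigma$ as two inclusions. Transitivity and reflexivity are immediate from the corresponding properties of $\subseteq$.

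\textbf{Axiom cases.} Here $\semSig{\cdot}$ is just the standard language semantics. \textsc{Idem}, \textsc{Comm} and \textsc{Assoc}-$+$ are properties of set union. \textsc{Assoc}-$\cdot$, \textsc{Unit}, \textsc{Annih} and \textsc{Distrib} are the usual monoid and distributivity laws of language concatenation, using $\semSig{\mathbf{1}}=\{\epsilon\}$ and $\semSig{\mathbf{0}}=\emptyset$. \textsc{Embed} and \textsc{Bot} are trivial inclusions. \textsc{Havoc} holds because $\{v\leftarrow d\}$ is one of the singletons whose union defines $\semSig{\mathtt{havoc}(v)}$. For \textsc{Unfold}, I write $L=\semSig{s}$ and check $\bigcup_{n\ge 0}L^n = \{\epsilon\}\cup L\cdot\bigcup_n L^n = \{\epsilon\}\cup(\bigcup_n L^n)\cdot L$. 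This follows because concatenation distributes over arbitrary unions and $L^{n+1}=L^n L = L L^n$; the latter identity comes from associativity by a small induction on $n$.

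\textbf{Fixpoint rules.} These are the only cases needing a real argument. For \textsc{Fix1}, I assume $L\cdot T\subseteq T$, where $T=\semSig{t}$. I show $L^n\cdot T\subseteq T$ by induction on $n$: the case $n=0$ is trivial, and for the step $L^{n+1}T = L^n(LT)\subseteq L^nT\subseteq T$, using associativity and monotonicity of concatenation. Taking the union over $n$ and distributing concatenation over the union then gives $\semSig{s^{*}\cdot t}\subseteq T$. \textsc{Fix2} is symmetric, using $T L^{n+1} = (TL)L^n$. Note that the premise of these rules is itself a derived inclusion, so the induction hypothesis on the subderivation supplies exactly the semantic hypothesis $L\cdot T\subseteq T$.

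\textbf{Precongruence and lifting.} For precongruence I induct on the context $C$, using that $\cdot$, $\cup$ and $L\mapsto\bigcup_n L^n$ are monotone in each argument; monotonicity of $L^n$ in $L$ is again a trivial induction. Finally, the lifted conclusion $\sem{s_1}\subseteq\sem{s_2}$ is exactly \lemref{lifting} applied to the established structure-level inclusion. The main obstacle is bookkeeping, not depth: the fixpoint rules need an inner induction on $n$, and the rules carrying premises must be handled by the outer induction on derivations rather than as plain axioms.
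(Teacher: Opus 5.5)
Your proof is correct and follows the route the paper takes. The paper gives no separate proof beyond three observations: the structure rules are the Kleene-algebra laws, which hold in the language model $\semSig{\cdot}$; precongruence follows from compositionality and from monotonicity of concatenation, union and iteration; and the full-semantics inclusion then follows from \lemref{lifting}. Your induction on derivations (with the inner induction on $n$ for \textsc{Fix1}/\textsc{Fix2}, and the check that $L^{n+1} = L\,L^{n}$ under the paper's definition $L^{n+1} = L^{n}\cdot L$) simply spells that sketch out.
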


\myparagraph{Unfolding rule for $\mathtt{For}$}
A variable-level rule concerns the derived form $\mathtt{For}$
(\secref{node-model}).
We split the iteration at a chosen node $k$ into three pieces: the prefix up to $k$, the single copy at $k$, and the suffix after $k$.
This is useful when the appended action $a$ after a loop needs to be absorbed by a certain iteration,
which we will see in the examples later.
\begin{align*}
    \textsc{(Unfold-For)}\quad
    \mathtt{For}\bigl(v \in [\mathtt{min}, \mathtt{max}]\bigr)\, s(v)
    \;\incvar\;
    &\mathtt{For}\bigl(v \in [\mathtt{min}, \mathtt{prev}(k)]\bigr)\, s(v)
    \,\cdot\, \\
    &(v \leftarrow k)\cdot s(v) \,\cdot\,\\
    &\mathtt{For}\bigl(v \in [\mathtt{next}(k), \mathtt{max}]\bigr)\, s(v)
\end{align*}

\begin{lemma}[Soundness of $\mathtt{For}$ unfolding rule]
    \lemlabel{soundness-unfold-for}
    For two scenarios $s_1, s_2$, if $s_1 \incvar s_2$ is derivable by the $\mathtt{For}$ unfolding rule, then $\semvar{s_1} \subseteq \semvar{s_2}$.
    By \lemref{lifting}, we have $\sem{s_1} \subseteq \sem{s_2}$.
\end{lemma}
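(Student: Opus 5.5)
We need $\semvar{\mathtt{For}(v\in[\mathtt{min},\mathtt{max}])\,s(v)} \subseteq \semvar{\text{RHS}}$. The plan is to go through the structure layer and then use the run semantics of the variable layer. We work word by word. Let $B(v) \equiv \assm(v \le \mathtt{max})\cdot s(v)\cdot(v \leftarrow \mathtt{next}(v))$ be the loop body. Take any $w \in \semSig{\text{LHS}}$. Then $w = (v\leftarrow \mathtt{min})\, w_1 \cdots w_n\, \assm(v>\mathtt{max})$ with each $w_i \in \semSig{B(v)}$. It suffices to show $\runl(w,\sigma_0,c_0) \subseteq \semvar{\text{RHS}}$, and this is only interesting when $\runl(w,\sigma_0,c_0)\neq\emptyset$.

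The key invariant concerns the value of $v$ before the $i$-th iteration. Because the body $s$ never assigns to $v$ (the syntactic side condition in \secref{node-model}), and assignments and actions other than $v \leftarrow \cdot$ leave $\sigma(v)$ unchanged, the value of $v$ on entering $w_i$ in any nonempty resolution is $\mathtt{next}^{i-1}(\mathtt{min})$. This is an induction on $i$ along the definition of $\runl$. The tests $\assm(v\le\mathtt{max})$ and the final $\assm(v>\mathtt{max})$ must all pass. By finiteness and the linear order from $\thy$, this forces $n = |\mathtt{N}_N|$, with iteration $i$ run at the $i$-th node. The chosen $k$ is some node in $[\mathtt{min},\mathtt{max}]$, so $k = \mathtt{next}^{j-1}(\mathtt{min})$ for a unique $j \le n$. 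We then split $w$ as $w_{<j}\cdot w_j \cdot w_{>j}$ and build a matching RHS word from three pieces.
\begin{itemize}
\item The first piece is $(v\leftarrow\mathtt{min})\, w_1\cdots w_{j-1}\,\assm(v>\mathtt{prev}(k))$, which lies in the structure of the first $\mathtt{For}$.
\item The middle piece is $(v\leftarrow k)\cdot u_j$, where $w_j = \assm(v\le\mathtt{max})\, u_j\,(v\leftarrow\mathtt{next}(v))$ and $u_j\in\semSig{s(v)}$.
\item The last piece is $(v\leftarrow\mathtt{next}(k))\, w'_{j+1}\cdots w'_n\,\assm(v>\mathtt{max})$, where $w'_i$ is $w_i$ with its guard tested against the same bound $\mathtt{max}$.
\end{itemize}

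Next we show that the resolution of this RHS word emits exactly the same runs as $w$. The scenario state differs from that of $w$ only by inserted assignments to $v$, and each inserted assignment writes the value $v$ already holds at that point. The removed $(v\leftarrow\mathtt{next}(v))$ at the end of iteration $j$ is likewise replaced by $(v\leftarrow\mathtt{next}(k))$, which has the same value. The inserted tests $\assm(v>\mathtt{prev}(k))$ hold because $v=k$ there. The dropped test $\assm(v\le\mathtt{max})$ before $u_j$ held in $w$ anyway, so dropping it loses no runs. The configuration is untouched by assignments and tests. So, by induction along the word, both resolutions produce the same joint states at corresponding action terms and hence the same action sequence. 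This gives $\runl(w,\sigma_0,c_0)\subseteq\semvar{\text{RHS}}$.

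I expect the main obstacle to be the edge cases and the exit test of the prefix loop. When $k=\mathtt{min}$, the prefix range $[\mathtt{min},\mathtt{prev}(\mathtt{min})]$ is empty, and we need the convention that $\mathtt{For}$ over an empty range is equivalent to $v\leftarrow\mathtt{min}$ followed by a passing exit test. When $k=\mathtt{max}$, $\mathtt{next}(\mathtt{max})$ must be a sentinel greater than $\mathtt{max}$, consistent with the exit test of the original loop. Also, the sub-loops use bounds $\mathtt{prev}(k)$ and $\mathtt{max}$, so the guard in the prefix body is $\assm(v\le\mathtt{prev}(k))$ rather than $\assm(v\le\mathtt{max})$. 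This is fine because, for the first $j-1$ iterations, $v<k$ holds and both guards agree. I would handle these cases by fixing $\mathtt{next}$ and $\mathtt{prev}$ to have sentinel values in $\thy$ and checking them directly. Finally, the inclusion is stated at the variable layer, so \lemref{lifting} yields $\sem{\cdot}$ inclusion.
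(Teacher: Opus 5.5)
Your proposal is correct and is essentially the argument the paper relies on. The paper states this lemma without an explicit proof and justifies the rule only by the side condition that the body never assigns to the iterator. Your word-by-word simulation makes that precise: you track $v = \mathtt{next}^{i-1}(\mathtt{min})$ on entry to the $i$-th iteration, then re-bracket the word using inserted assignments that write the value $v$ already holds and tests that still pass. You also correctly surface the sentinel conventions $\mathtt{prev}(\mathtt{min}) < \mathtt{min}$ and $\mathtt{next}(\mathtt{max}) > \mathtt{max}$, which the paper leaves implicit.

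One small fix: the guard rewriting $\assm(v\le\mathtt{max})\mapsto\assm(v\le\mathtt{prev}(k))$ belongs to the prefix iterations $w_1,\dots,w_{j-1}$, as your final paragraph says. It does not belong to the suffix, where $w'_i = w_i$.
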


\myparagraph{Feasibility rule (\textsc{Disabled})}
The feasibility layer prunes a scenario whose appended action can never fire.
By the $\mathrm{wp}$ calculation of \secref{correctness}, ``$a$ is dead after prefix
$\mathit{pre}$'' is a checking that the postcondition $\neg\,\mathrm{en}_a$ holds at the end
of $\mathit{pre}$, evaluated from the initial state $(\sigma_0, c_0)$.
When it holds, the whole run through $a$ is infeasible, so the scenario is $\mathbf{0}$:
\[
    \textsc{Disabled}\quad
    \frac{
          (\sigma_0, c_0) \models \mathrm{wp}_{\mathit{pre}}(\neg\, \mathrm{en}_a)}
         {\mathit{pre} \cdot a \cdot \mathit{post} \;\incfeas\; \mathbf{0}}
\]
The premise is context-sensitive: it speaks about the whole prefix $\mathit{pre}$ from the initial
state, not about $a$ alone.

\begin{lemma}[Soundness of Feasibility rule]
    \lemlabel{soundness-disabled}
    For two scenarios $s_1, s_2$, if $s_1 \incfeas s_2$ is derivable by the \textsc{Disabled} rules, then $\semfeas{s_1} \subseteq \semfeas{s_2}$.
    By \lemref{lifting}, we have $\sem{s_1} \subseteq \sem{s_2}$.
\end{lemma}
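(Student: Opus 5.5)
The plan is to show $\semfeas{\mathit{pre}\cdot a\cdot\mathit{post}} = \emptyset$. The inclusion into $\semfeas{\mathbf{0}}=\emptyset$ is then trivial, and \lemref{lifting} yields $\sem{s_1}\subseteq\sem{s_2}$, since the commutativity layer only closes the feasibility output under $\theta_I$ and ${\downarrow}_{\theta_I}(\emptyset)=\emptyset$.

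First I unfold the structure layer. Every word of $\semSig{\mathit{pre}\cdot a\cdot\mathit{post}}$ has the form $w_1\, a\, w_2$, where $w_1\in\semSig{\mathit{pre}}$ and $w_2\in\semSig{\mathit{post}}$. Here $a$ is the action term $a(e_n,e_m)$. Next I use the definition of $\runl$, proving by induction on $w_1$ that $\runl(w_1 a w_2,\sigma_0,c_0)$ is a union of sets of the form $\pi\cdot\alpha\cdot\runl(w_2,\sigma',c')$. In each such set, $\pi\in\runl(w_1,\sigma_0,c_0)$ is the run emitted along $w_1$, $(\sigma,c)$ is the joint state reached at its end (with $c=\reach(c_0,\pi)$), $\alpha=a_{\sigma,c}$, and $c\xrightarrow{\alpha}c'$. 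Failed $\assm$ steps contribute $\emptyset$, so they can be ignored.

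The key step is an auxiliary soundness lemma for the wp calculus of \secref{correctness}. It says that if $(\sigma_0,c_0)\models\mathrm{wp}_{w}(Q)$, with $w$ a word of $\semSig{\mathit{pre}}$, and the run emitted along $w$ is feasible and reaches joint state $(\sigma,c)$, then $Q(\sigma,c)$ holds. The wp of $\mathit{pre}$ implies the wp of each such word; I intend to prove this by structural induction on $\mathit{pre}$. For $+$, the conjunction gives both branches. For $\cdot$, I compose. For $\mathtt{havoc}$, the universal quantification covers each assignment $v\leftarrow d$. For $s^*$, the loop invariant $J$ satisfies $J\Rightarrow\mathrm{wp}_{s_1}(J)$, so by induction on the number of unrollings $J$ holds after any $n$ iterations, and $J\Rightarrow Q$ then gives $Q$. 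For an action term the guard $\mathrm{en}_t\Rightarrow$ matters, and here feasibility of the emitted run is exactly what discharges it. A test discharges its guard because the surviving words satisfy $b$.

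Instantiating the lemma with $Q=\neg\mathrm{en}_a$ gives that at $(\sigma,c)$ the resolved $\alpha$ is not enabled at $c=\reach(c_0,\pi)$. Hence any run $\pi\alpha\pi'$ produced as above is infeasible: either $\pi$ is already infeasible, or $\pi$ is feasible and step $\alpha$ fails its guard. Either way it is not in $\fexec_I$, so the feasibility cut removes everything and $\semfeas{s_1}=\emptyset$. I expect the main obstacle to be the wp-soundness induction, particularly the iteration case with a user-supplied invariant and the bookkeeping that feasibility of a prefix is what licenses the enabledness hypotheses. I plan to handle this by stating the lemma for partial runs that are feasible so far, which is available because $\fexec_I$ is prefix-closed.
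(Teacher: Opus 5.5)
Your proposal is correct and follows the paper's own justification for \textsc{Disabled}: the premise, read through the soundness of the $\mathrm{wp}$ calculus (the content the paper packages in its correctness-via-Hoare-logic lemma), makes the resolved action disabled at the end of every feasible prefix. Hence every resolved run is removed by the feasibility cut, $\semfeas{s_1}=\emptyset$, and this lifts because ${\downarrow}_{\theta_I}(\emptyset)=\emptyset$. The paper gives no written proof of this lemma, and your induction supplies the missing details: the loop case by unrolling the invariant, and the action case by feasibility of the prefix.
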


\myparagraph{Term-manipulation rules}
We keep our focus on the feasibility layer. 
Three rules manipulate the scenario terms on $s_1$ to match the shape of $s_2$.
Remember that the assignment and test terms are \emph{ghost} updates and are resolved in the variable layer, so they are no longer visible at a higher layer,
which provides the freedom to insert them as long as their writes are never observed by the rest of the scenario.

\textsc{Assign-Intro} inserts a \emph{benign} ghost assignment $v \leftarrow f$ at a chosen
point (e.g.\ absorbing an appended $\mathtt{Init}$ into $P_1$ by adding $j \leftarrow i$).
Benign means the written value is never \emph{observed} by tests or actions.
We check this by a \emph{taint analysis}:
The inserted $v$ starts tainted; 
an assignment $y \leftarrow g$ taints $y$ if $g$ reads a
tainted variable and cleans $y$ otherwise; 
actions and tests are the observation points and must read no tainted variable.

\begin{definition}[Taint safety]
\deflabel{taint-safe}
For a taint set $T \subseteq \scvars$ and a scenario $s$, the predicate $\mathrm{safe}(T, s)$,
together with the output taint set $\mathrm{out}(T, s)$, is defined by recursion on $s$:
\begin{itemize}[nosep]
    \item an action or a test $u$: $\mathrm{safe}(T, u)$ iff $u$ reads no variable in $T$;
    $\mathrm{out}(T, u) = T$;
    \item an assignment $y \leftarrow g$: always safe;
    $\mathrm{out}(T, y \leftarrow g) = T \cup \{y\}$ if $g$ reads a variable in $T$,
    and $T \setminus \{y\}$ otherwise;
    \item $\mathtt{havoc}(y)$: always safe; $\mathrm{out}(T, \mathtt{havoc}(y)) = T \setminus \{y\}$;
    \item $\mathbf{1}$, $\mathbf{0}$: always safe, with output taint $T$;
    \item $s_1 \cdot s_2$: $\mathrm{safe}(T, s_1)$ and $\mathrm{safe}(\mathrm{out}(T, s_1), s_2)$,
    with output taint $\mathrm{out}(\mathrm{out}(T, s_1), s_2)$;
    \item $s_1 + s_2$: safe on both branches, with output taint
    $\mathrm{out}(T, s_1) \cup \mathrm{out}(T, s_2)$;
    \item $s_1^{*}$: let $T^{*}$ be the least set with $T \subseteq T^{*}$ and
    $\mathrm{out}(T^{*}, s_1) \subseteq T^{*}$, which exists since $\mathrm{out}$ is monotone
    over the finite subsets of $\scvars$; then $\mathrm{safe}(T, s_1^{*})$ iff
    $\mathrm{safe}(T^{*}, s_1)$, with output taint $T^{*}$.
\end{itemize}
\end{definition}

If $\mathrm{safe}(\{v\}, s)$, 
then $v$ never affects tests or actions in $s$, 
so we can insert a ghost assignment $v \leftarrow f$ before $s$ without changing the feasible runs it generates.
\[
    \textsc{Assign-Intro}\quad
    \frac{\mathrm{safe}(\{v\}, \mathit{post})}
         {\mathit{pre} \cdot \mathit{post} \;\incfeas\; \mathit{pre} \cdot (v \leftarrow f) \cdot \mathit{post}}
\]

\textsc{Cond-Intro} splits a run on a state predicate $r$, introducing the matching test so that a subsequent case analysis can proceed on each branch.
This rule is sound because the two branches cover all possibilities, and enjoys congruence as it does not rely on the entry state:
\[
    \textsc{Cond-Intro}\quad
    \frac{}{\ s \;\equiv_f\; (\assm(r) +\assm(\neg r))\cdot s\ }
\]

\textsc{Action-Equiv.} 
When proving $s\cdot a \inctop s$, the appended action $a$ names \emph{concrete} parameters, 
whereas the scenario $s$ describes its actions through scenario variables.
In ring leader election, for instance, the appended $\mathtt{Init}(k)$ acts at the concrete node $k$, 
while the scenario's term $\mathtt{Init}(i)$ takes its node from the loop variable $i$.
The two are not syntactically equal, 
but at the entry state where the $\mathtt{For}$ has reached iteration $k$ (by $\mathtt{For}$ unfolding, we have $i = k$), 
they resolve to the same move.
\textsc{Action-Equiv} lets us replace one by the other,
when the entry state satisfies the precondition that equates the concrete and scenario parameters:
\[
    \textsc{Action-Equiv}\quad
    \frac{(\sigma_0, c_0)\models\mathrm{wp}_{pre}\bigl(\lambda (\sigma, c).\ \evalx{e_n}{\sigma, c} = \evalx{e_n'}{\sigma, c} \wedge \evalx{e_m}{\sigma, c} = \evalx{e_m'}{\sigma, c}\bigr)}
         { pre\cdot a(e_n, e_m)\cdot post \incfeas pre\cdot a(e_n', e_m')\cdot post}
\]

\begin{lemma}[Soundness of Term-manipulation rules]
    \lemlabel{soundness-term-manipulation}
    For two scenarios $s_1, s_2$, if $s_1 \incfeas s_2$ is derivable by the term-manipulation rules, then $\semfeas{s_1} \subseteq \semfeas{s_2}$.
    By \lemref{lifting}, we have $\sem{s_1} \subseteq \sem{s_2}$.
\end{lemma}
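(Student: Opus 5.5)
The plan is to treat the three term-manipulation rules separately, in each case going through the definition $\semfeas{L} = \semvar{L} \cap \fexec_I$. It will suffice to show that every feasible run resolved from a word of $\semSig{s_1}$ is also resolved from some word of $\semSig{s_2}$; the lift to $\sem{\cdot}$ then follows from \lemref{lifting}, since ${\downarrow}_{\theta_I}$ is monotone. For every rule, a word of $\semSig{s_1}$ splits as $w = u\,x\,u'$, with $u \in \semSig{\mathit{pre}}$, $u' \in \semSig{\mathit{post}}$ and $x$ the middle piece (empty for \textsc{Assign-Intro}). Because $\runl$ is defined by structural recursion on the word, resolution of the common prefix $u$ is identical on both sides. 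So we only need to compare resolutions from the joint state $(\sigma, c)$ reached after $u$.

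\textsc{Cond-Intro} is immediate and context-free. Let $w \in \semSig{s}$ be resolved from $(\sigma, c)$. Either $(\sigma, c) \models r$ or $(\sigma, c) \models \neg r$, so either $\assm(r)\,w$ or $\assm(\neg r)\,w$ lies in $\semSig{(\assm(r)+\assm(\neg r))\cdot s}$. Its resolution equals $\runl(w, \sigma, c)$, because a passing test leaves the state unchanged. The reverse inclusion holds because each such word resolves either to $\emptyset$ or to $\runl(w,\sigma,c)$.

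\textsc{Action-Equiv}. Take a feasible run $\pi$ resolved from $u\,a(e_n,e_m)\,u'$, and let $(\sigma,c)$ be the joint state after $u$. The run $\pi$ is feasible, and the $\mathrm{wp}$ semantics quantifies only over enabled action prefixes and passing tests. Therefore the premise $(\sigma_0,c_0)\models \mathrm{wp}_{\mathit{pre}}(\ldots)$ gives equality of the evaluated parameters at $(\sigma,c)$. To make this precise, I would first state and prove by induction on $\mathit{pre}$ an auxiliary fact: if $(\sigma_0,c_0)\models \mathrm{wp}_{\mathit{pre}}(Q)$ and $u \in \semSig{\mathit{pre}}$ resolves feasibly to $(\sigma,c)$, then $Q(\sigma,c)$ holds. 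The same fact underlies \lemref{correctness-hoare}; the star case uses the loop-invariant clause. With equal parameters, both terms resolve to the same instantiated action $\alpha$ and the same successor $c'$. The suffix $u'$ is then resolved from identical joint states, so the two runs coincide.

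\textsc{Assign-Intro} is the main obstacle. I would prove a taint-simulation lemma by induction on $s$. Suppose $\sigma_1$ and $\sigma_2$ agree outside $T$ and $\mathrm{safe}(T,s)$ holds. Then for every $w_1 \in \semSig{s}$ there is a $w_2 \in \semSig{s}$ such that $\runl(w_1,\sigma_1,c)$ and $\runl(w_2,\sigma_2,c)$ emit the same action sequence. Moreover, the final scenario states again agree outside $\mathrm{out}(T,s)$, and the configurations are equal. In fact $w_2 = w_1$ should work:
\begin{itemize}[nosep]
\item Actions and tests read no variable of $T$, so they evaluate identically.
\item An assignment $y \leftarrow g$ preserves agreement outside the updated taint set.
\item $\mathtt{havoc}$ picks the same constant on both sides.
\item Sequencing and choice follow from the induction hypothesis.
\item The star case needs an inner induction on the number of iterations, using $\mathrm{out}(T^*,s_1)\subseteq T^*$ and the monotonicity of $\mathrm{out}$.
\end{itemize}
A subtlety is that resolutions may be sets because of failed tests, so I would phrase the lemma as equality of the resolved run sets for each word. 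The rule then follows by taking $\sigma_2 = \sigma[v\mapsto\evalx{f}{\sigma,c}]$ and $T=\{v\}$. Since $\runl$ of the assignment just updates $\sigma$, the runs of $\mathit{pre}\cdot\mathit{post}$ coincide with those of $\mathit{pre}\cdot(v\leftarrow f)\cdot\mathit{post}$; intersecting both with $\fexec_I$ gives the claimed inclusion.
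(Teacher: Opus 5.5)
Your proposal is correct and follows the same per-rule reasoning the paper gives informally next to each rule: \textsc{Cond-Intro} holds because the two tests partition every entry state, \textsc{Action-Equiv} because the $\mathrm{wp}$ premise equates the resolved parameters along every feasible prefix, and \textsc{Assign-Intro} because taint safety makes the inserted ghost write unobservable to later tests and actions. The paper states the lemma without a separate proof, so your $\mathrm{wp}$-soundness fact (generalized to an arbitrary start state for the induction) and your taint-simulation lemma are exactly what is needed to make those justifications rigorous.
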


\myparagraph{Commutativity rule (\textsc{Commute})}
The last rule realigns terms in the left-hand side into the order of the right-hand side:
it commutes a single term $t$ leftward across a whole scenario $s$, within an explicit context,
\[
    \mathit{pre} \cdot s \cdot t \cdot \mathit{post}
    \;\inccom\;
    \mathit{pre} \cdot t \cdot s \cdot \mathit{post}.
\]
Here, we generalize semi-commutativity at the level of \emph{command terms} on top of $\theta_I$ that is defined on \emph{actions}, i.e.,
$t$ can be any term, including a protocol action, assignment, or test.
Moreover, we do not explicitly interpret $s$ as action sequences and purely rely on action-level independence judgments $\theta_I$, but check semi-commutation on the syntax of $s$ directly.
The semi-independence of two command terms either comes from the protocol's semi-commutation $\theta_I$ (for two protocol actions) or from the fact that one term does not affect the other (for an action and an assignment, or a test and an assignment, etc.).
Since terms read both the scenario state and the configuration, while assignments write the former and actions the latter,
independence requires that each term's reads survive the other's write.
For a command term $u$ and a joint state $(\sigma, c)$, we write $(\sigma, c)[u]$ for the joint state after $u$:
$(\sigma[v \mapsto \evalx{e}{\sigma, c}], c)$ for an assignment $u = (v \leftarrow e)$,
$(\sigma, c')$ with $c \xrightarrow{u_{\sigma, c}} c'$ for a protocol action,
and $(\sigma, c)$ itself for a test.

\begin{definition}[Term independence]
\deflabel{term-indep}
For command terms $u, t \in \Sigma$ and a joint state $(\sigma, c)$, $t$ is \emph{independent}
of $u$ at $(\sigma, c)$, written $\mathrm{indep}_{\sigma, c}(u, t)$, by case analysis on the kinds:
\begin{itemize}[nosep]
    \item two protocol actions: each resolves the same after the other fires,
    $t_{(\sigma, c)[u]} = t_{\sigma, c}$ and $u_{(\sigma, c)[t]} = u_{\sigma, c}$,
    and $(u_{\sigma, c},\, t_{\sigma, c}) \in \theta_I$;
    \item a protocol action $a$ and an assignment $v \leftarrow e$ (in either order): the
    resolution of $a$ is unaffected by the assignment, $a_{(\sigma, c)[v \leftarrow e]} = a_{\sigma, c}$,
    and the assigned value is unaffected by the action, $\evalx{e}{(\sigma, c)[a]} = \evalx{e}{\sigma, c}$;
    \item a protocol action $a$ and a test $\assm(b)$ (in either order): the truth of $b$ is
    unaffected by the action, $(\sigma, c) \models b \Leftrightarrow (\sigma, c)[a] \models b$;
    \item a test $\assm(b)$ and an assignment $v \leftarrow e$ (in either order): the truth
    of $b$ is unaffected by the assignment, $(\sigma, c) \models b \Leftrightarrow (\sigma, c)[v \leftarrow e] \models b$;
    \item two assignments: the updates commute, $(\sigma, c)[u][t] = (\sigma, c)[t][u]$;
    \item two tests: always independent.
\end{itemize}
Every condition that mentions a protocol action is required only when that action is enabled:
a run through a disabled action is removed by the feasibility cut, so nothing is owed about it.
\end{definition}

Independence then lifts from terms to scenarios by an \emph{inductively defined} judgment
$\{\psi\}\ t \bowtie s$, read ``$t$ commutes with $s$ under $\psi$'', following the
structure of $s$ by $\mathrm{wp}$:
\[
    \frac{\forall (\sigma, c) \models \psi.\ \mathrm{indep}_{\sigma, c}(u, t)}
         {\{\psi\}\ t \bowtie u}\ \;(u \in \Sigma)
    \qquad\quad
    \frac{}{\{\psi\}\ t \bowtie \mathbf{1}}
    \qquad\quad
    \frac{}{\{\psi\}\ t \bowtie \mathbf{0}}
\]
\[
    \frac{\forall d \in \dom_v.\ \{\psi\}\ t \bowtie (v \leftarrow d)}
         {\{\psi\}\ t \bowtie \mathtt{havoc}(v)}
\]
\[
    \frac{\{\psi\}\ t \bowtie s_1,\; \{\psi'\}\ t \bowtie s_2,\;\psi \Rightarrow \mathrm{wp}_{s_1}(\psi')}
         {\{\psi\}\ t \bowtie s_1 \cdot s_2}\ \;
    \qquad
    \frac{\{\psi\}\ t \bowtie s_1,\; \{\psi\}\ t \bowtie s_2}
         {\{\psi\}\ t \bowtie s_1 + s_2}
    \qquad
    \frac{\{\psi\}\ t \bowtie s_1,\;\psi \Rightarrow \mathrm{wp}_{s_1}(\psi)}
         {\{\psi\}\ t \bowtie s_1^{*}}\
\]

On the scenario level, semi-commutativity is leveraged into contextual reasoning (similar to~\cite{farzan-safety}): an action term $\mathtt{Init}(i)$ can only be commuted with $\mathtt{Init}(j)$ when $i \ne j$, where the concrete actions rely on the scenario variables $i$ and $j$.

The commutativity rule is then a single guarded rewrite: $t$ commutes with the scenario $s$.
\[
    \textsc{Commute}\quad
    \frac{(\sigma_0, c_0) \models \mathrm{wp}_{\mathit{pre}}(\psi)
          \qquad
          \{\psi\}\ t \bowtie s}
         {\mathit{pre} \cdot s \cdot t \cdot \mathit{post}
          \;\inccom\;
          \mathit{pre} \cdot t \cdot s \cdot \mathit{post}}
\]

\begin{lemma}[Soundness of the commutativity rule]
\lemlabel{soundness-comm}
If $\mathit{pre} \cdot s \cdot t \cdot \mathit{post} \inccom \mathit{pre} \cdot t \cdot s \cdot \mathit{post}$
is derivable by \textsc{Comm}, then
$\sem{\mathit{pre} \cdot s \cdot t \cdot \mathit{post}} \subseteq \sem{\mathit{pre} \cdot t \cdot s \cdot \mathit{post}}$.
\end{lemma}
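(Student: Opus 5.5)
Since $\sem{\cdot}$ is ${\downarrow}_{\theta_I}$-closed and ${\downarrow}_{\theta_I}$ is idempotent and monotone, it suffices to show that every feasible run produced directly by the left-hand side lies in ${\downarrow}_{\theta_I}$ of the feasible runs of the right-hand side, i.e.\ $\semfeas{\mathit{pre}\cdot s\cdot t\cdot\mathit{post}} \subseteq {\downarrow}_{\theta_I}\bigl(\semfeas{\mathit{pre}\cdot t\cdot s\cdot\mathit{post}}\bigr)$. Taking ${\downarrow}_{\theta_I}$ of both sides and intersecting with $\fexec_I$ then yields the claim. So fix a word $w = w_p\, w_s\, t\, w_q$ with $w_p\in\semSig{\mathit{pre}}$, $w_s\in\semSig{s}$ and $w_q\in\semSig{\mathit{post}}$ whose resolution $\pi$ from $(\sigma_0,c_0)$ is feasible. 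Let $(\sigma_1,c_1)$ be the joint state after $w_p$. The premise $(\sigma_0,c_0)\models\mathrm{wp}_{\mathit{pre}}(\psi)$ gives $(\sigma_1,c_1)\models\psi$, because the run through $w_p$ is feasible and passes its tests.

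The core is a term-level swap lemma, proved by case analysis on Definition~\ref{def:term-indep}. If $\mathrm{indep}_{\sigma,c}(u,t)$ holds and $u\,t$ is feasible from $(\sigma,c)$, then $t\,u$ is also feasible from $(\sigma,c)$, reaches the same joint state, and its resolved action sequence is either identical to that of $u\,t$ (when at least one of $u,t$ is a ghost term) or is the single $\theta_I$-swap of it (when both are actions). Two points need care. First, feasibility of $t$ before $u$ in the action/action case comes from soundness of $\theta_I$ (Definition~\ref{def:sound-theta}); this should be an explicit hypothesis of the lemma, as in Lemmas~\ref{lem:closedness-reduction} and~\ref{lem:correctness-reduction}. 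Second, in the mixed cases, preservation of resolutions and test truth values gives the same joint state. This works because assignments touch only $\sigma$, actions touch only $c$, and the independence conditions guarantee that neither write is observed by the other term.

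Next I lift the swap lemma to scenarios by induction on the derivation of $\{\psi\}\,t\bowtie s$, with the statement: for every $w_s\in\semSig{s}$ and every $(\sigma,c)\models\psi$ such that $w_s\,t$ resolves feasibly from $(\sigma,c)$, the word $t\,w_s$ resolves feasibly from $(\sigma,c)$ to the same joint state, and the resolved run of $w_s t$ rewrites by $\rew_{\theta_I}^{*}$ to that of $t w_s$.
\begin{itemize}
\item Base cases: the single-term case is the swap lemma; the cases $\mathbf{1}$ and $\mathbf{0}$ are trivial; $\mathtt{havoc}$ reduces to the chosen assignment.
\item Sequencing: split $w_s = w_1 w_2$. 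Move $t$ first across $w_2$, using $\psi'$, which holds after $w_1$ by $\psi\Rightarrow\mathrm{wp}_{s_1}(\psi')$ and feasibility. Then move it across $w_1$, whose feasibility with $t$ appended now follows from the first step, and compose the rewrites.
\item Choice: immediate.
\item Iteration: induct on the number of unrollings, using $\psi\Rightarrow\mathrm{wp}_{s_1}(\psi)$ to keep the invariant at each boundary.
\end{itemize}
Because $t$ moves leftwards, the rewrite direction matches $\rew_{\theta_I}$: we rewrite $w_s t$ into $t w_s$. This is exactly the direction needed for $w_s t \in{\downarrow}_{\theta_I}(\{t w_s\})$.

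Finally, apply the lifted lemma at $(\sigma_1,c_1)$. Since $w_p\,t\,w_s$ reaches the same joint state as $w_p\,w_s\,t$, the suffix $w_q$ resolves identically and stays feasible. Hence the resolution of $w_p\,t\,w_s\,w_q$ is feasible and lies in $\semfeas{\mathit{pre}\cdot t\cdot s\cdot\mathit{post}}$. Because $\rew_{\theta_I}$ is closed under the surrounding context $w\cdot\_\cdot w'$, the original run $\pi$ rewrites to it, which completes the inclusion.

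The main obstacle I expect is the sequencing and iteration cases. There, feasibility of the intermediate mixed word, with $t$ placed in the middle of $w_s$, has to be re-established at each step, and the conditions of Definition~\ref{def:term-indep} are only guaranteed at states where the relevant action is enabled. I would handle this by always carrying feasibility of the current word as an invariant of the induction, and by using feasibility preservation under each single swap to rebuild it after every move.
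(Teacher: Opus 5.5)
Your proof is correct and is the argument the \textsc{Commute} rule is built for; the paper states this lemma without proof. You derive a term-level swap from Definition~\ref{def:term-indep} and soundness of $\theta_I$, lift it by induction on the derivation of $\{\psi\}\ t \bowtie s$ with $\mathrm{wp}$-soundness supplying $\psi$ and $\psi'$ at the intermediate states of the feasible run, and then close under the context $\mathit{pre}\cdot[\,\cdot\,]\cdot\mathit{post}$ and under ${\downarrow}_{\theta_I}$. You are also right on two further points: soundness of $\theta_I$ is a genuinely necessary hypothesis, which the paper treats as a standing assumption (with $\mathit{pre}=\mathit{post}=\mathbf{1}$ and single action terms $s$, $t$ resolving to $a$, $b$ with $(a,b)\in\theta_I$, $ab$ feasible and $ba$ not, the inclusion fails), and your swap lemma needs the ``only when enabled'' clause of Definition~\ref{def:term-indep} to be read as owing the conditions whenever the left-hand order $u\,t$ is feasible.
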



\begin{lemma}[Soundness of the calculus]
    Given two scenarios $s_1, s_2$, if $s_1 \inctop s_2$ is derivable by the calculus above, then $\sem{s_1} \subseteq \sem{s_2}$.
\end{lemma}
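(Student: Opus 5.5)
The proof goes by induction on the derivation of $s_1 \inctop s_2$. Its base cases are the individual rules, whose soundness has already been established layer by layer. Its inductive cases are the structural closure principles the calculus uses implicitly: reflexivity, transitivity (chaining rewrites), and in-place rewriting of subterms under contexts.

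\textbf{Base cases.} For each rule instance we obtain $\sem{s_1} \subseteq \sem{s_2}$ as follows. Kleene-algebra rules use \lemref{soundness-structure}. \textsc{Unfold-For} uses \lemref{soundness-unfold-for}. \textsc{Disabled} uses \lemref{soundness-disabled}. \textsc{Assign-Intro}, \textsc{Cond-Intro} and \textsc{Action-Equiv} use \lemref{soundness-term-manipulation}. \textsc{Commute} uses \lemref{soundness-comm}. In each case the lower-layer inclusion is transported to the top layer by \lemref{lifting}. Reflexivity is trivial. Transitivity follows because set inclusion is transitive, applied to the induction hypotheses for the two premises.

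\textbf{Contexts.} This is the main obstacle. The paper warns that only the structure layer is a precongruence, since the higher-layer rules depend on the entry state. My plan is not to prove a general congruence lemma for $\sem{\cdot}$, which is false in general. Instead, I would observe that every context-sensitive rule is already stated with an explicit prefix $\mathit{pre}$ and suffix $\mathit{post}$ that range over the whole scenario. Its premise is evaluated from $(\sigma_0, c_0)$ through $\mathit{pre}$, so each such rule is applied to the entire left-hand scenario rather than to a subterm. Only \textsc{Cond-Intro} and the structure rules are applied inside arbitrary contexts, and for these I would argue directly:
\begin{itemize}
\item For the structure rules, precongruence of $\incSig$ at the $\semSig{\cdot}$ level gives $\semSig{C[s_1]} \subseteq \semSig{C[s_2]}$, and lifting finishes the case.
\item For \textsc{Cond-Intro}, I would show $\semvar{\cdot}$ is unchanged by inserting $(\assm(r)+\assm(\neg r))$ anywhere. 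Under $\runl$, exactly one branch passes at every joint state and does not alter it. Hence $\runl$ of $C[s]$ and of $C[(\assm(r)+\assm(\neg r))\cdot s]$ produce the same runs, which gives variable-layer equality, and lifting again finishes the case.
\end{itemize}

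\textbf{Remaining subtlety.} The context-sensitive rules are assumed to appear only at top level, with $\mathit{pre}$ and $\mathit{post}$ possibly empty. If a derivation seems to use one of them under a $+$ or $\cdot$, I would first normalize it with \textsc{Distrib} and \textsc{Assoc} so that the rewritten subterm becomes a top-level summand. I would then use $\sem{s+s'} = \sem{s}\cup\sem{s'}$, which holds because every layer distributes over union, so the inclusion transfers summand-wise. With this, the induction closes and the calculus is sound.
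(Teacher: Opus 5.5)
Your skeleton matches what the paper relies on: per-rule lemmas transported by \lemref{lifting}, chained by transitivity of $\subseteq$, with $+$ handled via $\sem{s+s'}=\sem{s}\cup\sem{s'}$. The paper states this lemma without a separate argument, and your explicit treatment of contexts is more careful than anything it writes down. Your \textsc{Cond-Intro} argument is also right.

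The context analysis has a hole, however: \textsc{Unfold-For}. You list it only as a top-level base case, and you assert that only \textsc{Cond-Intro} and the structure rules are ever used under a context. \textsc{Unfold-For} has no $\mathit{pre}/\mathit{post}$ slots, yet the paper applies it under contexts in its case studies:
\begin{itemize}
\item to the first factor of $S\cdot\mathtt{Init}(k)$ in ring leader election;
\item to $S_{\mathrm{pos}}$ inside $S_{\mathrm{pos}}\cdot a\cdot S_{\mathrm{neg}}$ in Paxos;
\item to $P_2$ inside $P_1 P_2\cdot\mathtt{Success}(p,m)$ in two-phase commit.
\end{itemize}
\lemref{soundness-unfold-for} only gives $\semvar{\cdot}$-inclusion from $(\sigma_0,c_0)$, and that does not survive a context. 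The layer $\semvar{\cdot}$ records runs but forgets the exit scenario state: $(v\leftarrow 1)$ and $(v\leftarrow 2)$ both denote $\{\epsilon\}$, yet the suffix $\assm(v=1)\cdot t$, with $t$ an action term, separates them. A nonempty prefix also changes the entry state. No \textsc{Distrib}/\textsc{Assoc} normalization helps, because the rule's left-hand side must literally be a $\mathtt{For}$ term.

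The missing step is a relational version of the rule. From \emph{every} entry joint state $(\sigma,c)$, each left-hand word that resolves under $\runl$ should yield the same run and the same exit joint state as some right-hand word. This holds for three reasons:
\begin{itemize}
\item the tests admit only the iterations $\mathtt{min},\dots,\mathtt{max}$, in order;
\item the body never writes $v$, which is the paper's syntactic side condition;
\item $k$ is a constant, as in all uses.
\end{itemize}
So each body copy starts from the same joint state on both sides, and both sides end with $v$ past $\mathtt{max}$. Preserving the exit state from arbitrary entry states is exactly what your \textsc{Cond-Intro} argument exploits, so the same induction on the context then covers \textsc{Unfold-For}. The argument in fact gives equality, including the folding direction that the case studies tacitly use when they write $C[\mathtt{voteChain}(v)] = S$. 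With this case added, your induction closes.
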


\begin{example}[Ring leader election: coverage]
\exlabel{rle-coverage}
By \lemref{closedness-reduction}, coverage is the base obligation $\sem{\mathbf{1}} \subseteq \sem{S}$, together
with one absorption obligation $S \cdot a \inctop S$ per action, where $S \triangleq \mathtt{scenario}_{\mathtt{RLE}}$.
The former is trivial, as every $\mathtt{PhaseChain}$ contains the empty prefix $\mathbf{1}$.
We discharge the two action types by rewriting the left-hand side down to $S$ using the calculus above.

\emph{Appending $\mathtt{Init}(k)$.}
Write $S_{<k} \triangleq \mathtt{For}(i\in[\mathtt{min},\mathtt{prev}(k)])\,\mathtt{voteChain}(i)$
and $S_{>k} \triangleq \mathtt{For}(i\in[\mathtt{next}(k),\mathtt{max}])\,\mathtt{voteChain}(i)$ for the chains below and above $k$.
We peel the outer $\mathtt{For}$ at $k$ and commute the appended action leftward across the tail iterations, 
which by \exref{rle-theta} carry only ids above $k$ and are hence $\theta$-independent of $\mathtt{Init}(k)$:
\[
\begin{aligned}
    S \cdot \mathtt{Init}(k)
    &\inctop\; S_{<k} \cdot (i \leftarrow k)\cdot\mathtt{voteChain}(k) \cdot S_{>k} \cdot \mathtt{Init}(k) && ({\rulename{Unfold-For}})\\
    &\inctop\; S_{<k} \cdot (i \leftarrow k)\cdot\mathtt{voteChain}(k) \cdot \mathtt{Init}(k) \cdot S_{>k} && ({\rulename{Commute}})
\end{aligned}
\]
Now $\mathtt{Init}(k)$ sits next to its own chain, and it suffices to absorb it there.
Unfolding the $\mathtt{PhaseChain}$ into $\mathbf{1} + P_1 + P_1\cdot P_2 + P_1\cdot P_2\cdot P_3$ and distributing $\cdot$ over $+$, 
the four prefixes give four obligations.
On the empty prefix the appended action becomes $P_1$:
\[
\begin{aligned}
    S_{<k}\cdot(i\leftarrow k)\cdot\mathbf{1}\cdot\mathtt{Init}(k)\cdot S_{>k}
    &\inctop\; S_{<k}\cdot(i\leftarrow k)\cdot\mathtt{Init}(i)\cdot S_{>k} && ({\rulename{Unit}, \rulename{Action-Equiv}})\\
    &\inctop\; S_{<k}\cdot(i\leftarrow k)\cdot\mathtt{Init}(i)\cdot(j\leftarrow i)\cdot S_{>k} && ({\rulename{Assign-Intro}})\\
    &\inctop\; S_{<k}\cdot(i\leftarrow k)\cdot P_1\cdot S_{>k}\;\inctop\; S && ({\rulename{Embed}})
\end{aligned}
\]
where \rulename{Action-Equiv} identifies the concrete $\mathtt{Init}(k)$ with the term $\mathtt{Init}(i)$ under the entry condition $i = k$, 
and \rulename{Assign-Intro} inserts the benign $(j\leftarrow i)$, which is dead and thus, unobservable in $S_{>k}$.
On the three nonempty prefixes,
$\mathtt{Init}(k)$ is disabled by the $\mathtt{Init}(i)$ already in $P_1$,
so \rulename{Disabled} rewrites each to $\mathbf 0$,
which is a subset of any $S$ by \rulename{Bot}.

\emph{Appending $\mathtt{Recv}(k, m)$.}
Similarly as with $\mathtt{Init}$, we write $v \triangleq m.\mathtt{id}$, peel the $\mathtt{For}$ at $v$, and commute
the appended receive across $S_{>v}$ (whose actions concern votes above $v$, hence
$\theta$-independent), reaching the context $C[\,\cdot\,] = S_{<v} \cdot [\,\cdot\,] \cdot S_{>v}$ with
the receive placed just after $\mathtt{voteChain}(v)$.
Using \rulename{Action-Equiv} to identify the concrete $m$ with the scenario's computed term
$b \triangleq \mathtt{Recv}(\mathtt{right}(j),\mathtt{id}(i))$, and distributing the $\mathtt{PhaseChain}$ over its four prefixes $\mathbf 1$, $P_1$, $P_1\cdot P_2$, $P_1\cdot P_2\cdot P_3$, the receive meets each prefix in turn.

On $\mathbf 1$ no vote is in flight, and on $P_1\cdot P_2\cdot P_3$ all votes on $i$ are already received. 
On both the receive is disabled, so \rulename{Disabled} prunes them.

On the $P_1$ prefix the vote has just been initiated, so the receive is its \emph{first} relay.
Recall $P_2 = \bigl(\assm(\neg\mathtt{isMax})\cdot b\cdot(j\leftarrow\mathtt{right}(j))\bigr)^{*}$,
and Kleene algebra gives us $1 \incSig s^{*}$ and $s \incSig s^{*}$.
We split on the successor and fold each branch back:
\[
\begin{aligned}
    C[P_1\cdot b]
    \;\inctop\;\ & C[P_1\cdot \assm(\neg\mathtt{isMax})\cdot b\cdot(j\leftarrow\mathtt{right}(j))] \\
    &{}+\, C[P_1\cdot \assm(\mathtt{isMax})\cdot b]
    && ({\rulename{Cond-Intro}, \rulename{Assign-Intro}})\\
    \;\inctop\;\ & C[P_1\cdot \bigl(\assm(\neg\mathtt{isMax})\cdot b\cdot(j\leftarrow\mathtt{right}(j))\bigr)^{*}] \\
    &{}+\, C[P_1\cdot P_3]
    && ({\rulename{Unfold}}, \rulename{Embed})\\
    =\ & C[P_1\cdot P_2] \,+\, C[P_1\cdot P_2 \cdot P_3]
\end{aligned}
\]

The $P_1\cdot P_2$ prefix is identical, except the vote has already relayed some hops, so the not-max
branch is \emph{one more} iteration appended to the existing star $P_2 = \bigl(\assm(\neg\mathtt{isMax})\cdot b\cdot(j\leftarrow\mathtt{right}(j))\bigr)^{*}$,
where we have $s^{*}\cdot s \incSig s^{*}$ from Kleene algebra.
\[
\begin{aligned}
    C[P_1\cdot P_2 \cdot b]
    \;\inctop\;\ & C[P_1\cdot P_2 \cdot \assm(\neg\mathtt{isMax})\cdot b\cdot(j\leftarrow\mathtt{right}(j))] \\
    &{}+\, C[P_1\cdot P_2 \cdot \assm(\mathtt{isMax})\cdot b]
    && ({\rulename{Cond-Intro}, \rulename{Assign-Intro}})\\
    \;\inctop\;\ & C[P_1\cdot P_2] \,+\, C[P_1\cdot P_2\cdot P_3]
    && (\rulename{Unfold},\ \rulename{Embed})
\end{aligned}
\]
Every summand lies in $C[\mathtt{voteChain}(v)] = S$, so the run rewrites back into $S$, giving $S\cdot\mathtt{Recv}(k,m)\inctop S$.
This completes the coverage proof for ring leader election.
\end{example}

\subsection{Completeness}
\seclabel{completeness}

At the top level the system is complete relative to the underlying assertion logic:
every safety property that has an inductive invariant expressible in the assertion logic has a scenario proof.
The witness is a maximally permissive scenario that generates every instantiated action with $\mathtt{havoc}$.

\myparagraph{The maximally permissive scenario}
Let $\acts = \{a_1, \dots, a_k\}$.
For each declaration $a_i$, fix two fresh scenario variables: $x_i$ of node type $\ntype(a_i)$ and $y_i$ of message type $\mtype(a_i)$
(we drop $y_i$ when $\mtype(a_i) = \bot$).
Define
\[
    s_{a_i} = \mathtt{havoc}(x_i) \cdot \mathtt{havoc}(y_i) \cdot a_i(x_i, y_i),
    \qquad
    s_{\top} = (s_{a_1} + s_{a_2} + \cdots + s_{a_k})^{*}.
\]
Since $\mathtt{havoc}$ ranges over every value of the variable's type,
$\semvar{s_{a_i}} = \{(a_i, n, m) \mid n \in \nodes_{\ntype(a_i)},\ m \in \msgs_{\mtype(a_i)} \cup \{\bot\}\}$.
The union over the finitely many declarations is exactly $\actsI$, so $\semvar{s_{\top}} = \actsI^{*}$ and $\semfeas{s_{\top}} = \fexec_I$.
We take $\theta_I = \emptyset$, which is trivially sound, so that $\sem{s_{\top}} = \fexec_I$.
No assumption on the value domain is needed: $\mathtt{havoc}$ does not enumerate values,
so parameters ranging over an infinite (even uncountable) payload domain are generated as well.

\myparagraph{Coverage is derivable}
Coverage of $s_{\top}$ holds semantically, and it is also derivable in the calculus of \secref{coverage}, using only the absorb pattern.
The base obligation $\mathbf{1} \inctop s_{\top}$ follows from \textsc{Unfold} and \textsc{Embed}.
For an appended action $\alpha = (a_i, n, m)$, written as the term $a_i(n, m)$ with constant parameters,
\[
\begin{aligned}
    s_{\top} \cdot a_i(n, m)
    &\inctop\; s_{\top} \cdot (x_i \leftarrow n) \cdot (y_i \leftarrow m) \cdot a_i(n, m) && (\rulename{Assign-Intro}\ \text{twice})\\
    &\inctop\; s_{\top} \cdot (x_i \leftarrow n) \cdot (y_i \leftarrow m) \cdot a_i(x_i, y_i) && (\rulename{Action-Equiv})\\
    &\inctop\; s_{\top} \cdot s_{a_i} \;\inctop\; s_{\top} && (\rulename{Havoc},\ \rulename{Embed},\ \rulename{Unfold})
\end{aligned}
\]
\rulename{Assign-Intro} applies because the only remaining term, $a_i(n, m)$, reads no variable.
The premise of \rulename{Action-Equiv} is $\mathrm{wp}$ of $x_i = n \wedge y_i = m$ through the two assignments, which is $\mathsf{true}$,
and so is its $\mathrm{wp}$ through $s_{\top}$, with loop invariant $\mathsf{true}$.

\myparagraph{Correctness is an inductive invariant}
Correctness of $s_{\top}$ reduces to its loop invariant.
Take $J = \lambda (\sigma, c).\ \mathit{Inv}(c)$ for a configuration predicate $\mathit{Inv}$.
By the $\mathrm{wp}$ rules,
\[
    \mathrm{wp}_{s_{a_i}}(J)(\sigma, c)
    \;=\;
    \forall n, m.\ \mathrm{en}_{(a_i, n, m)}(c) \Rightarrow \mathit{Inv}(c')
    \quad \text{where } c \xrightarrow{(a_i, n, m)} c',
\]
so the loop obligation $J \Rightarrow \mathrm{wp}_{s_{a_1} + \cdots + s_{a_k}}(J)$ states exactly that $\mathit{Inv}$ is preserved by every enabled instantiated action.
The remaining obligations are $\mathit{Inv}(c_0)$ and $\mathit{Inv} \Rightarrow \varphi$.
Hence any inductive invariant $\mathit{Inv}$ establishing $\varphi$ yields a scenario proof on $s_{\top}$.
Conversely, if $\protocol \models \varphi$, then the reachable configurations form such an invariant in every instance.

\begin{theorem}[Relative completeness]
\thmlabel{relative-completeness}
Suppose the assertion logic is closed under the $\mathrm{wp}$ operators,
in particular under universal quantification over node and message sorts, which $\mathrm{wp}_{\mathtt{havoc}}$ introduces.
If $\varphi$ has an inductive invariant expressible in the assertion logic,
then $\protocol \models \varphi$ is provable by a scenario proof using $s_{\top}$ and $\theta_I = \emptyset$.
\end{theorem}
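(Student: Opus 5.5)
The plan is to instantiate the three proof tasks of the scenario methodology with the witness $s_{\top}$ and $\theta_I = \emptyset$, and check each one using the calculations already sketched above.

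\emph{Soundness.} With $\theta_I = \emptyset$, \defref{sound-theta} holds vacuously. Moreover ${\downarrow}_{\emptyset}(L) = L$, so $\sem{s_{\top}} = \semfeas{s_{\top}}$. Since $\semvar{s_{\top}} = \actsI^{*}$, this set is exactly $\fexec_I$.

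\emph{Coverage.} By \lemref{coverage-closed} and \lemref{closedness-reduction}, it suffices to derive $\mathbf{1} \inctop s_{\top}$ and $s_{\top}\cdot a_i(n,m) \inctop s_{\top}$ for every instantiated action. The base case uses \textsc{Unfold} followed by \textsc{Embed}. The step case follows the displayed derivation:
\begin{enumerate*}
\item two applications of \textsc{Assign-Intro}, whose taint-safety premise holds because the constant-parameter term $a_i(n,m)$ reads no scenario variable;
\item \textsc{Action-Equiv}, whose premise $\mathrm{wp}_{s_{\top}\cdot(x_i\leftarrow n)\cdot(y_i\leftarrow m)}(x_i = n \wedge y_i = m)$ reduces to $\mathrm{wp}_{s_{\top}}(\mathsf{true})$, discharged with loop invariant $\mathsf{true}$;
\item \textsc{Havoc} under precongruence to reach $s_{\top}\cdot s_{a_i}$;
\item \textsc{Embed} and the Kleene identity $s^{*}\cdot s \incSig s^{*}$ (from \textsc{Unfold}) to return to $s_{\top}$.
\end{enumerate*}
Soundness of the calculus, through \lemref{lifting}, turns this derivation into the semantic inclusions.

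\emph{Correctness.} Let $\mathit{Inv}$ be the given inductive invariant, expressible in the assertion logic. Take the loop invariant $J = \lambda(\sigma,c).\,\mathit{Inv}(c)$ for the star in $s_{\top}$. The $\mathrm{wp}$ rule for iteration then requires $J \Rightarrow \mathrm{wp}_{\sum_i s_{a_i}}(J)$ and $J \Rightarrow \lambda(\sigma,c).\,\varphi(c)$. Unfolding $\mathrm{wp}$ through $+$, $\cdot$, \texttt{havoc}, and the action rule turns the first obligation into $\bigwedge_i \forall n,m.\ \mathrm{en}_{(a_i,n,m)}(c) \Rightarrow \mathit{Inv}(c')$. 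This is exactly preservation of $\mathit{Inv}$ by every enabled instantiated action, so it follows from inductiveness. The second obligation is $\mathit{Inv} \Rightarrow \varphi$. The remaining condition $\mathrm{wp}_{s_{\top}}(\varphi)(\sigma_0,c_0)$ becomes $J(\sigma_0,c_0)$, i.e.\ $\mathit{Inv}(c_0)$. \lemref{correctness-hoare} then yields correctness. \thmref{witness} with $S_I = \sem{s_{\top}}$, taken over all $I \models \thy$, gives $\protocol\models\varphi$.

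\emph{Main obstacle.} The delicate point is showing that every assertion in the proof lies in the assertion logic, which is where the closure hypothesis enters. The \texttt{havoc} steps introduce universal quantifiers over the node and message sorts. The action $\mathrm{wp}$ introduces the guard and the effect substitution. The assumed closure under these $\mathrm{wp}$ operators is precisely what makes $\mathrm{wp}_{s_{a_i}}(J)$ expressible, and also the premises of \textsc{Action-Equiv} and \textsc{Commute}-free steps.

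A second subtlety is uniformity across instances. The invariant and the derivation should be stated once, parametrically in $I$, with the per-instance constants $n,m$ in the coverage step handled schematically. I would argue this schematically rather than instance by instance.
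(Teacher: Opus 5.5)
Your proposal is correct and follows the paper's argument essentially step for step. The common route is: vacuous soundness of $\theta_I=\emptyset$ (so $\sem{s_{\top}}=\fexec_I$); the same coverage derivation (\rulename{Assign-Intro} twice, \rulename{Action-Equiv} with a premise that reduces to $\mathrm{wp}_{s_{\top}}(\mathsf{true})$, then \rulename{Havoc}, \rulename{Embed}, \rulename{Unfold}); and correctness via the loop invariant $J=\lambda(\sigma,c).\,\mathit{Inv}(c)$, whose star obligation unfolds to preservation of $\mathit{Inv}$ by every enabled instantiated action, together with $\mathit{Inv}(c_0)$ and $\mathit{Inv}\Rightarrow\varphi$. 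Your remarks on expressibility and on uniformity across instances agree with the paper's brief discussion, and are slightly more explicit than it.
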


The classical inductive-invariant method is therefore the special case of a scenario proof on $s_{\top}$.
Both its premises are about the assertion logic: the closure condition, for $\mathrm{wp}_{\mathtt{havoc}}$,
and expressiveness, for the invariant.
The scenario language itself imposes no further condition, because $\mathtt{havoc}$ generates every action.
The complexity of inductive-invariant reasoning comes from the high-degree of nondeterminism in $s_{\top}$.
A natural scenario reduces the amount of nondeterminism by structuring the actions into a more sequential manner: local reasoning along causal histories, and a much smaller number of interleaving patterns due to semi-commutativity.

\begin{remark}[Incompleteness of the coverage calculus]
\remlabel{coverage-incomplete}
A separate question is the completeness of the coverage calculus, and here the picture is less settled.
The rules of \secref{coverage} are sound but \emph{far from complete} 
as a decision procedure for $s_1 \inctop s_2$.
For instance, given a predicate $r$, 
we have $\sem{\assm(r)\cdot s} \;\subseteq\; \sem{\assm(r) \cdot \assm(r)\cdot s}$.
However, the calculus has no Boolean algebra rules to discharge the reasoning over the tests $\assm(r)$.
This gap does not surface in our case studies because they use only the \emph{absorb} pattern: 
the obligation $s \cdot a \inctop s$ has a left- and a right-hand side that are almost identical,
rather than compared against an arbitrarily different scenario.
Moreover, inclusion of commutation closures of regular languages is co-recursively enumerable (a non-inclusion has a finite word witness) and undecidable~\cite{trace-equivalence-undecidable}, hence not recursively enumerable.
It embeds into $\sem{s_1} \subseteq \sem{s_2}$ via always-enabled, no-effect actions,
so no sound calculus whose side conditions are effectively checkable can be complete in general.
\end{remark}

\section{Case Study: Paxos}
\seclabel{casestudy}

Ring leader election has served as the running example of \secref{scenario} and
\secref{proof-system}. 
This section evaluates the methodology on single-decree Paxos.
Due to the page limit, scenario proofs of further protocols from the Ivy benchmark~\cite{ivy}
(two-phase commit, decentralized lock, two variants of multi-signature payment) are presented in the Appendix.

\myparagraph{Protocol}
Single-decree Paxos~\cite{paxos} is the archetypal consensus protocol: 
competing proposers race to get a quorum of acceptors to agree on a single value.
Each proposer $p$ owns a unique round, its identifier; $n$ is the number of acceptors.
The actions are as follows; every receipt consumes its message:
\begin{itemize}[nosep,leftmargin=*]
    \item $\mathtt{Prepare}(p)$: $p$ broadcasts a prepare message to every acceptor, once.
    \item $\mathtt{Promise^{\pm}}(a, m)$: on a prepare from $p$, acceptor $a$ promises ($+$) if $p > a.\mathtt{maxPromisedRound}$,
    replying with its accepted pair $(a.\mathtt{accRound}, a.\mathtt{accValue})$ and setting $a.\mathtt{maxPromisedRound} \leftarrow p$;
    otherwise it ignores $m$ ($-$).
    \item $\mathtt{RecvPromise}(p, m)$: $p$ adds the sender to $p.\mathtt{promisedSet}$ and keeps the reported pair of highest round in $(p.\mathtt{maxRound}, p.\mathtt{maxValue})$.
    \item $\mathtt{Propose}(p)$: once $2 \cdot |p.\mathtt{promisedSet}| > n$, $p$ sets $p.\mathtt{value}$ to $p.\mathtt{maxValue}$ unless it is $\bot$,
    and broadcasts a proposal carrying $p.\mathtt{value}$, once.
    \item $\mathtt{Accept^{\pm}}(a, m)$: on a proposal $(p, v)$, $a$ accepts ($+$) if $p \ge a.\mathtt{maxPromisedRound}$,
    setting $a.\mathtt{maxPromisedRound} \leftarrow p$, $(a.\mathtt{accRound}, a.\mathtt{accValue}) \leftarrow (p, v)$, and adding $(p, v)$ to $a.\mathtt{acceptedHistory}$;
    otherwise it ignores $m$ ($-$).
\end{itemize}

The safety property is \emph{agreement}: calling a pair $(r, v)$ \emph{chosen} when a strict majority of acceptors holds it in their accepted histories, \emph{any two chosen values are equal}.
Correctness is hard because all rounds run \emph{concurrently}: 
there is no global order in which one round completes before the next begins, 
so the messages received by a single acceptor of different rounds interleave in combinatorially many ways.
Next, we will show how to write a systematic scenario that captures all of these interleavings by sequentializing the actions of different rounds, and correctness can easily be proved along it.

\myparagraph{Scenario}
\begin{figure}[t]
\small
\hrule\smallskip
{\raggedright\textbf{Scenario: $\mathtt{scenario}_{\mathtt{Paxos}} \triangleq S_{\mathrm{pos}} \cdot S_{\mathrm{neg}}$}\par}
\smallskip

{\itshape Positive part --- each round run to completion, in ascending order:}
\begin{align*}
    S_{\mathrm{pos}} &\triangleq \mathtt{For}\bigl(p \in [\mathtt{min}, \mathtt{max}]\bigr)\;
        \mathtt{PhaseChain}[P_1, P_2, P_3, P_4, P_5] \\[3pt]
    P_1 &\triangleq \mathtt{Prepare}(p) && \text{\itshape prepare}\\[2pt]
    P_2 &\triangleq \mathtt{For}\bigl(a \in [\mathtt{min}, \mathtt{max}]\bigr)\;
        \mathtt{PhaseChain}\big[\,\mathtt{Promise^{+}}(a, \mathtt{prep}_{p,a}), && \text{\itshape promise \& receive}\\
        &\qquad \mathtt{RecvPromise}(p, \mathtt{prom}_{a,p}(a.\mathtt{accRound}, a.\mathtt{accValue}))\,\big] \\[2pt]
    P_3 &\triangleq \assm(2 \cdot |p.\mathtt{promisedSet}| > n) \cdot \mathtt{Propose}(p) && \text{\itshape propose}\\[2pt]
    P_4 &\triangleq \bigl(\mathtt{havoc}(x) \cdot \mathtt{Promise^{+}}(x, \mathtt{prep}_{p,x})\bigr)^{*} && \text{\itshape late promises}\\
        &\qquad \cdot \bigl(\mathtt{havoc}(x)\,\mathtt{havoc}(r)\,\mathtt{havoc}(v) \cdot \mathtt{RecvPromise}(p, \mathtt{prom}_{x,p}(r, v))\bigr)^{*} \\[2pt]
    P_5 &\triangleq \mathtt{For}\bigl(a \in [\mathtt{min}, \mathtt{max}]\bigr)\;
        \mathtt{PhaseChain}\bigl[\,\mathtt{Accept^{+}}(a, \mathtt{prop}_{p,a}(p.\mathtt{value}))\,\bigr] && \text{\itshape accept}
\end{align*}

\smallskip
{\itshape Negative part --- all ignored receipts, deferred to the end:}
\begin{align*}
    S_{\mathrm{neg}} &\triangleq
        \bigl(\mathtt{havoc}(x)\,\mathtt{havoc}(q) \cdot \mathtt{Promise^{-}}(x, \mathtt{prep}_{q,x})\bigr)^{*} && \text{\itshape ignored prepares}\\
        &\qquad \cdot \bigl(\mathtt{havoc}(x)\,\mathtt{havoc}(q)\,\mathtt{havoc}(v) \cdot \mathtt{Accept^{-}}(x, \mathtt{prop}_{q,x}(v))\bigr)^{*} && \text{\itshape ignored proposals}
\end{align*}
\smallskip\hrule
\caption{The scenario $\mathtt{scenario}_{\mathtt{Paxos}}$ for single-decree Paxos. $\mathtt{prep}_{p,a}$, $\mathtt{prom}_{a,p}(r, v)$, $\mathtt{prop}_{p,a}(v)$
    are the prepare, promise (reporting the accepted pair $(r, v)$), and proposal (carrying value $v$) messages.
    The scenario has no ghost state: message payloads and the quorum test read the local states.}
\figlabel{paxos-scenario}
\end{figure}

The developer's picture of a run is one causal history per round, each passing through the phases prepare, promise, propose, accept.
Given an arbitrary interleaving of the causal histories, let us reorder it into a representative run that can be described by a scenario in two steps.
First, all the ``negative'' actions, the $\mathtt{Promise^{-}}$s and $\mathtt{Accept^{-}}$s, can be deferred to the end of the run, 
because if they are enabled at some point, the $\mathrm{maxPromisedRound}$ can only increase, and they remain enabled until the end.
Now, we have a prefix of the run consisting of only the ``positive'' actions.
In the second step, we can move the actions of smaller round before those of larger round, 
because of a similar reason: if the two actions are executed by the same acceptor, 
the action of smaller round must be before the action of larger round, 
otherwise actions on different acceptors are independent and can be reordered in either way.
This observation shapes the prefix: 
the rounds are executed in ascending order, and each round runs to completion before the next begins, without any interleaving of actions from different rounds.

The resulting scenario is given in \figref{paxos-scenario}: a positive part $S_{\mathrm{pos}}$
followed by a negative part $S_{\mathrm{neg}}$.
The positive part is a $\mathtt{For}$ over rounds in ascending order, each round run to completion as
a $\mathtt{PhaseChain}$ of five phases:
$P_1$ prepares;
$P_2$ collects promises, one acceptor at a time;
$P_3$ proposes once $p$ has a quorum;
$P_4$ absorbs promises that arrive \emph{after} the proposal
(they can no longer influence the proposed value, so havoc'd stars admit them, a harmless over-approximation);
and $P_5$ accepts, one acceptor at a time.
The negative part collects every ignored receipt with havoc'd stars in the same way.

\myparagraph{Semi-commutativity}
The two reordering steps above are justified by a semi-commutation $\theta$ (\figref{paxos-theta}),
which rests on exactly the two observations that shaped the representative run:
(a) a positive action may be reordered before a negative one, and
(b) an action of a smaller round may be reordered before an action of a larger round.
We group the conditions as follows:
\begin{enumerate}
    \item \emph{Two proposer actions} ($C_1$) are independent when they belong to distinct rounds.
    \item \emph{A proposer action and an acceptor action} ($C_2$) are linked by message passing, so they are ordered only when the right one receives the message the left one sends.
    \item \emph{Two non-negative actions at one acceptor} ($C_4$) are dependent, \emph{unless} the rounds are descending. This is the case observation (b) relies on, and it is what lets an action of round $p$ cross the actions that later rounds perform.
    \item \emph{Two $\mathtt{RecvPromise}$s} ($C_3$) update the proposer's maximum learned pair, so they commute unless they reach the same proposer with the same round but different values.
    \item Finally, an \emph{negative} action as the left argument commutes with anything: it touches nothing at the acceptor, so both orders reach the same state, and it remains enabled afterwards because its guard $p \le a.\mathtt{maxPromisedRound}$ survives any action that can only raise $\mathtt{maxPromisedRound}$.
    The relation is one-directional here, $({-},{+}) \rew ({+},{-})$. This costs nothing: after the split into $S_{\mathrm{pos}}$ and $S_{\mathrm{neg}}$ the negative actions already sit at the tail of the scenario, so no commutation of that shape is ever required.      
\end{enumerate}

\begin{figure}[t]
\centering
\small
\setlength{\tabcolsep}{5pt}
\newcommand{\hd}[1]{\rotatebox[origin=l]{45}{$#1$}}
\begin{tabular}{r|ccccccc}
\multicolumn{1}{r}{} & \hd{\mathtt{Prep}(p_2)} & \hd{\mathtt{Prom^{+}}(a_2, m_2)} & \hd{\mathtt{RecvP}(p_2, m_2)} & \hd{\mathtt{Prop}(p_2)} & \hd{\mathtt{Acc^{+}}(a_2, m_2)} & \hd{\mathtt{Prom^{-}}(a_2, m_2)} & \hd{\mathtt{Acc^{-}}(a_2, m_2)} \\
\hline
$\mathtt{Prep}(p_1)$          & $C_1$ & $C_2$ & $C_1$ & $C_1$ & \tT   & \tF & \tF \\
$\mathtt{Prom^{+}}(a_1, m_1)$ & \tT   & $C_4^{\le}$ & $C_2$ & \tT   & $C_4^{<}$ & \tF & \tF \\
$\mathtt{RecvP}(p_1, m_1)$    & $C_1$ & \tT & $C_3$ & $C_1$ & \tT   & \tF & \tF \\
$\mathtt{Prop}(p_1)$          & $C_1$ & \tT   & $C_1$ & $C_1$ & $C_2$ & \tF & \tF \\
$\mathtt{Acc^{+}}(a_1, m_1)$  & \tT   & $C_4^{\le}$ & \tT   & \tT & $C_4^{<}$ & \tF & \tF \\
$\mathtt{Prom^{-}}(a_1, m_1)$ & \tT & \tT & \tT & \tT & \tT & \tT & \tT \\
$\mathtt{Acc^{-}}(a_1, m_1)$  & \tT & \tT & \tT & \tT & \tT & \tT & \tT \\
\end{tabular}

\medskip
\begin{minipage}{\linewidth}
\centering
$C_1 \;\equiv\; p_1 \ne p_2$
\qquad
$C_4^{\lhd} \;\equiv\; a_1 \ne a_2 \;\vee\; m_2.\sender \lhd m_1.\sender$
\quad $({\lhd} \in \{{\le}, {<}\})$

\smallskip
$C_3 \;\equiv\; p_1 \ne p_2 \;\vee\; (m_1.\mathtt{accRound} = m_2.\mathtt{accRound} \Rightarrow m_1.\mathtt{accValue} = m_2.\mathtt{accValue})$
\end{minipage}
\caption{Semi-commutation relation $\theta$ for Paxos, read row-then-column: each cell gives the
condition under which the ordered pair is in $\theta$ (\tT{} always, \tF{} never).
$C_2$: the right action does not receive the message the left one sends.
Pairs whose forward order is infeasible (e.g., a $\mathtt{Prom^{+}}$ before its $\mathtt{Prep}$) are admitted vacuously.}
\figlabel{paxos-theta}
\end{figure}

\myparagraph{Coverage}
The key observation follows the intuition of the scenario: every ignored action is absorbed by the negative part at the end, and every other action of a round can commute leftward across the negative part and all rounds above it, until it is absorbed into the body of its own round.
Writing $S = S_{\mathrm{pos}}\cdot S_{\mathrm{neg}}$, $S_{\mathrm{neg}} = N_{\mathrm{P}}^{*} \cdot N_{\mathrm{A}}^{*}$, and, for an appended action $a$ of round $p$, splitting $S_{\mathrm{pos}}$ at round $p$ into $R_{<p}\cdot\mathtt{round}(p)\cdot R_{>p}$, the two cases run as follows.
A \emph{negative} $a$ (an ignored receipt) is absorbed by its own star, after a $\mathtt{Promise^{-}}$ commutes leftward across $N_{\mathrm{A}}^{*}$:
\[
\begin{aligned}
    S \cdot \mathtt{Promise^{-}}(\cdot)
    &\;\inctop\; S_{\mathrm{pos}} \cdot N_{\mathrm{P}}^{*} \cdot \mathtt{Promise^{-}}(\cdot) \cdot N_{\mathrm{A}}^{*}
    && ({\rulename{Commute}})\\
    &\;\inctop\; S_{\mathrm{pos}} \cdot N_{\mathrm{P}}^{*} \cdot N_{\mathrm{P}} \cdot N_{\mathrm{A}}^{*} \;\inctop\; S
    && (\rulename{Assign-Intro}, \rulename{Action-Equiv}, \rulename{Havoc}, \rulename{Unfold})
\end{aligned}
\]
A \emph{non-negative} $a$ commutes leftward across $S_{\mathrm{neg}}$ and the later rounds, to the body of its own round:
\[
\begin{aligned}
    S \cdot a
    &\;\inctop\; S_{\mathrm{pos}} \cdot a \cdot S_{\mathrm{neg}}
    && ({\rulename{Commute}})\\
    &\;\inctop\; R_{<p} \cdot \mathtt{round}(p) \cdot R_{>p} \cdot a \cdot S_{\mathrm{neg}}
    && ({\rulename{Unfold-For}})\\
    &\;\inctop\; R_{<p} \cdot \mathtt{round}(p) \cdot a \cdot R_{>p} \cdot S_{\mathrm{neg}}
    && ({\rulename{Commute}})
\end{aligned}
\]
Now the remaining task is to show that the round on $p$ can absorb the appended action,
which is easy because the round is written sequentially and \emph{closed} with respect to a single round, and the appended action is one of its own.
We illustrate with $\mathtt{Propose}(p)$.
Unfolding the round body,
$\mathtt{round}(p) = \mathbf 1 + P_1 + P_1\cdot P_2 + P_1\cdot P_2\cdot P_3 + P_1\cdot P_2\cdot P_3\cdot P_4 + P_1\cdot P_2\cdot P_3\cdot P_4\cdot P_5$,
and distributing the appended action over the six prefixes, all but one are pruned:
on $\mathbf 1$ and $P_1$ no quorum has been collected, and on the prefixes past $P_3$ the proposer has already proposed ($\mathtt{isProposed}$), 
so the guard of $\mathtt{Propose}(p)$ fails and \rulename{Disabled} discards these branches.
The only surviving prefix is $P_1\cdot P_2$, where the appended action becomes the phase $P_3$.
Writing $C[\,\cdot\,] = R_{<p} \cdot P_1\cdot P_2 \cdot [\,\cdot\,] \cdot R_{>p} \cdot S_{\mathrm{neg}}$ for the surrounding context
and $q \equiv 2 \cdot |p.\mathtt{promisedSet}| > n$ for the quorum test:
\[
\begin{aligned}
    C[\mathtt{Propose}(p)]
    &\;\inctop\; C\bigl[\bigl(\assm(q) + \assm(\neg q)\bigr)\cdot \mathtt{Propose}(p)\bigr]
    && ({\rulename{Cond-Intro}})\\
    &\;\inctop\; C\bigl[\assm(q)\cdot \mathtt{Propose}(p)\bigr]
    && ({\rulename{Disabled}})\\
    &\;=\; C[P_3] \;\inctop\; S
    && ({\rulename{Embed}})
\end{aligned}
\]
\rulename{Cond-Intro} splits on the quorum test, and \rulename{Disabled} prunes the $\assm(\neg q)$
branch.
The other actions absorb the same way, each positioned in the phase it belongs to.

\myparagraph{Correctness}
The correctness on the sequentialized scenario is straightforward as a round finishes before the next begins (serves as the loop invariant over rounds):
if a value $v$ is chosen in round $i$ in a quorum, then all the later rounds will always propose $v$ and potentially accept $v$.
Consider a round $p > i$ issuing its proposal in $P_3$.
Since round $i$ ran to completion before $p$ began, its accepting quorum is already in place, and
$p$'s own promise quorum must intersect it in some acceptor $\hat a$.
The promise $\hat a$ sends therefore reports a pair of round at least $i$ (which must be $v$ by the invariant), 
so $p.\mathtt{maxValue}$ is set to $v$ in $P_2$ and $p$ proposes $v$ in $P_3$.

In the invariant-based reasoning~\cite{paxos-epr}, the crucial inductive invariant says that when the owner of round $i$ proposes, it cannot miss any value that is still \emph{choosable} at a lower round in the future, 
where ``choosable'' predicts what some future execution might do.
Such a statement is hard to phrase even informally, let alone in logic. 
However, in our case, thanks to commutativity, the complexity is totally avoided by the fact that no reasoning about overlapping rounds is needed: 
because each round is checked against a state in which all earlier rounds have finished, 
and the combinatorial interleavings that make the invariant-based proof of Paxos hard never arise.

\section{Related Work}
\seclabel{related}

\myparagraph{Scenario-based proofs for concurrent objects}
While the high level idea of expressing a protocol through scenarios and exploiting commutativity
quotients has indeed been inspired from~\cite{enea-koskinen}, the details about how the expression
is used to establish correctness differ in significant ways.
First, our formalism is carefully crafted so that we systematically get a completeness guarantee.
On the other hand, the work of~\cite{enea-koskinen} does not address this.
It instead proposes an automata-based approach, and only works when the scenario can be expressed using \emph{layers}, 
each of which consists of a single shared-memory write with local reads and their reactions around it.
In this fragment coverage reduces to two-thread reasoning and candidate expressions can be derived automatically from source code.
Distributed protocols rarely fit this shape, since a protocol's causal history is threaded by message passing and is in general a DAG rather than a single write with reactions around it.
For this reason, we structure our proof system so that coverage is actually an explicit obligation over unrestricted scenarios, 
and further propose a syntax-directed calculus to discharge it.

\myparagraph{Representative-execution proofs via commutativity}
The idea of proving safety by exhibiting a canonical family of executions, with a commutativity argument that they cover all behavior, is shared by several lines.
The closest work is \emph{inductive sequentialization} (IS)~\cite{inductive-seq}, which also proposes the use
of sequential runs to represent behavior of protocols.
On Paxos, its sequentialization is the same intuition as ours, rounds in increasing order.
Nonetheless, modeling protocols as \emph{asynchronous programs} rather than as actions, IS constructs a \emph{transformation}, repeatedly absorbing a pending asynchronous call until what remains is sequential, where we construct a \emph{description}, a scenario expression we prove covers every feasible run, replacing an inductive elimination of pending calls by a syntax-directed calculus deriving inclusions between expressions.
The consequence is where commutativity is demanded: absorbing an action into a whole asynchronous program, IS requires it to commute against every action, which real protocol actions often violate, forcing the user to abstract each action into a mover.
On an already sequential scenario, a term commutes only to the part it actually crosses. 
An ignored receipt in Paxos, for instance, is absorbed into the negative part at the end and never meets the rounds, so no independence between the two is needed.
\emph{Canonical sequentialization}~\cite{canonical-seq} computes one automatic form for a symmetric fragment that excludes ring leader election and Paxos;
\emph{pretend synchrony}~\cite{pretend-synchrony} and communication-closed reductions~\cite{cc-async} fix a synchronous round form.
This line descends from Lipton's movers~\cite{lipton} through QED~\cite{qed}, the \textsc{Civl}
layered-refinement tradition~\cite{civl,synchronizing,civl-verifier}, to inductive sequentialization~\cite{inductive-seq}.

\myparagraph{Invariant-based verification of distributed protocols}
The dominant approach states correctness as an inductive invariant or invariant-carrying refinement:
Ivy and its EPR reasoning~\cite{ivy,paxos-epr,mypyvy}, the Lean-embedded Veil~\cite{veil},
TLA${}^{+}$/TLAPS~\cite{tlaplus,tlaps}, IronFleet~\cite{ironfleet}, Verdi~\cite{verdi}, 
and the separation logics Disel, Igloo, and Aneris~\cite{disel,igloo,aneris}.
A large body automates the invariant~\cite{swiss,distai,duoai,fol-ic3,induction-duality,i4} or restructures it~\cite{kondo,basilisk,ado,grove,trillium}; 
both are orthogonal to our change of artifact.
\emph{Message chains}~\cite{message-chains} also falls under this umbrella, and merely extends the same style of reasoning so that invariants may quantify over the causal history of a message.
As our completeness observation shows, the invariant method is the degenerate scenario proof on $s_\top$.
A natural scenario usually reduces the invariant complexity by sequentializing the causal histories and picking representative executions, thereby reducing the degree of nondeterminism.

\myparagraph{Algorithmic commutativity}
A parallel line \emph{searches} for a reduction and its proof together rather than accepting a written one~\cite{farzan-stratified,farzan-parameterized,farzan-hypersafety,farzan-sound-seq}.
In particular, \emph{reductions for safety proofs}~\cite{farzan-safety} introduces contextual commutativity, the algorithmic cousin of our $\psi$-guarded rule.
Partial-order reduction~\cite{dpor,godefroid-por,optimal-dpor}, resting on Mazurkiewicz trace equivalence~\cite{mazurkiewicz,semi-commutations}, 
shares the same one-representative-per-class insight, 
but explores it for bounded instances rather than discharging it deductively over parameterized protocols.

\myparagraph{Algebraic foundations}
The structure layer of our calculus is Kleene algebra with tests~\cite{kozen-kat,kozen-completeness}.
The commutativity layer adds semi-commutation hypotheses~\cite{kozen-commutativity,ka-hypotheses},
whose equational theory is undecidable~\cite{kat-comm-undecidable}; in the language model, inclusion is moreover not recursively enumerable~\cite{trace-equivalence-undecidable}, so our coverage sub-system is necessarily incomplete (\remref{coverage-incomplete}).
Related domain-specific and concurrent variants include concurrent Kleene algebra~\cite{cka}, NetKAT~\cite{netkat}, and GKAT~\cite{gkat}.

\myparagraph{Scenario-based specification}
Our ``scenario'' should be distinguished from a tradition that treats scenarios as specifications or programs to realize or execute: message sequence charts and their
realizability~\cite{msc-modelcheck,msc-realizability}, live sequence charts and behavioral programming~\cite{lsc}, scenario-based synthesis~\cite{scenario-synthesis},and from work mining the diagrams engineers draw for invariants~\cite{flows}.
Reduction and refinement in the temporal-logic world is likewise related~\cite{reduction-tla,refinement-mappings,action-systems,bakery}.
None of these turns a user-written scenario into a \emph{proof} of an independently given protocol, which is the object a scenario proof provides.


\section{Conclusion}
\seclabel{conclusion}


This paper argues that a distributed protocol is better verified by
writing down an \emph{operational} description of \emph{how it runs}
than by guessing a \emph{declarative} description of \emph{what stays true}
and strengthening it to exclude unreachable counterexample states.
Towards this, we introduce scenario proofs to make this operational
reasoning precise.
The main constituent of such a proof is a \emph{scenario expression}
that describes representative executions using causal histories and phases.
This expression is accompanied by proofs of \emph{coverage}, i.e.,
that every feasible execution of the protocol is represented in the
scenario modulo semi-commutativity, and \emph{correctness}, i.e.,
that the desired safety property holds for the representative executions.
The soundness of the semi-commutation relation is established locally
on pairs of actions.
By removing unnecessary interleavings from the correctness argument,
scenarios ease the task of finding invariants and bring the structure
of the proof closer to how protocol developers organize their informal
correctness arguments.

At a conceptual level, scenario proofs as we present them here
can also be positioned close to several works in automated verification that follow the
\emph{trace abstraction} approach~\cite{Heizmann2009TraceAbstraction,NestedInterpolants2010,Farzan2014Proofs,farzan-stratified}.
In these works, the primary proof object, like in our setup, is a
succinct but explicit representation of executions of the system of interest.
Trace abstraction, in various forms, has demonstrated that this style
of reasoning can be powerful in completely automated verification:
the Ultimate (Automizer) family of tools has consistently been the winner of
the yearly software verification competition (SVCOMP) for several years now.
This was indeed a starting point of this work --- we wanted to
understand why and how this style of reasoning helps.
We do believe we have a partial answer for this, and this work is a testament to it --- execution-based
representations such as scenarios offer a unique flexibility of representation,
unlike the otherwise stiff, invariant-style, syntax-driven proofs \`a la Hoare logic
(and the myriad of other program logics in the same family).

The reframing of correctness reasoning using scenarios 
runs deeper than a new proof technique for a single safety property, and opens
up other interesting directions.
At the level of what a scenario \emph{is}, a scenario together with its coverage proof is a
property-independent characterization of a protocol's behavior, established once and reused: the same
object should discharge further safety properties at the cost of only a new correctness argument, and
we conjecture it can carry reasoning beyond safety: liveness, compositional reasoning, and
modular protocol design.
Further, because a scenario is an executable description rather than a mere assertion, it invites a design-time use: \emph{synthesizing} a protocol from the scenario a designer sketches, rather than verifying one after the fact.
In this sense, both choreographic programming techniques~\cite{ChoreographicProgramming2013}
and top-down synthesis based on session types~\cite{MPST2016} share a similar goal.
At the level of the proof system itself, two questions follow most directly.
How complete is the coverage calculus, and how much of it can be handed to decision procedures for Kleene algebra with tests, the step that would turn scenario proofs from a discipline into a tool.
Next, scenarios constitute a fairly behavioral description of systems (protocols in our case),
and this means they naturally serve as documentation for protocols
and can be used to guide testing campaigns to eliminate design-time issues~\cite{Gulcan2025ModelGuided};
indeed recent works on testing of formal models has shown this style of reasoning to be effective~\cite{LiDesaiPadhye2026}.
And, because we claim scenarios mirror the way designers already argue, 
how well a large language model can be driven to \emph{write} the scenario and its proof, turning that naturalness into automation and closing the loop back to the informal executions from which designers start.

\bibliographystyle{ACM-Reference-Format}
\bibliography{ref}
\newpage
\appendix

\section{Further Case Studies}
\applabel{more-casestudies}

Due to the page limit, this appendix presents the scenario proofs of four further protocols:
two-phase commit and a decentralized lock from the Ivy benchmark, and two variants of a
multi-signature payment protocol.

\subsection{Two-Phase Commit}
\seclabel{cs-2pc}

Two-phase commit coordinates an atomic decision --- commit or abort --- across a
coordinator and a set of participants.
Where ring leader election was a single chain replicated per node, two-phase commit exhibits a single causal history with a clear \emph{phase}
structure, a voting round bracketed by a broadcast and a decision.

\myparagraph{Protocol}
Following \secref{prelim}, \figref{tpc-protocol} gives the protocol in full: two node types, a
single coordinator $\mathtt{c}$ and $n$ participants; five message types; and seven actions.
The coordinator records whether it has broadcast the request, the set $\mathtt{recvSucc}$ of
participants whose yes vote it has tallied, whether it has tallied a no vote, and whether it has
decided and committed; each participant records whether it has voted, and whether it voted yes.
A participant vote ($\mathtt{Success}$/$\mathtt{Failure}$) and its coordinator-side tally
($\mathtt{RecvSuccess}$/$\mathtt{RecvFailure}$) are \emph{separate} actions: the vote injects a message that a later
tally consumes into $\mathtt{recvSucc}$ or $\mathtt{recvFail}$.
The safety property is \emph{agreement}: if the coordinator committed, every participant voted yes;
and if it decided without committing, some participant voted no.

\begin{figure}[t]
\small
\hrule\smallskip
\begin{minipage}[t]{0.46\linewidth}
\raggedright
\textbf{Protocol: Two-Phase Commit}

\smallskip
\textit{Node Types}
\begin{itemize}[leftmargin=1.2em,nosep]
    \item $\mathtt{Coord}(c)$: single coordinator.
    \item $\mathtt{Part}(p)$: participants.
\end{itemize}

\smallskip
\textit{Local State}
\begin{itemize}[leftmargin=1.2em,nosep]
    \item \emph{Coord} $c$:
    \begin{itemize}[leftmargin=1.2em,nosep]
        \item $\mathtt{hasPrepared}, \mathtt{hasDecided}, \mathtt{committed}: \mathsf{Bool}$
        \item $\mathtt{recvFail}: \mathsf{Bool}$
        \item $\mathtt{recvSucc}: \mathsf{Set}\ \mathtt{Part}$
    \end{itemize}
    \item \emph{Part} $p$:
    \begin{itemize}[leftmargin=1.2em,nosep]
        \item $\mathtt{hasVoted}, \mathtt{votedYes}: \mathsf{Bool}$
    \end{itemize}
\end{itemize}

\smallskip
\textit{Global State}
\begin{itemize}[leftmargin=1.2em,nosep]
    \item $\mathtt{sent}, \mathtt{recv}$: message buffers.
\end{itemize}

\smallskip
\textit{Messages}
\begin{itemize}[leftmargin=1.2em,nosep]
    \item $\mathtt{Prepare}(c{\to}p)$, $\mathtt{Success}(p{\to}c)$, $\mathtt{Failure}(p{\to}c)$,
    $\mathtt{Commit}(c{\to}p)$, $\mathtt{Abort}(c{\to}p)$.
\end{itemize}

\smallskip
\textit{Init}
\begin{itemize}[leftmargin=1.2em,nosep]
    \item all flags $\mathsf{False}$;\ $\mathtt{recvSucc} = \emptyset$; buffers empty.
\end{itemize}
\end{minipage}
\hfill\vrule\hfill
\begin{minipage}[t]{0.50\linewidth}
\raggedright
\textit{Action}
\begin{itemize}[leftmargin=1.2em,nosep]
    \item $\mathtt{Prepare}$ by Coord $c$:\\
    \hspace*{1em}\textit{require:}\ $\neg\, c.\mathtt{hasPrepared}$\\
    \hspace*{1em}broadcast $\mathtt{Prepare}(c, p)$ to every $p$\\
    \hspace*{1em}$c.\mathtt{hasPrepared} \leftarrow \mathsf{True}$

    \smallskip
    \item $\mathtt{Success}$ by Part $p$ on $m$:\\
    \hspace*{1em}\textit{require:}\ $m = \mathtt{Prepare}(c, p) \in \mathtt{sent}$\\
    \hspace*{2em}$\wedge\, m \notin \mathtt{recv} \,\wedge\, \neg\, p.\mathtt{hasVoted}$\\
    \hspace*{1em}add $m$ to $\mathtt{recv}$\\
    \hspace*{1em}send $\mathtt{Success}(p, c)$\\
    \hspace*{1em}$(p.\mathtt{hasVoted}, p.\mathtt{votedYes}) \leftarrow (\mathsf{True}, \mathsf{True})$\\
    \hspace*{1em}($\mathtt{Failure}$: sends $\mathtt{Failure}(p, c)$, sets only $\mathtt{hasVoted}$)

    \smallskip
    \item $\mathtt{RecvSuccess}$ by Coord $c$ on $m$:\\
    \hspace*{1em}\textit{require:}\ $m = \mathtt{Success}(p, c) \in \mathtt{sent} \,\wedge\, m \notin \mathtt{recv}$\\
    \hspace*{1em}add $m$ to $\mathtt{recv}$;\ $\mathtt{recvSucc} \mathrel{+}= p$\\
    \hspace*{1em}($\mathtt{RecvFailure}$: on $\mathtt{Failure}(p, c)$, sets $\mathtt{recvFail} \leftarrow \mathsf{True}$)

    \smallskip
    \item $\mathtt{Commit}$ by Coord $c$:\\
    \hspace*{1em}\textit{require:}\ $(\forall p.\ p \in \mathtt{recvSucc})$\\
    \hspace*{2em}$\wedge\, \neg\, \mathtt{recvFail} \,\wedge\, \neg\, \mathtt{hasDecided}$\\
    \hspace*{1em}broadcast $\mathtt{Commit}$\\
    \hspace*{1em}$(\mathtt{hasDecided}, \mathtt{committed}) \leftarrow (\mathsf{True}, \mathsf{True})$

    \smallskip
    \item $\mathtt{Abort}$ by Coord $c$:\\
    \hspace*{1em}\textit{require:}\ $\mathtt{recvFail} \,\wedge\, \neg\, \mathtt{hasDecided}$\\
    \hspace*{1em}broadcast $\mathtt{Abort}$\\
    \hspace*{1em}$\mathtt{hasDecided} \leftarrow \mathsf{True}$
\end{itemize}
\end{minipage}
\smallskip\hrule
\caption{Two-phase commit protocol, in the form of \defref{protocol}.}
\figlabel{tpc-protocol}
\end{figure}

\myparagraph{Scenario}
The developer's picture of a run is a single ordered story of three phases, recorded by
$\mathtt{scenario}_{\mathtt{2PC}}$; the decision guards read the coordinator's local state, so no ghost variables are needed:
\[
\begin{aligned}
    \mathtt{scenario}_{\mathtt{2PC}} &\triangleq \mathtt{PhaseChain}[P_1, P_2, P_3],\\
    P_1 &\triangleq \mathtt{Prepare},\\
    P_2 &\triangleq \mathtt{For}\bigl(p \in [\mathtt{min},\mathtt{max}]\bigr)\ \mathtt{body}(p),\\
    \mathtt{body}(p) &\triangleq \mathtt{PhaseChain}\bigl[\,\mathtt{Success}(p),\ \mathtt{RecvSuccess}(p)\,\bigr]\\
        &\quad\;\; +\ \mathtt{PhaseChain}\bigl[\,\mathtt{Failure}(p),\ \mathtt{RecvFailure}(p)\,\bigr],\\
    P_3 &\triangleq \assm(\mathtt{c}.\mathtt{recvFail}) \cdot \mathtt{Abort}\\
        &\quad\;\; +\ \assm(\forall q.\ q\in\mathtt{c}.\mathtt{recvSucc}) \cdot \mathtt{Commit},
\end{aligned}
\]
where $\mathtt{RecvSuccess}(p)$ and $\mathtt{RecvFailure}(p)$ tally $p$'s vote message.
The loop body $\mathtt{body}(p)$ is itself a choice of two phase-chains: participant $p$ either
takes the \emph{success} sub-phase --- vote $\mathtt{Success}$, then optionally the coordinator
tallies it --- or the symmetric \emph{failure} sub-phase.
Because each sub-phase is a $\mathtt{PhaseChain}$, the body denotes four run shapes per
participant: the empty run, the vote alone, and the vote followed by its tally, in each of the two
outcomes.

The representative processes participants strictly in order and tallies each vote immediately.
A real run interleaves distinct participants' votes and tallies freely, and these reorderings are
absorbed by commutativity.
Reading off the tabulated relation (\figref{tpc-theta}), $\theta$ declares independent: two votes
of distinct participants; a vote and a tally of a \emph{different} participant; two
tallies of different senders; and a vote or a success-tally sliding past $\mathtt{Abort}$ (which
reads only $\mathtt{recvFail}$).
It pointedly does \emph{not} commute a failure tally with a decision: that tally sets
$\mathtt{recvFail}$, enabling $\mathtt{Abort}$ and disabling $\mathtt{Commit}$, and this dependency
is the causal backbone the phase structure sequences.

\begin{figure}[t]
\centering
\begin{tabular}{r|cccc}
\multicolumn{1}{r}{} & $\mathtt{Vote}(p_2)$ & $\mathtt{Recv}(m_2)$ & $\mathtt{Commit}$ & $\mathtt{Abort}$ \\
\hline
$\mathtt{Vote}(p_1)$ & $p_1 \ne p_2$ & $p_1 \ne m_2.\sender$ & \tF & \tT \\
$\mathtt{RecvSuccess}(m_1)$ & $p_2 \ne m_1.\sender$ & $m_1.\sender \ne m_2.\sender$ & \tF & \tT \\
$\mathtt{RecvFailure}(m_1)$ & $p_2 \ne m_1.\sender$ & $m_1.\sender \ne m_2.\sender$ & \tF & \tF \\
$\mathtt{Commit}$       & \tF & \tF & \tF & \tF \\
$\mathtt{Abort}$        & \tT & \tT & \tF & \tF \\
\end{tabular}
\caption{Commutativity relation $\theta$ for two-phase commit, read row-then-column:
each cell gives the condition under which the ordered pair is in $\theta$ (\tT{} always, \tF{} never).
$\mathtt{Vote}$ stands for $\mathtt{Success}$ or $\mathtt{Failure}$, and $\mathtt{Recv}$ for
$\mathtt{RecvSuccess}$ or $\mathtt{RecvFailure}$, which behave identically except that a
$\mathtt{RecvFailure}$ sets $\mathtt{recvFail}$, the causal backbone $\mathtt{Abort}$ depends on.}
\figlabel{tpc-theta}
\end{figure}

\myparagraph{Coverage}
By \lemref{closedness-reduction}, coverage reduces to
$\mathtt{scenario}_{\mathtt{2PC}} \cdot a \inctop \mathtt{scenario}_{\mathtt{2PC}}$ for each action
$a$, discharged by rewriting the left-hand side to the right in the calculus of \secref{coverage}.
We write $S \triangleq \mathtt{scenario}_{\mathtt{2PC}}$.
Actions fired by the wrong role, and votes or tallies whose message the protocol never produced,
have unsatisfiable guards and are pruned at once by \rulename{Disabled}.
We present the broadcast $\mathtt{Prepare}$, a participant vote $\mathtt{Success}(p, m)$, the tally
$\mathtt{RecvSuccess}(m)$, and the decision $\mathtt{Commit}$; $\mathtt{Failure}$, $\mathtt{RecvFailure}$, and $\mathtt{Abort}$
are the symmetric mirrors.
The vote and the tally exhibit, as ring leader election did, the identification of a concrete
message with a scenario term.

\emph{Appending $\mathtt{Prepare}$.}
$\mathtt{Prepare}$ is recognized as already performed rather than commuted.
Expanding the phase-chain notation $\mathtt{PhaseChain}[P_1,P_2,P_3] = \mathbf 1 + P_1 + P_1 P_2 + P_1 P_2 P_3$
and distributing over the prefixes,
\[
\begin{aligned}
    S \cdot \mathtt{Prepare}
    &\equiv_\Sigma\; \bigl(\mathbf 1 + P_1 + P_1 P_2 + P_1 P_2 P_3\bigr)\cdot\mathtt{Prepare} && (\text{expand } \mathtt{PhaseChain})\\
    &\equiv_\Sigma\; \textstyle\sum_{\pi\in\{\mathbf 1, P_1, P_1P_2, P_1P_2P_3\}} \pi\cdot\mathtt{Prepare} && ({\rulename{Distrib}})
\end{aligned}
\]
On the three prefixes of positive length the coordinator has prepared, so
$\mathtt{c}.\mathtt{hasPrepared}$ holds and the appended $\mathtt{Prepare}$ is disabled;
\rulename{Disabled} rewrites each such summand to $\mathbf 0$.
On the empty prefix the appended $\mathtt{Prepare}$ is exactly the scenario's own $P_1$, matched back into $S$ by \rulename{Embed}.

\emph{Appending $\mathtt{Success}(p, m)$.}
The subtlety is the same as ring leader election's $\mathtt{Recv}$: the appended action names a
\emph{concrete} message $m$, whereas the scenario's vote term $\mathtt{Success}(p)$ computes its
message from state; we must identify the two before absorbing.
First, feasibility forces $m = \mathtt{Prepare}(\mathtt{c}, p)$ --- the vote is enabled only on the
prepare message addressed to $p$ --- and any other $m$ is pruned by \rulename{Disabled}.
Assume $m = \mathtt{Prepare}(\mathtt{c}, p)$.
Distributing over the four prefixes $\{\mathbf 1, P_1, P_1 P_2, P_1 P_2 P_3\}$: on $\mathbf 1$ no
$\mathtt{Prepare}$ has been broadcast, so the vote's message is not in $\mathtt{sent}$ and
\rulename{Disabled} prunes; the substantive prefix is $P_1 P_2$, treated next, and $P_1 P_2 P_3$
reduces to it, treated last.
On $P_1 P_2$ we peel the loop $P_2$ at $p$, writing
$L_{<p}, L_{>p}$ for the iterations below and above, and commute the appended vote leftward across
the tail:
\[
\begin{aligned}
    P_1 P_2 \cdot \mathtt{Success}(p, m)
    &\inctop\; P_1\cdot L_{<p}\cdot\mathtt{body}(p)\cdot L_{>p}\cdot\mathtt{Success}(p, m) && ({\rulename{Unfold-For}})\\
    &\inctop\; P_1\cdot L_{<p}\cdot\mathtt{body}(p)\cdot\mathtt{Success}(p, m)\cdot L_{>p} && ({\rulename{Commute}})
\end{aligned}
\]
sound because every action of $L_{>p}$ concerns a participant $q\ne p$, hence is
$\theta$-independent of $\mathtt{Success}(p,m)$ (the vote/vote and vote/tally cells of $\theta$
require $q \ne p$).
Now the key step: under the entry condition $\mathtt{cond} \equiv \{v_p = p\}$ pinning the loop
variable to $p$, the concrete $\mathtt{Success}(p, m)$ equals the scenario's parametric vote term,
whose message selector computes exactly $\mathtt{Prepare}(\mathtt{c}, p)$; \rulename{Action-Equiv}
replaces it,
\[
    \cdots\cdot\mathtt{body}(p)\cdot\mathtt{Success}(p, m)\cdot L_{>p}
    \;\inctop\;
    \cdots\cdot\mathtt{body}(p)\cdot\mathtt{Success}(p)\cdot L_{>p}
    \qquad ({\rulename{Action-Equiv}}).
\]
It remains to absorb $\mathtt{Success}(p)$ into $\mathtt{body}(p)$.
Expanding $\mathtt{body}(p)$ into its two sub-phase-chains and their prefixes gives six branches.
The branch $\mathbf 1$ opening the success sub-phase matches the appended $\mathtt{Success}(p)$ by
\rulename{Embed} (it becomes the vote of that sub-phase); on every other branch $p$ has already
voted, so $\mathtt{hasVoted}$ disables the appended vote and \rulename{Disabled} prunes it.

The remaining prefix $P_1 P_2 P_3$ ends in a decision.
After $\mathtt{Commit}$ every participant has voted, so the appended vote is disabled and \rulename{Disabled} prunes.
$\mathtt{Abort}$ does not require every participant to have voted, so a vote may still follow it;
the $(\mathtt{Abort}, \mathtt{Vote})$ cell is \tT{}, and the vote does not change $\mathtt{c}.\mathtt{recvFail}$, which the guard of $\mathtt{Abort}$ reads.
Writing $A \triangleq \assm(\mathtt{c}.\mathtt{recvFail}) \cdot \mathtt{Abort}$ for the abort summand of $P_3$,
\rulename{Commute} thus slides the appended vote leftward past it:
\[
    P_1 P_2 \cdot A \cdot \mathtt{Success}(p, m)
    \;\inctop\;
    P_1 P_2 \cdot \mathtt{Success}(p, m) \cdot A
    \qquad ({\rulename{Commute}}).
\]
This returns the appended vote to the $P_1 P_2$ situation already treated, with the decision trailing unchanged, so this prefix rewrites back into $S$ as well.
The four prefixes together give $S\cdot\mathtt{Success}(p,m)\inctop S$.

\emph{Appending $\mathtt{RecvSuccess}(m)$.}
The tally is where the coordinator consumes a vote message.
Feasibility forces $m = \mathtt{Success}(p, \mathtt{c})$ for some $p$; write
$b \triangleq \mathtt{RecvSuccess}(m)$.
As with the vote, we peel $P_2$ at $p = m.\sender$ and commute $b$ leftward across $L_{>p}$, whose
actions concern participants $q \ne p$ and so are $\theta$-independent of $b$ (the tally/vote and
tally/tally cells of $\theta$ require $q \ne p$, resp.\ distinct senders):
\[
\begin{aligned}
    P_1 P_2 \cdot b
    &\inctop\; P_1\cdot L_{<p}\cdot\mathtt{body}(p)\cdot L_{>p}\cdot b && ({\rulename{Unfold-For}})\\
    &\inctop\; P_1\cdot L_{<p}\cdot\mathtt{body}(p)\cdot b\cdot L_{>p} && ({\rulename{Commute}})
\end{aligned}
\]
Under $\mathtt{cond}$, \rulename{Action-Equiv} replaces the concrete $b$ by the scenario's term
$\mathtt{RecvSuccess}(p)$, as both tally the message $\mathtt{Success}(p, \mathtt{c})$.
Absorbing into $\mathtt{body}(p)$: in the success sub-phase-chain,
the tally is enabled exactly on the one-phase prefix $\mathtt{Success}(p)$,
where it completes the two-phase prefix and is matched by \rulename{Embed}.
On the empty prefix $p$ has not voted, so the tallied message is not yet sent and \rulename{Disabled}
prunes; on the completed prefix $m\in g.\mathtt{recv}$ and \rulename{Disabled} prunes; in the failure
sub-phase-chain $p$ sent no $\mathtt{Success}$ and \rulename{Disabled} prunes.
The run rewrites back into $S$, giving $S\cdot\mathtt{RecvSuccess}(m)\inctop S$.

\emph{Appending $\mathtt{Commit}$.}
$\mathtt{Commit}$ is a coordinator action, $\theta$-independent of the loop $P_2$, so it aligns
with the decision phase $P_3$ rather than commuting inward.
Only the $P_1 P_2$ and $P_1 P_2 P_3$ prefixes admit a decision; on the latter $\mathtt{hasDecided}$
disables it, pruned by \rulename{Disabled}.
On $P_1 P_2$ we split on the coordinator's yes-set with \rulename{Cond-Intro}.
When it is full,
\[
\begin{aligned}
    P_1 P_2 \cdot \mathtt{Commit}
    &\inctop\; P_1 P_2 \cdot \assm(\forall q.\ q\in\mathtt{c}.\mathtt{recvSucc}) \cdot \mathtt{Commit} && ({\rulename{Cond-Intro}})\\
    &\inctop\; P_1 P_2 P_3 \;\inctop\; S && ({\rulename{Embed}}),
\end{aligned}
\]
matching the commit summand of $P_3$.
When it is not full, $\mathtt{Commit}$'s own guard fails and \rulename{Disabled} prunes.
$\mathtt{Abort}$ is symmetric, splitting on $\mathtt{c}.\mathtt{recvFail}$ against the abort summand of
$P_3$; note that a failure tally does \emph{not} commute past $\mathtt{Abort}$ (the $(\mathtt{RecvFailure}, \mathtt{Abort})$
cell of $\theta$ is \tF{}), so the decision is only ever aligned, never reordered
past the tally that enables it.

\myparagraph{Correctness}
By \lemref{correctness-reduction}, correctness is checked on the representatives.
The $P_2$ loop carries the invariant
\[
\begin{aligned}
    J \;\equiv\;&
    \neg\,\mathtt{c}.\mathtt{hasDecided}
    \,\wedge\,
    \neg\,\mathtt{c}.\mathtt{committed}
    \,\wedge\,
    (\forall q \in \mathtt{c}.\mathtt{recvSucc}.\ q.\mathtt{votedYes})\\
    &\wedge\,
    (\mathtt{c}.\mathtt{recvFail} \Rightarrow \exists q.\ q.\mathtt{hasVoted} \wedge \neg\, q.\mathtt{votedYes}),
\end{aligned}
\]
together with the facts that each sent vote message matches how its sender voted.
Each branch of $\mathtt{body}(p)$ preserves $J$ under $\mathrm{wp}$, since a participant votes at most once (its $\mathtt{hasVoted}$ guard).
Before the decision, both conjuncts of agreement hold vacuously.
On the commit summand of $P_3$ the test forces $\forall q.\ q \in \mathtt{c}.\mathtt{recvSucc}$, so by $J$ every participant voted yes.
On the abort summand the test forces $\mathtt{c}.\mathtt{recvFail}$, so by $J$ some participant voted no, and $\mathtt{committed}$ stays false.

\subsection{Decentralized Lock}
\seclabel{cs-dlock}

The decentralized lock guarantees mutual exclusion: a single lock token
is passed among the nodes, and at most one node ever holds it.
It is instructive because it needs \emph{no} commutativity at all --- the protocol's own
invariant already forces every run into one canonical shape --- so it isolates the feasibility
and correctness reasoning from the commutativity that dominated the previous examples.

\myparagraph{Protocol}
Following \secref{prelim}, \figref{dlock-protocol} gives the protocol in full: a single node type;
one grant message type; and two actions.
An abstract schedule $\mathtt{holder}(t)$ fixes the intended holder at epoch $t$; the proof
is parametric in it, covering every order of handoffs at once.
A holder of epoch $t$ releases by sending a grant stamped $t+1$ to $\mathtt{holder}(t+1)$,
who takes the lock by consuming it.
The freshness guard $m.t > i.\mathtt{epoch}$ prevents a stale grant from reacquiring the lock.
The safety property is \emph{mutual exclusion}: at most one node holds the lock at any time.

\begin{figure}[t]
\small
\hrule\smallskip
\begin{minipage}[t]{0.48\linewidth}
\raggedright
\textbf{Protocol: Decentralized Lock}

\smallskip
\textit{Node Type}
\begin{itemize}[leftmargin=1.2em,nosep]
    \item $\mathtt{LockNode}(i)$: a single node type.
\end{itemize}

\smallskip
\textit{Parameter}
\begin{itemize}[leftmargin=1.2em,nosep]
    \item $\mathtt{holder}(t)$: the intended holder at epoch $t$.
\end{itemize}

\smallskip
\textit{Local State} (of Node $i$)
\begin{itemize}[leftmargin=1.2em,nosep]
    \item $\mathtt{hasLock}: \mathsf{Bool}$.
    \item $\mathtt{epoch}: \mathbb{N}$, epoch of the last grant taken.
\end{itemize}

\smallskip
\textit{Global State}
\begin{itemize}[leftmargin=1.2em,nosep]
    \item $\mathtt{sent}, \mathtt{recv}$: message buffers.
\end{itemize}

\smallskip
\textit{Message}
\begin{itemize}[leftmargin=1.2em,nosep]
    \item $\mathtt{grant}(i, t)$: to node $i$, timestamp $t$.
\end{itemize}

\smallskip
\textit{Init}
\begin{itemize}[leftmargin=1.2em,nosep]
    \item $\mathtt{holder}(0).\mathtt{hasLock} = \mathsf{True}$, others $\mathsf{False}$;\ all $\mathtt{epoch} = 0$; buffers empty.
\end{itemize}
\end{minipage}
\hfill\vrule\hfill
\begin{minipage}[t]{0.48\linewidth}
\textit{Action}
\begin{itemize}[leftmargin=1.2em,nosep]
    \item $\mathtt{Release}$ by Node $i$:\\
    \hspace*{1em}\textit{require:}\ $i.\mathtt{hasLock}$\\
    \hspace*{1em}send $\mathtt{grant}(\mathtt{holder}(t), t)$ with $t = i.\mathtt{epoch} + 1$\\
    \hspace*{1em}$i.\mathtt{hasLock} \leftarrow \mathsf{False}$

    \smallskip
    \item $\mathtt{Take}$ by Node $i$ on a grant $m$:\\
    \hspace*{1em}\textit{require:}\ $m.\mathtt{receiver} = i \,\wedge\, m \in \mathtt{sent}$\\
    \hspace*{2em}$\wedge\, m \notin \mathtt{recv} \,\wedge\, m.t > i.\mathtt{epoch}$\\
    \hspace*{1em}add $m$ to $\mathtt{recv}$\\
    \hspace*{1em}$i.\mathtt{hasLock} \leftarrow \mathsf{True}$\\
    \hspace*{1em}$i.\mathtt{epoch} \leftarrow m.t$
\end{itemize}
\end{minipage}
\smallskip\hrule
\caption{Decentralized lock protocol, in the form of \defref{protocol}.}
\figlabel{dlock-protocol}
\end{figure}

\myparagraph{Scenario}
The intuitive run is a single relay: the lock starts with $\mathtt{holder}(0)$ and is passed
along the chain of scheduled holders, each releasing so the next may take.
The scenario $\mathtt{scenario}_{\mathtt{Lock}}$ records this handoff chain.
A single scenario variable $c$ tracks the current epoch, so that at any point along the
chain the current (or most recent) holder is $\mathtt{holder}(c)$.
The chain is built from one \emph{hop}, which relays the lock once: the current holder
$\mathtt{holder}(c)$ releases, the next scheduled holder $\mathtt{holder}(c+1)$ takes the grant just sent, and $c$ advances.
The scenario initializes $c$, iterates hops, and admits a trailing lone release:
\[
\begin{aligned}
    \mathtt{scenario}_{\mathtt{Lock}}
    &\triangleq (c \leftarrow 0) \cdot \mathtt{PhaseChain}[\,\mathtt{Hop}^{*},\ \mathtt{Release}(\mathtt{holder}(c))\,],\\
    \mathtt{Hop}
    &\triangleq \mathtt{Release}(\mathtt{holder}(c)) \cdot \mathtt{Take}(\mathtt{holder}(c+1),\, m_{c+1}) \cdot (c \leftarrow c+1),
\end{aligned}
\]
where $m_t \triangleq \mathtt{grant}(\mathtt{holder}(t), t)$ is the grant stamped $t$ addressed to
$\mathtt{holder}(t)$.
Unfolding the $\mathtt{PhaseChain}$, the scenario denotes exactly the
runs $(c\leftarrow 0)\cdot\mathtt{Hop}^{k}$, optionally followed by a $\mathtt{Release}$.
The trailing optional release covers the one intermediate state in which the previous holder has
released but no one has yet taken --- a state with no holder, still safe.
It carries no increment, which the coverage argument for $\mathtt{Take}$ below supplies.

The striking feature is that this representative is essentially the \emph{only} feasible shape.
Because at most one node holds the lock, two releases can never occur back to back --- the second
releaser would need a lock it does not hold --- so every feasible run already factors as an
alternation of releases and takes, exactly the $\mathtt{Hop}^{k}$ and $\mathtt{Hop}^{k}\cdot\mathtt{Release}$
that the scenario denotes.
No scheduling freedom is left for commutativity to absorb, and accordingly the semi-commutation
$\theta$ is \emph{empty} (\figref{dlock-theta}): no pair of actions is declared independent.
This makes the example a clean contrast with the previous two: the same method that leaned
heavily on commutativity there needs none here, because the single-holder invariant does the
canonicalizing that commutativity would otherwise perform.

\begin{figure}[t]
\centering
\begin{tabular}{r|cc}
\multicolumn{1}{r}{} & $\mathtt{Release}$ & $\mathtt{Take}$ \\
\hline
$\mathtt{Release}$ & \tF & \tF \\
$\mathtt{Take}$    & \tF & \tF \\
\end{tabular}
\caption{Commutativity relation $\theta$ for the decentralized lock: empty. Every cell is \tF{}
--- no pair commutes --- because the single-holder invariant already forces one canonical run
shape, leaving nothing for commutativity to absorb.}
\figlabel{dlock-theta}
\end{figure}

\myparagraph{Coverage}
By \lemref{closedness-reduction}, coverage reduces to $\mathtt{scenario}_{\mathtt{Lock}} \cdot a
\inctop \mathtt{scenario}_{\mathtt{Lock}}$ for the two actions.
We write $S \triangleq \mathtt{scenario}_{\mathtt{Lock}}$ and abbreviate its two denoted run
families as $S_k \triangleq (c\leftarrow 0)\cdot\mathtt{Hop}^{k}$ (the run after $k$ completed hops,
lock held by $\mathtt{holder}(c)$) and $S_k^{R} \triangleq S_k \cdot \mathtt{Release}(\mathtt{holder}(c))$
(the same run followed by the trailing release, lock in flight, no holder).
Since $\theta$ is empty, the commutativity rules never fire; every case is settled by the
structure rules and the feasibility rule \rulename{Disabled}, driven by the loop invariant of $\mathtt{Hop}^{*}$
that tracks the lock token along the chain:
\[
    J \;\equiv\;
    \bigl(\forall n.\ n.\mathtt{hasLock} \Rightarrow n = \mathtt{holder}(c)\bigr)
    \,\wedge\,
    \mathtt{holder}(c).\mathtt{epoch} = c
    \,\wedge\,
    g.\mathtt{sent} \subseteq g.\mathtt{recv}.
\]
Reading the conjuncts: any holder is the scheduled $\mathtt{holder}(c)$, which carries epoch $c$, and no grant is in flight.
So $J$ holds after $S_k$, and after $S_k^{R}$ the only grant in flight is $m_{c+1}$.

\emph{Appending $\mathtt{Release}$.}
The scenario is a $\mathtt{PhaseChain}$ of two phases; expanding it and distributing the appended action over the
prefixes (abbreviating the leading assignment $(c\leftarrow 0)$ by $\mathtt{I}_0$):
\[
\begin{aligned}
    S \cdot \mathtt{Release}
    &\equiv_\Sigma\; \underbrace{\mathtt{I}_0\cdot\mathtt{Release}}_{(\mathrm i)}
      \;+\; \underbrace{\mathtt{I}_0\cdot\mathtt{Hop}^{*}\cdot\mathtt{Release}}_{(\mathrm{ii})} \\
    &\quad\;+\; \underbrace{\mathtt{I}_0\cdot\mathtt{Hop}^{*}\cdot\mathtt{Release}(\mathtt{holder}(c))\cdot\mathtt{Release}}_{(\mathrm{iii})}
      && (\text{expand } \mathtt{PhaseChain};\ {\rulename{Distrib}})
\end{aligned}
\]
Summands $(\mathrm i)$ and $(\mathrm{ii})$ are the runs ending in a held lock ($S_0$ and $S_k$);
we treat $(\mathrm{ii})$, of which $(\mathrm i)$ is the $k=0$ instance.
The appended $\mathtt{Release}$ names a concrete releasing node, which we first identify with the
scenario's term $\mathtt{Release}(\mathtt{holder}(c))$: since $\mathtt{Release}$ is enabled only at
a lock-holder, and $J$'s holder-uniqueness clause forces that holder --- after any number of hops
--- to be $\mathtt{holder}(c)$, the two resolve equally, and
\rulename{Action-Equiv} replaces one by the other.
The trailing phase of $\mathtt{PhaseChain}[\mathtt{Hop}^{*}, \mathtt{Release}(\mathtt{holder}(c))]$
is exactly $\mathtt{Hop}^{*}\cdot\mathtt{Release}(\mathtt{holder}(c))$, so \rulename{Embed} places
each summand back into $S$ (as $S_k^{R}$):
\[
    \mathtt{I}_0\cdot\mathtt{Hop}^{*}\cdot\mathtt{Release}
    \;\inctop\;
    \mathtt{I}_0\cdot\mathtt{Hop}^{*}\cdot\mathtt{Release}(\mathtt{holder}(c))
    \;\inctop\; S
    \qquad ({\rulename{Action-Equiv}},\ {\rulename{Embed}}).
\]
Summand $(\mathrm{iii})$ ends in a release, so no node holds the lock; by $J$'s holder-uniqueness
clause the appended $\mathtt{Release}$ is disabled and \rulename{Disabled} rewrites
$\mathtt{Release}(\mathtt{holder}(c))\cdot\mathtt{Release}$ to $\mathbf 0$, which \rulename{Annih}
absorbs and \rulename{Embed} discards.
The three summands together give $S\cdot\mathtt{Release}\inctop S$.

\emph{Appending $\mathtt{Take}(k, m)$.}
The interesting content of this case is that the only scenario runs able to accept a take are
those with the lock \emph{in flight}, and that supplying the missing increment turns the
appended take into a fresh hop.
We again unfold to the three phase prefixes and consider where an appended $\mathtt{Take}$ can be
feasible.
On the two prefixes ending in a held lock ($S_0$ and $S_k$), no grant is in flight.
The one live prefix is $S_k^{R}$, where the trailing release has put the grant $m_{c+1}$
in flight.
Writing $\mathtt{I}_0$ for $(c\leftarrow 0)$ and $R_c$ for $\mathtt{Release}(\mathtt{holder}(c))$:
\[
\begin{aligned}
    S_k^{R}\cdot\mathtt{Take}(k, m)
    &\inctop\; \mathtt{I}_0\cdot\mathtt{Hop}^{*}\cdot R_c\cdot\mathtt{Take}(\mathtt{holder}(c+1), m_{c+1}) && ({\rulename{Action-Equiv}})\\
    &\inctop\; \mathtt{I}_0\cdot\mathtt{Hop}^{*}\cdot R_c\cdot\mathtt{Take}(\mathtt{holder}(c+1), m_{c+1})\cdot(c\leftarrow c+1) && ({\rulename{Assign-Intro}})\\
    &\equiv_\Sigma\; \mathtt{I}_0\cdot\mathtt{Hop}^{*}\cdot\mathtt{Hop}
      \;\inctop\; \mathtt{I}_0\cdot\mathtt{Hop}^{*} \;\inctop\; S && (\text{fold }\mathtt{Hop};\ {\rulename{Embed}})
\end{aligned}
\]
\rulename{Action-Equiv} applies because $m_{c+1}$ is the only grant in flight, and the increment
appended by \rulename{Assign-Intro} is trivially taint-safe, as nothing follows it;
the last three terms are then precisely one hop $\mathtt{Hop}$.
Unfolding the star $\mathtt{Hop}^{*} \equiv_\Sigma \mathbf 1 + \mathtt{Hop}^{*}\cdot\mathtt{Hop}$,
the run $\mathtt{Hop}^{*}\cdot\mathtt{Hop}$ is embedded by \rulename{Embed} into $\mathtt{Hop}^{*}$,
so it lands back in $S$, giving $S\cdot\mathtt{Take}\inctop S$.

\myparagraph{Correctness}
By \lemref{correctness-reduction}, correctness is checked on the representatives, and it turns on
the same invariant $J$ that drove coverage.
Its holder-uniqueness conjunct $\forall n.\ n.\mathtt{hasLock}\Rightarrow n=\mathtt{holder}(c)$
plainly implies mutual exclusion, since any two holders are both $\mathtt{holder}(c)$ and
hence equal; the remaining conjuncts carry holder-uniqueness along the run.
Computing $\mathrm{wp}$ of the hop body shows each hop preserves $J$:
$\mathtt{holder}(c)$, of epoch $c$, drops its hold and sends $m_{c+1}$, the sole in-flight grant;
$\mathtt{holder}(c+1)$ takes it, becoming the only holder with epoch $c+1$ and emptying the in-flight set;
and $c \leftarrow c+1$ re-establishes $J$.
The optional trailing $\mathtt{Release}$ lies outside the loop: it drops the last holder,
leaving no holder at all, so mutual exclusion holds there too.
Correctness thus rests entirely on holder-uniqueness, with no commutativity argument in sight.

\subsection{Multi-Signature Payment (All)}
\seclabel{cs-msa}

Our next example is a multi-signature payment protocol, after Ivy's \texttt{multisig-all}:
a user submits payment requests, each request is proposed by its validator to every signer, and the
validator pays only once \emph{all} authorised signers have signed --- or cancels the request once it
has expired.
Unlike the rounds of Paxos, requests are not globally ordered: they are issued, signed, paid, and
expired concurrently.
We present the high-level ideas; the full model and proof are in the mechanization.

\myparagraph{Protocol}
\figref{msa-protocol} gives the protocol: validators, signers, and a single environment node
$\mathtt{Env}$ that issues requests and lets them expire; five message types; and seven actions.
Requests are numbered by a counter $\mathtt{reqCnt}$, at most $R$ of them, and request $c$ is routed
to the validator $\mathtt{cur\_validator}(c)$, a parameter of the proof.
A validator handles one request at a time ($\mathtt{holding}$), and only moves to a request of larger
id ($\mathtt{activeRound} \le c$).
The safety property is \emph{payment integrity}: every paid request was signed by every signer, each
of them authorised, and every cancelled request had expired.

\begin{figure}[t]
\small
\hrule\smallskip
\begin{minipage}[t]{0.42\linewidth}
\raggedright
\textbf{Protocol: Multi-Signature Payment (All)}

\smallskip
\textit{Node Types}
\begin{itemize}[leftmargin=1.2em,nosep]
    \item $\mathtt{Validator}(v)$, $\mathtt{Signer}(s)$, a single $\mathtt{Env}$.
\end{itemize}

\smallskip
\textit{Parameters}
\begin{itemize}[leftmargin=1.2em,nosep]
    \item $\mathtt{cur\_validator}(c)$: the validator of request $c$.
    \item $\mathtt{sig\_auth}(s)$: $s$ is authorised.
    \item $R$: the number of requests.
\end{itemize}

\smallskip
\textit{Local State} (of Validator $v$)
\begin{itemize}[leftmargin=1.2em,nosep]
    \item $\mathtt{holding}: \mathsf{Bool}$;\ $\mathtt{activeRound}: \mathbb{N}$
    \item $\mathtt{receivedSigs}: \mathsf{Set}\,(\mathbb{N} \times \mathsf{Signer})$
    \item $\mathtt{paidSet}, \mathtt{cancelledSet}: \mathsf{Set}\ \mathbb{N}$
\end{itemize}

\smallskip
\textit{Global State}
\begin{itemize}[leftmargin=1.2em,nosep]
    \item $\mathtt{reqCnt}: \mathbb{N}$;\ $\mathtt{expired}: \mathsf{Set}\ \mathbb{N}$
    \item $\mathtt{sent}, \mathtt{recv}$: message buffers
\end{itemize}

\smallskip
\textit{Messages} (each carrying a request id $c$)
\begin{itemize}[leftmargin=1.2em,nosep]
    \item $\mathtt{Request}(\mathtt{Env}{\to}v)$, $\mathtt{Timer}(\mathtt{Env}{\to}\mathtt{Env})$,
    $\mathtt{Proposal}(v{\to}s)$, $\mathtt{Sig}(s{\to}v)$, $\mathtt{Pending}(v{\to}v)$.
\end{itemize}

\smallskip
\textit{Init}
\begin{itemize}[leftmargin=1.2em,nosep]
    \item all $\mathtt{holding}$; $\mathtt{activeRound} = \mathtt{reqCnt} = 0$; sets and buffers empty.
\end{itemize}
\end{minipage}
\hfill\vrule\hfill
\begin{minipage}[t]{0.54\linewidth}
\raggedright
\textit{Action} (every receipt consumes its message)
\begin{itemize}[leftmargin=1.2em,nosep]
    \item $\mathtt{Request}$ by $\mathtt{Env}$:\
    \textit{require}\ $\mathtt{reqCnt} < R$;\\
    \hspace*{1em}send $\mathtt{Request}$ to $\mathtt{cur\_validator}(\mathtt{reqCnt})$ and a $\mathtt{Timer}$,\\
    \hspace*{1em}both with $c = \mathtt{reqCnt}$;\ $\mathtt{reqCnt} \mathrel{+}= 1$
    \item $\mathtt{Propose}$ by $v$ on $\mathtt{Request}(c)$:\\
    \hspace*{1em}\textit{require}\ $v.\mathtt{holding} \wedge v.\mathtt{activeRound} \le c$;\\
    \hspace*{1em}broadcast $\mathtt{Proposal}(c)$;\ send $\mathtt{Pending}(c)$ to $v$;\\
    \hspace*{1em}$v.\mathtt{holding} \leftarrow \mathsf{False}$;\ $v.\mathtt{activeRound} \leftarrow c$
    \item $\mathtt{Sign}$ by $s$ on $\mathtt{Proposal}(c)$ from $v$:\\
    \hspace*{1em}\textit{require}\ $\mathtt{sig\_auth}(s) \wedge c \notin \mathtt{expired}$;\ send $\mathtt{Sig}(c)$ to $v$
    \item $\mathtt{Receive}$ by $v$ on $\mathtt{Sig}(c)$ from $s$:\\
    \hspace*{1em}\textit{require}\ $\neg v.\mathtt{holding} \wedge v.\mathtt{activeRound} = c \wedge c \notin \mathtt{expired}$;\\
    \hspace*{1em}$v.\mathtt{receivedSigs} \mathrel{+}= (c, s)$
    \item $\mathtt{Pay}$ by $v$ on $\mathtt{Pending}(c)$:\\
    \hspace*{1em}\textit{require}\ $v.\mathtt{activeRound} = c \wedge c \notin \mathtt{expired}$\\
    \hspace*{2em}$\wedge\ \forall s.\ (c, s) \in v.\mathtt{receivedSigs}$;\\
    \hspace*{1em}$v.\mathtt{paidSet} \mathrel{+}= c$;\ $v.\mathtt{holding} \leftarrow \mathsf{True}$
    \item $\mathtt{Cancel}$ by $v$ on $\mathtt{Pending}(c)$:\\
    \hspace*{1em}\textit{require}\ $v.\mathtt{activeRound} = c \wedge c \in \mathtt{expired}$;\\
    \hspace*{1em}$v.\mathtt{cancelledSet} \mathrel{+}= c$;\ $v.\mathtt{holding} \leftarrow \mathsf{True}$
    \item $\mathtt{Expire}$ by $\mathtt{Env}$ on $\mathtt{Timer}(c)$:\ $\mathtt{expired} \mathrel{+}= c$
\end{itemize}
\end{minipage}
\smallskip\hrule
\caption{Multi-signature payment protocol (All), in the form of \defref{protocol}.}
\figlabel{msa-protocol}
\end{figure}

\myparagraph{Scenario}
The developer's picture of a run is one causal history per request: the request is proposed, signed
and collected signer by signer, and paid; independently, it may expire and then be cancelled.
The scenario issues every request up front and then runs the requests in ascending id order,
each as two phase chains:
\[
\begin{aligned}
    \mathtt{scenario}_{\mathtt{MSA}} &\triangleq \mathtt{Request}^{*} \cdot
        \mathtt{For}\bigl(r \in [0, R)\bigr)\ \mathtt{Happy}(r) \cdot \mathtt{Expiry}(r),\\
    \mathtt{Happy}(r) &\triangleq \mathtt{PhaseChain}\bigl[\,\mathtt{Propose}(r),\
        \mathtt{For}(s)\ \mathtt{PhaseChain}[\mathtt{Sign}(s, r), \mathtt{Receive}(s, r)],\
        \mathtt{Pay}(r)\,\bigr],\\
    \mathtt{Expiry}(r) &\triangleq \mathtt{PhaseChain}\bigl[\,\mathtt{Expire}(r),\ \mathtt{Cancel}(r)\,\bigr],
\end{aligned}
\]
where every term of request $r$ acts at $\mathtt{cur\_validator}(r)$ on request $r$'s message.
The two chains are sequenced rather than alternatives: an expiry may follow a payment, and it
disables every remaining action of the happy chain.

\myparagraph{Semi-commutativity}
The relation $\theta$ follows three principles, each justifying a reordering into the representative:
\begin{enumerate*}
    \item a $\mathtt{Request}$ commutes leftward past every action, as whatever ran before it concerns only older requests, so the requests can be issued up front;
    \item actions of distinct requests commute, except an expiry and the actions of its request that it disables; and
    \item at one validator, a larger request is proposed only after the smaller one is closed, after which $\mathtt{activeRound}$ disables every action of the smaller one, so the conditions stated on request ids hold vacuously there.
\end{enumerate*}

\myparagraph{Coverage}
Coverage follows the pattern of Paxos.
An appended $\mathtt{Request}$ commutes to the front and is absorbed by $\mathtt{Request}^{*}$.
Any other appended action belongs to some request $c < R$: we unfold the $\mathtt{For}$ at $c$, commute the
action leftward past the requests of larger id, and absorb it into the matching phase of
$\mathtt{Happy}(c)$ or $\mathtt{Expiry}(c)$, pruning the other prefixes with \rulename{Disabled}.
The side conditions rest on routing and bound invariants: each message travels between its request's
validator and the party it names, and carries an id below $R$.

\myparagraph{Correctness}
The loop invariant is the safety property together with $\mathtt{recordedAuth}$: every recorded
signature comes from an authorised signer.
Each step preserves both by its own guard or by that of the action just before it in the scenario:
$\mathtt{Pay}$ records a payment only when every signature is recorded; $\mathtt{Receive}$ records a
signature right after the $\mathtt{Sign}$ whose guard authorised its signer; and $\mathtt{Cancel}$
cancels only an expired request.
Notably, $\mathtt{recordedAuth}$ is preserved along the scenario but is not an inductive invariant of
the protocol in isolation; the scenario's order is what keeps it.

\subsection{Multi-Signature Payment (Majority)}
\seclabel{cs-msm}

The majority variant, after Ivy's \texttt{multisig-majority}, differs from \secref{cs-msa} in a single
place: a validator pays once a strict \emph{majority} of the $n_s$ signers has signed, rather than all
of them.
Writing $\mathtt{maj}_v(c) \triangleq 2\cdot\bigl|\{\, s : (c, s) \in v.\mathtt{receivedSigs} \,\}\bigr| > n_s$,
the signature conjunct of the $\mathtt{Pay}$ guard in \figref{msa-protocol} becomes $\mathtt{maj}_v(c)$.
The safety property weakens correspondingly: every paid request was signed by a strict majority of
signers, every recorded signer of it is authorised, and every cancelled request had expired.

The scenario, the semi-commutation, and the coverage proof are those of \secref{cs-msa}, unchanged.
The signer loop still visits every signer, and a run that pays on a majority is covered by the
prefix in which the remaining signers' phase chains are empty.
Correctness is the same argument, with $\mathtt{maj}_v(c)$ in place of the all-signed conjunct,
established at the payment step directly by the guard.
The two variants thus form a clean pair: the same scenario and coverage proof serve both, and only
the correctness check distinguishes them, illustrating the property-independence of coverage
(\remref{coverage-independent}).

\end{document}